\def\maxwidth{ %
  \ifdim\Gin@nat@width>\linewidth
    \linewidth
  \else
    \Gin@nat@width
  \fi
}
\theoremstyle{definition}
\newtheorem{assumption}{Assumption}
\newtheorem{theorem}{Theorem}
\newtheorem{lemma}{Lemma}
\newtheorem{corollary}{Corollary}
\newcommand{\R}{\ensuremath{\mathbb{R}}}
\newcommand{\bbone}{\ensuremath{\mathbbm{1}}}
\newcommand{\E}{\ensuremath{\mathbb{E}}}
\newcommand{\Var}{\text{Var}}
\newcommand{\argmax}{\text{argmax}}
\newcommand{\argmin}{\text{argmin}}
\def\super{\textsuperscript}
\newcommand{\<}{\langle}
\renewcommand{\>}{\rangle}
\newcommand\pderiv[2]{\frac{\partial #1}{\partial #2}}
\def\b1{\boldsymbol{1}}
\def\spacingset#1{\renewcommand{\baselinestretch}%
{#1}\small\normalsize} \spacingset{1}
\definecolor{RED}{RGB}{255,0,0}
\begin{document}

\newcommand{\blind}{0}

\newcommand{\tit}{\LARGE Partial identification via conditional linear programs: estimation and policy learning}

\if0\blind
{\title{\tit
  }
  \author{Eli Ben-Michael\thanks{Assistant Professor, Department of Statistics \& Data Science and Heinz College of Information Systems \& Public Policy, Carnegie Mellon University. 4800 Forbes Avenue, Hamburg Hall, Pittsburgh PA 15213.  Email: \href{mailto:ebenmichael@cmu.edu}{ebenmichael@cmu.edu} URL:
  \href{https://ebenmichael.github.io}{ebenmichael.github.io}.\\
  We thank Erin Gabriel, Kosuke Imai, Zhichao Jiang, Edward Kennedy, Lihua Lei, Alexander Levis, and participants at the 2025 American Causal Inference Conference for helpful comments and discussions.
  }}

\date{\today}

\maketitle
\thispagestyle{empty}
}\fi

\if1\blind
\title{\bf \tit}
\maketitle
\thispagestyle{empty}
\fi

\pagenumbering{gobble}

\begin{abstract}

Many important quantities of interest are only partially identified from observable data: the data can limit them to a set of plausible values, but not uniquely determine them. This paper develops a unified framework for covariate-assisted estimation, inference, and decision making in partial identification problems where the parameter of interest satisfies a series of linear constraints, conditional on covariates. In such settings, bounds on the parameter can be written as expectations of solutions to conditional linear programs that optimize a linear function subject to linear constraints, where both the objective function and the constraints may depend on covariates and need to be estimated from data. Examples include estimands involving the joint distributions of potential outcomes, policy learning with inequality-aware value functions, and instrumental variable settings. We propose two de-biased estimators for bounds defined by conditional linear programs. The first directly solves the conditional linear programs with plugin estimates and uses output from standard LP solvers to de-bias the plugin estimate, avoiding the need for computationally demanding vertex enumeration of all possible solutions for symbolic bounds. The second uses entropic regularization to create smooth approximations to the conditional linear programs, trading a small amount of approximation error for improved estimation and computational efficiency. We establish conditions for asymptotic normality of both estimators, show that both estimators are robust to first-order errors in estimating the conditional constraints and objectives, and construct Wald-type confidence intervals for the partially identified parameters. These results also extend to policy learning problems where the value of a decision policy is only partially identified. We apply our methods to a study on the effects of Medicaid enrollment.

\end{abstract}
  
\clearpage
\pagenumbering{arabic}
% \spacingset{1.75}
\onehalfspacing
\section{Introduction}

Important estimands are oftentimes only partially identified from the observable data.
Even an infinite amount of data cannot uniquely determine the estimand of interest, and instead can only limit it to be within a set of plausible values.
For example, in causal inference settings it may not be possible to point-identify average treatment effects if treatment assignment is confounded; even with auxiliary information, such as an instrumental variable, the average treatment effect is not necessarily point identified.
There is also often interest in complex counterfactual estimands that go beyond average treatment effects, for instance the proportion of individuals who benefit from treatment or characteristics of the spread of treatment effects across the population.

This paper develops a unified framework for covariate-assisted estimation, inference, and decision making in a broad class of partial identification problems where the parameter of interest satisfies a series of linear constraints.
In such settings, bounds for the parameter of interest are solutions to conditional linear programs (LPs) that optimize a linear function of unknown parameters subject to linear constraints, where both the objective function and the constraints may depend on additional covariates and need to be estimated.
This encompasses a wide range of settings; as illustrative examples we consider estimands related to the joint distribution of potential outcomes, policy learning with inequality-aware value functions, and instrumental variable settings (Section~\ref{sec:partial_id}).

We develop two de-biased estimators of bounds on the parameter of interest that are robust to errors in estimating the nuisance functions that define the objective and constraints (Section~\ref{sec:method}).
The first approach directly solves the optimization problems using plugin estimates of the conditional constraints and objectives and leverages properties of optimal solutions of LPs to de-bias these estimates using the output of standard LP solvers. In contrast to prior work, this estimator does not require derivation of analytical solutions or enumeration of all possible solutions, which can be computationally intractable for large problems.
The second approach leverages recent advances in the theory of regularized linear programs to create smooth approximations to the conditional LPs using entropic regularization and estimates the smoothed bounds in a de-biased manner.
This approach trades off a small amount of approximation error for gains in estimation and computational efficiency.
In contrast to alternative smoothing schemes, entropic regularization introduces approximation error that rapidly decays with the choice of hyper-parameter, and, crucially, does not require explicit enumeration of all feasible solutions.

For both estimators, we establish the conditions for asymptotic normality and construct standard, Wald-type approximate confidence intervals for the partially identified parameter.
These de-biased estimators are robust to first-order errors in estimating the covariate-conditional constraints and objective vectors, allowing for valid inference even when these nuisance functions are estimated using flexible machine learning methods.
We also extend these results to decision-making problems where the goal is to estimate an optimal decision policy that optimizes the lower or upper bound of a partially identified quantity (Section~\ref{sec:policy_learning}).
We demonstrate these estimators by applying them to a study of the effects of Medicaid enrollment on health outcomes (Section~\ref{sec:ohie}).

\paragraph{Related literature.}

This paper builds on three key strands of the literature on identification and estimation of partially identified parameters.
One line of work focuses on derivation of analytic or symbolic bounds \citep[e.g.][]{balke_bounds_1997, sachs_general_2023, gabriel_sharp_2024}.
This work typically characterizes sharp bounds on partially identified parameters as solutions to linear programs,
then enumerate all feasible solutions (i.e. all vertices of the feasible set) to obtain the bounds.
Because the number of solutions can grow combinatorial with the size of the problem, this approach can be computationally intractable for larger problems. 
The approaches we develop do not require vertex enumeration and so can computationally scale to larger problems.
\citet{duarte_automated_2024} extend these ideas to polynomial programs
and use numerical approaches to bound the parameter of interest.
Our approach similarly arrives at the bounds computationally, rather than analytically, but for a more limited class of problems for which we are able to provide de-biased estimators that allow us to construct straightforward Wald-type asymptotic confidence intervals.

The second line of work focuses on de-biased, covariate-assisted estimation and inference for partially identified parameters.
Some of this work develops de-biased estimators of aggregations of conditional versions of analytic bounds on the parameter of interest \citep[e.g.][]{Kallus2022_harm,levis_assisted_2023, semenova_aggregated_2024, benmichael_ai_2024}, or 
targets smooth approximations to these bounds \citep[e.g.][]{levis_assisted_2023, lanners_data_2025}.
Again, the methods we develop here differ in that they do not require analytical bounds, nor do they require enumeration of all possible solutions.
\citet{ji_model-agnostic_2024} consider estimating covariate-assisted bounds on estimands involving the joint distribution of potential outcomes \citep[see also][]{jiang_semiparametric_2025}.
Unlike the approaches we develop here, this work allows for continuous potential outcomes; however, we consider a general setup that allows for multi-level treatments, instrumental variables, and other forms of partial identification problems.

A third line of work focuses on learning optimal decision rules in settings where the value of a decision policy is only partially identified. This work builds on ideas from decision theory \citep[e.g.][]{Manski2005_partial}, and considers settings where there are unmeasured confounding variables, treatment-control overlap violations, or the value function involves counterfactual quantities \citep[e.g.][]{Kallus2021,Pu2021,Han2021_partial, Cui2021_partial,benmichael2021_safe,Zhang2022_safe,benmichael_asymmetric_2024,DAdamo2023}.
The approach to policy learning here builds directly on this work and generalizes it to a broad class of problems.

\section{Partial identification with conditional linear programs}
\label{sec:partial_id}
We begin by describing the generic estimand and associated upper and lower bounds defined by conditional linear programs.
We then discuss several specific examples of such estimands related to partially identified parameters in causal inference and policy learning settings.

\subsection{Conditional linear programs}
\label{sec:clp}

Consider a set of covariates $X \in \mathcal{X}$. Our goal will be to estimate a linear functional of some true non-negative function $\bm{p}^\ast:\mathcal{X} \to \R^K_+$ defined by a function $\bm{c}:\mathcal{X} \to \R^K$, i.e.
\begin{equation}
  \label{eq:estimand}
  \theta \equiv \E[\<\bm{c}(X), \bm{p}^\ast(X)\>] = \E\left[\sum_{k=1}^K c_k(X) p_k^\ast(X)\right].
\end{equation}
Our focus will be on settings where the objective vector $\bm{c}(x)$ is not necessarily known but is point-identifiable from observable data while $\bm{p}^\ast(x)$ is not necessarily point-identified for all $x \in \mathcal{X}$. Rather, we can only observe $J \leq K$ linear functions of $\bm{p}^\ast(x)$, i.e. $A \bm{p}^\ast(x) = \bm{b}(x)$ for some known matrix $A \in \R^{J \times K}$ and some unknown function $\bm{b}:\mathcal{X} \to \R^J$ that is identifiable from observable data.
For each value of the covariates $x \in \mathcal{X}$, these linear constraints along with non-negativity imply that the true function $\bm{p}^\ast(x)$ lies in a convex set $\mathcal{P}(x) = \{\bm{p} \in \R^K_+ \mid A \bm{p} = \bm{b}(x)\}$, which we will refer to as the feasible set. Note that this framework also allows for inequality constraints via standard LP transformations.

In many cases, the feasible set $\mathcal{P}(x)$ contains more than one element (e.g. because there are more equations than unknowns), and $\bm{p}^\ast(x)$ is not point-identified.
This means that the parameter of interest $\theta$ may also not be point identified and so it may not be possible to estimate it precisely. We will instead focus on finding  lower and upper bounds $\theta_L$ and $\theta_U$ such that $\theta \in [\theta_L, \theta_U]$.
To construct these, we define $X$-conditional linear programs that minimize and maximize the linear function of $\bm{p}^\ast(x)$ over the feasible set $\mathcal{P}(x)$:
\begin{equation}
  \label{eq:partial_id}
  \theta_L(x) = \min_{\bm{p} \in \mathcal{P}(x)} \<\bm{c}(x), \bm{p}\> \qquad \theta_U(x) = \max_{\bm{p} \in \mathcal{P}(x)} \<\bm{c}(x), \bm{p}\>.
\end{equation}
The overall bounds are then the expected solutions to the conditional linear programs, i.e. $\theta_L = \E[\theta_L(X)]$ and $\theta_U = \E[\theta_U(X)]$.
In order for these bounds to be meaningful, we will assume that the the feasible set $\mathcal{P}(x)$ is both not empty and bounded for all $x \in \mathcal{X}$ and that $\|\bm{b}\|_\infty < \infty$, where for a vector-valued function $\bm{f}(x)$, we use denote $\|\bm{f}\|_\infty = \max_i \sup_{x} |f_i(x)|$. To simplify the exposition, we will also assume that the constraint matrix $A$ is sufficiently pre-processed to remove redundant constraints and so has full row rank equal to $J$.
Finally, note that we have not ruled out that $\theta$ is point identified even though the function $\bm{p}^\ast(x)$ is not point identified. In such cases the lower and upper bounds match, i.e. $\theta_L = \theta_U = \theta$.
As an additional bit of notation, for a scalar-valued function $f: \mathcal{X} \to \R$ we will denote $\|f\|_2^2 = \E[f(X)^2]$, and for a vector-valued function $\bm{f}:\mathcal{X} \to \R^p$ we will denote $\|\bm{f}\|_2^2 = \sum_{i=1}^p \E[f_i(X)^2] = \sum_{i=1}^p \|f_i\|_2^2$.

\subsection{Examples}
\label{sec:examples}
Next, we describe several specific examples of estimands that can be expressed in this way.
\paragraph{Functions of the joint distributions of potential outcomes.}
Consider a setting where we are interested in understanding the causal relationship between a discrete treatment variable $D \in \{0,\ldots, M-1\}$ and a discrete  outcome variable  $Y \in \mathcal{Y}$ where $|\mathcal{Y}| = L$. For example, $D$ may represent a multi-level treatment or different doses.
We posit the existence of $M$ potential outcomes $(Y(0),\ldots,Y(M-1))$, one for each treatment level.
In this setting, our goal is to estimate some moment of the joint distribution of potential outcomes, i.e. $\theta = \E[f(Y(0), \ldots, Y(M-1); X)]$ for some function $f:\mathcal{Y}^M \times \mathcal{X} \to \R$. We observe iid data $(X, D, \{\bbone\{D = d\} Y(d) \}_{d = 0}^{M-1})$ from some distribution $P$.

Using the tower property, we can write the estimand as 
\begin{equation}
  \label{eq:joint_po}
    \theta = E\left[\sum_{y_0,\ldots,y_{M-1} \in \mathcal{Y}} f(y_0, \ldots, y_{M-1}; X) P(Y(0) = y_0, \ldots, Y(M-1) = y_{M-1} \mid X) \right].
\end{equation}
In this case the unknown function $\bm{p}^\ast(x)$ corresponds to the joint distribution of the potential outcomes given that the covariates $X = x$, and the function $\bm{c}(x)$ is a vector collecting the function values $f(y_0, \ldots, y_{M-1}; x)$.
For example, $f(y_0, \ldots, y_{M-1}; x) = y_d - y_d'$ would correspond to the average treatment effect of $d$ versus $d'$---which is identified---and $f(y_0, \ldots, y_{M-1}; x) = \max_{d} y_d$ would correspond to average outcome under an oracle treatment assignment rule that maximizes the potential outcome---this is not necessarily identified.
Another unidentified parameter is the proportion of individuals who are not assigned an optimal treatment, i.e. $P(Y \neq \max_{d} Y(d))$.
This corresponds to choosing $f(y_0, \ldots, y_{M-1}; x) = \sum_{d'}P(D = d'\mid X = x)\bbone\{y_{d'} \neq \max_{d} y_d\}$.
Note that if the propensity score $P(D = d' \mid X = x)$ is unknown, then the objective vector $\bm{c}(x)$ is also unknown but can be estimated.

Under the standard strong ignorability assumption that there is treatment overlap and the potential outcomes are independent of the treatment given the covariates \citep{Rosenbaum1983}, the marginal distribution of each potential outcome given the covariates is identified as $P(Y(d) = y_\ell \mid X = x) = P(Y = y_\ell \mid D = d, X = x)$ for all $d \in \{0,\ldots,M-1\}$ and $\ell \in \mathcal{Y}$. This gives a set of marginal constraints on the conditional joint distribution of potential outcomes:
\[
  \sum_{\substack{y_0,\ldots,y_{d - 1}, \\ y_{d + 1}, \ldots, y_{M-1} \in \mathcal{Y}}} P(Y(0) = y_0, \ldots, Y(d) = y_\ell,\ldots Y(M - 1) = y_{M-1} \mid X = x) = P(Y = y_\ell \mid D = d, X = x).
  \]
Since the probabilities sum to one,
each margin gives $L - 1$ constraints. There is also an overall sum-to-one constraint on the joint distribution. This leads to a total of $J = (L-1)M + 1$ constraints, on $K = L^M$ variables (the number elements in the joint distribution) which can be collected into an integral constraint matrix $A \in \{0,1\}^{J \times K}$. The vector corresponding to the right hand side of the constraints, $\bm{b}(x)$, consists of the marginal probability distributions for each potential outcome, as well as 1 for the sum-to-one constraint.

\paragraph{Policy learning with collective utility functions.}
A special case of the above is when we have a binary treatment $D \in \{0,1\}$ and we want to learn a decision policy $\pi:\mathcal{X} \to [0,1]$ that assigns $D=1$ with probability $\pi(X)$ and achieves some notion of optimality based on a function $u:\mathcal{Y} \to \R^+$ that assigns a utility to an outcome $y$ of $u(y)$.
We can write the expected utility for an individual under decision policy $\pi$ as $\pi(X) \times u(Y(1)) + (1 - \pi(X)) \times u(Y(0))$.

To choose a policy $\pi$, we can use \emph{power-law collective utility functions} that aggregate the individual utilities into a single number \citep{moulin_axioms_1988}:
\begin{equation}
  \label{eq:boxcox_value}
  \begin{aligned}
    V^\lambda(\pi) & \equiv \E\left[\frac{1}{\lambda}\left(\left\{U(Y(0)) + \pi(X) (U(Y(1)) - U(Y(0)))\right\}^\lambda - 1\right)\right]. 
  \end{aligned}
\end{equation}
Here the value of the power-law parameter $\lambda$ determines the level of emphasis placed on individuals with lower or higher utilities under $\pi$. When $\lambda = 1$, the value function is simply the expected individual utility under $\pi$: $V^1(\pi) = \E[U(Y(0)) + \pi(X) (U(Y(1)) - U(Y(0)))]$.
Values of $\lambda < 1$ will lead to inequality-averse value functions.
When $\lambda = 0$, the value function is the logarithm of the geometric mean of the individual utilities: $V^0(\pi) = \E[\log(U(Y(0)) + \pi(X) (U(Y(1)) - U(Y(0)))]$. As $\lambda \to -\infty$, $V^\lambda(\pi)$ goes towards prioritizing only the individuals with the smallest utilities under $\pi$; choosing $\pi$ to maximize this results in the \emph{egalitarian rule} that maximizes the minimum utility.
Finally, values of $\lambda > 1$ will lead to inequality-seeking value functions.
Note that when $\lambda \neq 1$, the power-law value function is not point-identifiable from the observed data.

To create a notion of optimality, we can compare the value under the policy $\pi$ to the value under an infeasible oracle policy that has access to the potential outcomes and assigns the treatment that maximizes the individual utility, i.e. $\bbone\{u(Y(1)) \geq u(Y(0))\}$.
The \emph{regret} of policy $\pi$ relative to the oracle is thus:
\[
  R^\lambda(\pi)  = \frac{1}{\lambda}\E\left[\sum_{y_0, y_1 \in \mathcal{Y}} \left(\max\{u(y_0), u(y_1)\} ^\lambda - \left\{u(y_0) + \pi(X) (u(y_1) - u(y_0))\right\}^\lambda\right) P(Y(0) = y_0, Y(1) = y_1 \mid X)\right].
\]
This is a special case of Equation~\eqref{eq:joint_po} with $M = 2$ and $f(y_0, y_1; x) = \frac{1}{\lambda}\max\{u(y_0), u(y_1)\} ^\lambda - \frac{1}{\lambda}\left(u(y_0) + \pi(x) (u(y_1) - u(y_0))\right)^\lambda$.
We will discuss finding optimal policies $\pi$ that minimize the maximum regret or maximize the minimum value in Section~\ref{sec:policy_learning}.

\paragraph{Instrumental variables.}
Now consider the case of a binary treatment $D$ and a binary instrumental variable $Z \in \{0,1\}$ that affects takeup of the treatment, where the outcome space $\mathcal{Y}$ still contains $L$ distinct levels (this discussion generalizes to non-binary treatments and instruments). The instrument leads to two potential treatments $D(0)$ and $D(1)$. Under an exclusion restriction that the instrument does not have a direct effect on the outcome variable, we then have two potential outcomes $Y(0)$ and $Y(1)$.
In this case, we are interested in estimating a property of the joint distribution of potential treatments and potential outcomes $\theta = \E\left[f(Y(0), Y(1), D(0), D(1))\right]$ for some function $f:\mathcal{Y}^2\times \{0,1\}^2 \to \R$, 
The observable data consists of iid samples of $(X, Z, (1-Z) D(0), Z D(1), (1-D(Z)) Y(0), D(Z) Y(1))$.

We can write the estimand as
\[
\begin{aligned}
  \theta  = \E\left[\sum_{y_0,y_1\in \mathcal{Y}}\sum_{d_0,d_1=0}^1 f(y_0, y_1, d_0, d_1; X) P(Y(0) = y_0, Y(1) = y_1, D(0) = d_0, D(1) = d_1 \mid X) \right].
\end{aligned}
\]
Now the unknown function $\bm{p}^\ast(x)$ corresponds to the joint distribution of the potential outcomes and potential treatments given the covariates $X = x$, and the function $\bm{c}(x)$ is a vector that collects the values of the function $ f(y_0, y_1, d_0, d_1; x) $.
For example, $f(y_0, y_1, d_0, d_1; x) = y_1 - y_0$ would again correspond to the average treatment effect, while $f(y_0, y_1, d_0, d_1; x) = \bbone\{d_0 < d_1\}$ would correspond to the proportion of compliers.
Another example is the proportion of individuals who are not assigned an optimal treatment under the treatment distribution induced by a particular level of the instrument: $P(Y(D(z)) \neq \max_{d} Y(d))$. This corresponds to choosing $f(y_0, y_1, d_0, d_1; x) = \sum_{d'} P(D = d' \mid Z = z, X = x)\bbone\{y_{d'} \neq \max_{d} y_d\}$.

Under ignorable assignment of the instrument, the joint distribution of the observed treatment and outcome satisfies
\[
  P(Y = y, D = d \mid Z = z, X) = \sum_{y_{1-d} \in \mathcal{Y}}\sum_{d_{1-z}=0}^1 P(Y(d) = y, Y(1-d) = y_{1-d}, D(z) = d, D(1-z) = d_{1-z} \mid X).
\]
Accounting for the sum-to-one constraints gives $J = 4(L-1) + 1$ constraints with $K=4L^2$ decision variables.
Here the constraint matrix $A \in \{0,1\}^{J \times K}$ encodes the marginal constraints and the vector $\bm{b}(x)$ encodes the conditional joint distribution of the observed outcome and treatment variables along with the sum to one constraint.

\section{Estimation}
\label{sec:method}
We now turn to estimating the lower and upper bounds $\theta_L$ and $\theta_U$.
Throughout, we will assume that we have iid samples of random variables $O$ that include the covariates $X$ as well as variables that give information about the constraint and objective vectors, respectively.
Our principle assumption is that we can estimate the constraint vector $\bm{b}(x)$ and the objective vector $\bm{c}(x)$ in a de-biased manner using the information contained in the observed data $O$.

\begin{assumption}
  \label{a:riesz}
  For each constraint $j=1,\ldots,J$ there exists a de-biasing function $\varphi^{(b)}_{j}(O; \bm{b})$
  such that $\E[\varphi^{(b)}_{j}(O; b_j) \mid X] = 0$, $\E[\varphi_{j}^{(b)}(O; b_j)^2] < \infty$, and for any bounded function $\bar{\bm{b}}:\mathcal{X} \to \R^J$, $\E[\bar{b}_j(X) + \varphi_{j}^{(b)}(O; \bar{b}_j) \mid X = x] = b_j(x)$ for all $x \in \mathcal{X}$.
  Similarly, for each decision variable $k=1,\ldots,K$ there exists a de-biasing function $\varphi_{k}^{(c)}(O; c_k)$ such that $\E[\varphi^{(c)}_{k}(O; c_k) \mid X] = 0$, $\E[\varphi_{k}^{(c)}(O; c_k)^2] < \infty$, and for any bounded function $\bar{\bm{c}}:\mathcal{X} \to \R^K$, $\E[\bar{c}_k(X) + \varphi_{k}^{(c)}(O; \bar{c}_k) \mid X = x] = c_k(x)$ for all $x \in \mathcal{X}$.
\end{assumption}

This structure allows us to estimate the lower and upper bounds in a de-biased manner that will be robust to errors in estimating the constraint and objective vectors.
We will assume that we have estimates of the constraint and objective vectors $\hat{\bm{b}}(\cdot)$ and $\hat{\bm{c}}(\cdot)$, along with estimates of the de-biasing functions $\hat{\bm{\varphi}}^{(b)}(\cdot)$ and $\hat{\bm{\varphi}}^{(c)}(\cdot)$, that are fit on a separate sample. When it is clear from context we will use
$\varphi_j^{(b)}(O) \equiv \varphi_j^{(b)}(O, b_j)$ and $\varphi_k^{(c)}(O) \equiv \varphi_k^{(c)}(O, c_k)$ to denote the true de-biasing function using the true constraint or objective functions, and
$\hat{\varphi}_j^{(b)}(O) \equiv \hat{\varphi}_j^{(b)}(O, \hat{b}_j)$ and $\hat{\varphi}_k^{(c)}(O) \equiv \hat{\varphi}_k^{(c)}(O, \hat{c}_k)$ to denote the estimated de-biasing function using the estimated constraint or objective function.
Our analysis can also be extended to cross-fit estimates, which we use in practice to avoid losing sample size.

We assume that the estimates of the constraint and objective vectors and their de-biasing functions are consistent, and that the rate of convergence for de-biased estimates of linear functions of the constraint and objective vectors is controlled by a rate $r_n$.

\begin{assumption}
  \label{a:rates}
  For each $j=1,\ldots,J$ and $k=1,\ldots,K$, $\sup_{x \in \mathcal{X}}|\hat{b}_j(x) - b_j(x)| = o_p(1)$, $\sup_{x \in \mathcal{X}}|\hat{c}_k(x) - c_k(x)| = o_p(1)$, $\sup_{o \in \mathcal{O}}|\hat{\varphi}_j^{(b)}(o) - \varphi_j^{(b)}(o)| = o_p(1)$, and $\sup_{o \in \mathcal{O}}|\hat{\varphi}_k^{(c)}(o) - \varphi_k^{(c)}(o)| = o_p(1)$. Furthermore, there is some rate $r_n$ such that for any linear function of $\bm{b}(X)$ defined by $f:\mathcal{X} \to \R^J$, 
   and for any linear function of $\bm{c}(X)$ defined by $f:\mathcal{X} \to \R^K$, we have
  we have
  $\left|\E\left[\<\bm{f}(X), \hat{\bm{b}}(X) + \hat{\bm{\varphi}}^{(b)}(O) - \bm{b}(X)\>\right]\right| \lesssim \sup_{x \in \mathcal{X}} \|\bm{f}(x)\|_2 \times r_n$ and $\left|\E\left[\<\bm{f}(X), \hat{\bm{c}}(X) + \hat{\bm{\varphi}}^{(c)}(O) - \bm{c}(X)\>\right]\right| \lesssim \sup_{x \in \mathcal{X}} \|\bm{f}(x)\|_2 \times r_n$,
  where the notation $\lesssim$ means that the left hand side is bounded by a constant multiple of the right.  
\end{assumption}

In Appendix~\ref{sec:examples_method} we give the forms of these de-biasing functions for the examples in Section~\ref{sec:examples}.
Broadly,
these de-biasing functions can be written in terms of weighted residuals of the form $\varphi_{j}^{(b)}(O; b_j) = g_{j}^{(b)}(O; b_j)(Y_j - b_j(X))$ and $\varphi_{k}^{(c)}(O; c_k) = g_{k}^{(c)}(O; c_k)(C_k - c_k(X))$ where  $Y_j$ and $C_k$ are  elements of $O$ and $g_{j}^{(b)}(\cdot)$, $g_{k}^{(c)}(\cdot)$ are corresponding Riesz representers (e.g. inverse propensity weights).
To estimate the de-biasing functions, we can estimate the Riesz representers, giving a double robust or double machine learning-style estimate \citep[e.g.][]{Robins1994, chernozhukov_doubledebiased_2018}. In this case the rates will depend on the product of the errors in the estimates of the constraints $\bm{b}(\cdot)$ and objectives $\bm{c}(\cdot)$ with the errors in their respective Riesz representers---e.g. $r_n = \max\{\|\hat{\bm{b}} - \bm{b}\|_2\|\hat{\bm{g}}^{(b)} - \bm{g}^{(b)}\|_2, \|\hat{\bm{c}} - \bm{c}\|_2\|\hat{\bm{g}}^{(b)} - \bm{g}^{(b)}\|_2\}$ where $\hat{\bm{g}}^{(b)}$ and $\hat{\bm{g}}^{(c)}$ are estimates of the Riesz representers.
Generally, such de-biasing functions can be constructed by deriving efficient influence functions for the expected constraint and objective vectors.

\subsection{De-biased estimation using basic feasible solutions}
\label{sec:primal_estimation}
To begin, we will re-frame the conditional linear programs in Equation~\ref{eq:partial_id} as discrete optimization problems.
Because the matrix of marginal constraints $A \in \R^{J \times K}$ has rank equal to $J$, and the feasible set  $\mathcal{P}(x)$ is non-empty by assumption, any solution to this LP must lie on the boundary of the feasible set $\mathcal{P}(x)$.
In  particular there exists a solution that is a \emph{basic feasible solution} (BFS) with $J$ active, non-negative variables and $K - J$ inactive variables on the boundary at 0 \citep[see][for a discussion related to the entropic regularized estimator below]{klatt_limit_2022}.

For a set of $J$ unique indices $B = \{i_1,\ldots, i_{J}\}$, let $A_B$ denote the sub-matrix of $A$ that includes all rows and the $J$ columns defined by $B$, and similarly for a vector $v$ let $v_B$ denote the sub-vector of $v$ that includes the $J$ elements defined by $B$. Then, a basic solution to the linear program is given by $\bm{p}_B = A_B^{-1} \bm{b}(x)$, and $p_i = 0$ for all $i \not \in B$. If $\bm{p}_B \geq 0$, then it is also a basic \emph{feasible solution}. Abusing notation, we will use $A_B^{-1}$ to denote the re-ordered inverse matrix that places the rows of the inverse into $i_1,\ldots,i_J$, and sets the other rows equal to zero, so that $p = A_B^{-1}\bm{b}(x)$ is the BFS corresponding to the basis $B$.

We can then write the LPs in Equation~\eqref{eq:partial_id} as optimization problems over $\mathcal{B}$, the set of basic feasible solutions. Letting 
$\mathcal{B}^\ast_L(x) = \argmin_{B \in \mathcal{B}} \<\bm{c}(x), A_B^{-1} \bm{b}(x)\>$ and $\mathcal{B}^\ast_U(x) = \argmax_{B \in \mathcal{B}} \<\bm{c}(x), A_B^{-1} \bm{b}(x)\>$ denote the sets of optimal bases for the conditional minimization and maximization problems respectively, which may not be unique, we can write the bounds in Equation~\eqref{eq:partial_id} as
$\theta_L = \E\left[\<\bm{c}(x) ,A^{-1}_{B^\ast_L(X)} \bm{b}(x) \>\right]$ and $\theta_U =  \E\left[\<\bm{c}(x), A^{-1}_{B^\ast_U(X)} \bm{b}(x)\>\right]$,
where $B^\ast_L(x) \in \mathcal{B}_L^\ast(x)$ and $B^\ast_U(x) \in \mathcal{B}_U^\ast(x)$ are any particular optimal bases.

If we knew these true optimal bases $B_L^\ast(x)$ and $B_U^\ast(x)$ for each value of the covariates $x$, then we could estimate the upper and lower bounds $\theta_L$ and $\theta_U$ using de-biased estimators, since they are expectations of bi-linear forms of the nuisance functions $\bm{b}(x)$ and $\bm{c}(x)$. 
However, this estimator is infeasible because we do not know the sets of optimal bases.
Instead, we will use a \emph{plug-in} approach, where we estimate the optimal bases using the estimates of the conditional constraints $\hat{\bm{b}}(x)$ and the conditional objective $\hat{\bm{c}}(x)$:
\begin{equation}
  \label{eq:estimated_basis}
  \begin{aligned}
    \widehat{B}_L(x) & \in \argmin_{B \in \mathcal{B}}\<\hat{\bm{c}}(x) , A_B^{-1}\hat{\bm{b}}(x)\>\\
    \widehat{B}_U(x) & \in \argmax_{B \in \mathcal{B}}\<\hat{\bm{c}}(x) , A_B^{-1}\hat{\bm{b}}(x)\>,
  \end{aligned}
\end{equation}
The corresponding plugin estimates of the basic feasible solutions are $\hat{\bm{p}}_L(x) \equiv A_{\widehat{B}_L(x)}^{-1} \hat{\bm{b}}(x)$ and $\hat{\bm{p}}_U(x) \equiv A_{\widehat{B}_U(x)}^{-1} \hat{\bm{b}}(x)$.
We will proceed with estimating the lower and upper bounds $\theta_L$ and $\theta_U$ using de-biased BFS estimators as if the optimal bases were known:
\begin{equation}
  \label{eq:primal_bound_est}
  \begin{aligned}
    \hat{\theta}_L &= \frac{1}{n}\sum_{i=1}^n \left\< \hat{\bm{c}}(X_i) + \hat{\bm{\varphi}}^{(c)}(O_i), \hat{\bm{p}}_L(X_i)\right\>  + \left\<\hat{\bm{c}}(X_i), A_{\widehat{B}_L(X_i)}^{-1} \hat{\bm{\varphi}}^{(b)}(O_i)\right\>\\
    \hat{\theta}_U &= \frac{1}{n}\sum_{i=1}^n \left\< \hat{\bm{c}}(X_i) + \hat{\bm{\varphi}}^{(c)}(O_i), \hat{\bm{p}}_U(X_i)\right\>  + \left\<\hat{\bm{c}}(X_i), A_{\widehat{B}_U(X_i)}^{-1} \hat{\bm{\varphi}}^{(b)}(O_i)\right\>.
  \end{aligned}
\end{equation}
Below, we will focus on characterizing the properties of the lower bound estimator; the upper bound is analogous.

We note that written this way, this estimator is a form of the covariate-assisted intersection bounds estimator proposed by \citet{semenova_aggregated_2024}. However, even though there are potentially $\binom{K}{J}$ possible bases, we need not perform an exhaustive search over all possibilities.
Indeed, we need not even enumerate nor analytically characterize the set of bases $\mathcal{B}$.
Instead, we can use the simplex algorithm \citep{dantzig_maximization_1951} to determine an optimal basic feasible solution, which has polynomial run-time in typical cases and is efficient in practice, though it has exponential run-time in the worst-case \citep{spielman_smoothed_2004}.
This provides a substantial computational speedup over analytic vertex enumeration approaches that require enumerating all possible bases,
which is particularly helpful because
we must find the optimal bases for each unique value of the covariates $x$ in the sample, potentially up to $n$ times.

The quality of the  plug-in estimates of the optimal bases will depend on the difficulty of picking out an optimal basis from the set of all possible bases. To quantify this, we will define the \emph{sub-optimality gap} $\Delta_L(x)$ as the minimum difference between an optimal value of the conditional LP and the value of the LP at any other non-optimal vertex of the constraint set $\mathcal{P}(x)$. This can be written in terms of the non-optimal bases $\mathcal{B}\setminus \mathcal{B}_L^\ast$:
\begin{equation}
  \label{eq:subopt_primal}
  \begin{aligned}
    \Delta_L(x) & = \min_{B \in \mathcal{B} \setminus \mathcal{B}^\ast_L(x)} \<\bm{c}(x), A_B^{-1} b(x)\> - \<\bm{c}(x), A_{B^\ast_L(x)}^{-1} \bm{b}(x)\>
  \end{aligned}
\end{equation}
It may be that all feasible solutions are optimal and so $\mathcal{B}\setminus\mathcal{B}_L^\ast = \emptyset$, implying that the parameter of interest is point-identified at that value of the covariates.
For such cases, any choice of basis will suffice, and we will define the sub-optimality gaps to be $\Delta_L(x) = 0$.
We then make the following \emph{margin condition} assumption about the distribution of the sub-optimality gaps across the covariate space $\mathcal{X}$.
\begin{assumption}[Margin condition]
  \label{a:margin}
    For some $\alpha > 0$, $P\left(0 <  \Delta_L(X) \leq t \right) \lesssim t^\alpha$.
\end{assumption}

The \emph{margin parameter} $\alpha$ determines the difficulty of the problem. If $\alpha$ is small then there is a high probability across the covariate space $\mathcal{X}$ that there are suboptimal bases that are close to the optimal ones; if $\alpha$ is large then this is less likely.
Margin conditions such as Assumption~\ref{a:margin} have been used to analyze the performance of plugin estimators in classification settings \citep{Audibert2007}, estimators for classification in causal inference settings \citep{Luedtke2016,kennedy_sharp_2020}, and estimating bounds on partially identified parameters as we do here \citep{levis_assisted_2023,DAdamo2023, semenova_aggregated_2024}. Note that the definitions of the sub-optimality gap and the margin condition do not require that the optimal bases are unique. They also allow for the possibility that the parameter of interest is point-identified for some or all values of the covariates and so $0 < P(\Delta_L(X) = 0)$.

If the optimal bases were known, then the estimates $\hat{\theta}_L$ and $\hat{\theta}_U$ would be standard, de-biased estimators of the lower and upper bounds.
However, since we do not know the optimal bases, we will need to account for the estimation error in their plug-in estimates.
The following theorem gives the rate of convergence of the lower bound estimator $\hat{\theta}_L$; the properties of the upper bound are analogous.

\begin{theorem}
  \label{thm:primal_est_rate}

  Define $\tilde{\theta}_L  = \frac{1}{n} \sum_{i=1}^n \left\<\bm{c}(X_i), A_{\widehat{B}_L(X_i)} \left(\bm{b}(X_i) + \bm{\varphi}^{(b)}(O_i)\right)\right\> + \left\<\bm{\varphi}^{(c)}(O_i), A_{\widehat{B}_L(X_i)} \bm{b}(X_i)\right\>$.
  Under Assumptions~\ref{a:riesz}, \ref{a:rates}, and  \ref{a:margin}, if $\hat{\bm{b}}$, $\hat{\bm{c}}$, $\hat{\bm{\varphi}}^{(b)}$ and $\hat{\bm{\varphi}}^{(c)}$ are fit on a separate, independent sample,
  \begin{align*}
    \hat{\theta}_L - \theta_L &= \tilde{\theta}_L - \E[\tilde{\theta}_L]  + O_p\left(\left(\|\hat{\bm{b}} - \bm{b}\|_\infty   + \|\hat{\bm{c}} - \bm{c}\|_\infty\right)^{1 + \alpha} + r_n + \|\hat{\bm{b}} - \bm{b}\|_2 \|\hat{\bm{c}} - \bm{c}\|_2\right) + o_p(n^{-1/2}).
  \end{align*}
\end{theorem}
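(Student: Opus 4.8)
The plan is to center $\hat{\theta}_L$ around the oracle-nuisance quantity $\tilde{\theta}_L$ via the decomposition
\[
\hat{\theta}_L - \theta_L = \underbrace{(\tilde{\theta}_L - \E[\tilde{\theta}_L])}_{\text{leading term}} + \underbrace{(\hat{\theta}_L - \tilde{\theta}_L)}_{\text{nuisance error}} + \underbrace{(\E[\tilde{\theta}_L] - \theta_L)}_{\text{regret}},
\]
where throughout $\E[\cdot]$ denotes expectation over the evaluation sample conditional on the independent fitting sample, so that $\widehat{B}_L(\cdot)$, $\hat{\bm{b}}$, $\hat{\bm{c}}$, $\hat{\bm{\varphi}}^{(b)}$, and $\hat{\bm{\varphi}}^{(c)}$ may all be treated as fixed functions. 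The first term is the stated asymptotically-linear term, so it remains to bound the other two. For the regret, I would use that the \emph{true} de-biasing functions are conditionally mean zero ($\E[\bm{\varphi}^{(b)}(O)\mid X] = \E[\bm{\varphi}^{(c)}(O)\mid X] = 0$ by Assumption~\ref{a:riesz}), so taking the conditional expectation inside $\tilde{\theta}_L$ annihilates both de-biasing summands and leaves $\E[\tilde{\theta}_L] = \E[\<\bm{c}(X), A_{\widehat{B}_L(X)}^{-1}\bm{b}(X)\>]$. Subtracting $\theta_L = \E[\<\bm{c}(X), A_{B^\ast_L(X)}^{-1}\bm{b}(X)\>]$ yields exactly the regret of the plug-in basis, which Lemma~\ref{lem:mis_class} bounds by $(\|\hat{\bm{b}}-\bm{b}\|_\infty + \|\hat{\bm{c}}-\bm{c}\|_\infty)^{1+\alpha}$.

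The heart of the argument is the nuisance term $\hat{\theta}_L - \tilde{\theta}_L$, which I split into a conditional-bias piece $\E[\hat{\theta}_L] - \E[\tilde{\theta}_L]$ and a conditionally centered empirical-process piece. Because both estimators use the \emph{same} basis function $\widehat{B}_L$, the bias piece is a pure double-robustness calculation at a fixed basis. Applying the defining de-biasing property of Assumption~\ref{a:riesz} with $\bar{\bm{b}}=\hat{\bm{b}}$ and $\bar{\bm{c}}=\hat{\bm{c}}$, I would write $\E[\hat{\bm{b}}(X)+\hat{\bm{\varphi}}^{(b)}(O)\mid X] = \bm{b}(X) + \bm{\eta}_b(X)$ and $\E[\hat{\bm{c}}(X)+\hat{\bm{\varphi}}^{(c)}(O)\mid X] = \bm{c}(X)+\bm{\eta}_c(X)$, where $\bm{\eta}_b,\bm{\eta}_c$ are the residual errors of the de-biased nuisances. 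Substituting into the bilinear integrand and exploiting bilinearity of $\<\bm{c}, A_B^{-1}\bm{b}\>$, all first-order terms in $\hat{\bm{b}}-\bm{b}$ and $\hat{\bm{c}}-\bm{c}$ cancel, leaving a single second-order product term $-\E[\<\hat{\bm{c}}-\bm{c},\, A_{\widehat{B}_L(X)}^{-1}(\hat{\bm{b}}-\bm{b})\>]$ together with the residual terms $\E[\<\bm{\eta}_c, A_{\widehat{B}_L}^{-1}\hat{\bm{b}}\>]$ and $\E[\<\hat{\bm{c}}, A_{\widehat{B}_L}^{-1}\bm{\eta}_b\>]$.

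The product term is bounded by $\|\hat{\bm{b}}-\bm{b}\|_2\|\hat{\bm{c}}-\bm{c}\|_2$ via Cauchy--Schwarz, using that the finitely many inverse sub-matrices $A_B^{-1}$ have uniformly bounded operator norm. Each residual term has the form $\E[\<\bm{f}(X), (\text{de-biased nuisance error})\>]$ with the fit-measurable linear functional $\bm{f}(X) = A_{\widehat{B}_L(X)}^{-1}\hat{\bm{b}}(X)$ (respectively $\bm{g}(X)=(A_{\widehat{B}_L(X)}^{-1})^\T\hat{\bm{c}}(X)$), which satisfies $\sup_x\|\bm{f}(x)\|_2 \lesssim 1$ because $\hat{\bm{b}},\hat{\bm{c}}$ are sup-norm consistent and $\bm{b},\bm{c}$ are bounded; Assumption~\ref{a:rates} then bounds each by $r_n$. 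Conditioning on the fitting sample is what makes this legitimate: it freezes the random functional $\bm{f}$ to a fixed bounded function. The empirical-process piece is the conditionally centered average of the difference between the estimated-nuisance and true-nuisance integrands; by independence of the two samples its conditional variance is $n^{-1}$ times the $L^2$ norm of that integrand difference, which is $o_p(1)$ by the sup-norm consistency of all four nuisance estimates and uniform boundedness of $A_B^{-1}$, so this piece is $o_p(n^{-1/2})$ and is dominated by the leading term.

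The hard part will be the bias calculation, and specifically the decoupling it requires. One must verify that conditioning on the independent fitting sample genuinely freezes both the plug-in basis $\widehat{B}_L$ and the data-dependent linear functionals $A_{\widehat{B}_L}^{-1}\hat{\bm{b}}$ and $(A_{\widehat{B}_L}^{-1})^\T\hat{\bm{c}}$, so that the single-basis bilinear cancellation goes through and Assumption~\ref{a:rates} can be invoked with a random-but-fit-measurable $\bm{f}$. This decoupling is precisely what isolates the basis-selection error into the regret term (controlled by Lemma~\ref{lem:mis_class} at the $(1+\alpha)$ rate) and the nuisance error into the double-robust bound of order $r_n + \|\hat{\bm{b}}-\bm{b}\|_2\|\hat{\bm{c}}-\bm{c}\|_2$, without the two sources of error interacting at leading order.
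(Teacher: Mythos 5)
Your proposal is correct and follows essentially the same route as the paper: the same three-part decomposition (asymptotically linear term, bias at the plug-in basis, conditionally centered empirical-process remainder), with the regret controlled by Lemma~\ref{lem:mis_class}, the de-biasing residuals controlled by Assumption~\ref{a:rates} applied to the fit-measurable functionals $A_{\widehat{B}_L}^{-1\prime}\hat{\bm{c}}$ and $A_{\widehat{B}_L}^{-1}\hat{\bm{b}}$, the cross term by Cauchy--Schwarz, and the empirical-process piece shown to be $o_p(n^{-1/2})$ by the conditional-variance argument that the paper invokes via Lemma 1 of \citet{kennedy_semiparametric_2024}. The only difference is bookkeeping: the paper folds the basis-selection regret into a single bias term $\E[\hat{\theta}_L]-\theta_L$ (its Lemma~\ref{lem:primal_est_bias}, terms (i)--(iv)), whereas you peel it off as $\E[\tilde{\theta}_L]-\theta_L$ before the double-robustness calculation; the two groupings are algebraically equivalent.
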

Theorem~\ref{thm:primal_est_rate} shows that the estimation error $\hat{\theta}_L - \theta_L$ is equal to a mean-zero term  $\tilde{\theta}_L - \E[\tilde{\theta}_L]$ plus additional bias terms that depend on the estimation error of the nuisance functions $\hat{\bm{b}}$, $\hat{\bm{c}}$ and the de-biased rate $r_n$.
The first bias term is due to the sub-optimality of using plugin estimates of an optimal basis $\widehat{B}_L(x)$ relative to a true optimal basis $B_L^\ast(x)$.
This decreases at a faster rate than the worst-case error of the nuisance functions by a factor of the margin parameter $\alpha$. Roughly, this is because in cases where the optimal bases are mis-classified, the sub-optimality gap must be smaller than the estimation error in the nuisance functions (note this can be extended to error in other norms as well).
The next term comes from de-biasing, and the final term involves cross terms between the estimation errors of the constraint and objective vectors that arise from the bi-linear form of the estimator.

If the estimates of the nuisance functions converge quickly enough---though potentially slower than parametric rates---then
the estimation error $\sqrt{n}(\hat{\theta}_L - \theta_L)$ is asymptotically equivalent to a centered version of the de-biased estimator using the plugin estimates of the optimal basis $\widehat{B}_L(x)$, $\sqrt{n}(\tilde{\theta}_L - \E[\tilde{\theta}_L])$.
Therefore, if this is asymptotically normally distributed, then so is the estimator $\hat{\theta}_L$, allowing for straightforward asymptotic inference.

\begin{corollary}
   \label{cor:primal_est_normal}
    Under the conditions of Theorem~\ref{thm:primal_est_rate}, if $\left(\|\hat{\bm{b}} - \bm{b}\|_\infty   +  \|\hat{\bm{c}} - \bm{c}\|_\infty\right)^{1 + \alpha}$, ${\|\hat{\bm{b}} - \bm{b}\|_2 \|\hat{\bm{c}} - \bm{c}\|_2}$ and $r_n = o_p(n^{-\frac{1}{2}})$, and $\frac{\sqrt{n}(\tilde{\theta}_L - \E[\tilde{\theta}_L])}{\sqrt{V_L}} \Rightarrow N(0, 1)$,
  then $\frac{\sqrt{n}\left(\hat{\theta}_L - \theta_L\right)}{\sqrt{V_L}} \Rightarrow N(0, 1)$, where
  $V_L \equiv \Var\left(\left\<\bm{c}(X), A_{\widehat{B}_L(X)}^{-1} \left(\bm{b}(X) + \bm{\varphi}^{(b)}(O)\right)\right\> + \left\<\bm{\varphi}^{(c)}(O), A_{\widehat{B}_L(X)}^{-1} \bm{b}(X)\right\>\right)$.
\end{corollary}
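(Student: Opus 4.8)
The plan is to combine the error decomposition of Theorem~\ref{thm:primal_est_rate} with a triangular-array central limit theorem applied to the leading term, conditioning on the auxiliary sample used to fit the nuisances. Write $\mathcal{D}_n$ for the independent fold on which $\hat{\bm{b}}$, $\hat{\bm{c}}$, $\hat{\bm{\varphi}}^{(b)}$, $\hat{\bm{\varphi}}^{(c)}$ (and hence the plug-in map $\widehat{B}_L$) are computed, and set
\[
  \psi_i = \left\<\bm{c}(X_i), A_{\widehat{B}_L(X_i)}\left(\bm{b}(X_i) + \bm{\varphi}^{(b)}(O_i)\right)\right\> + \left\<\bm{\varphi}^{(c)}(O_i), A_{\widehat{B}_L(X_i)}\bm{b}(X_i)\right\>,
\]
so that $\tilde{\theta}_L = n^{-1}\sum_i \psi_i$ and, treating $\widehat{B}_L$ as fixed given $\mathcal{D}_n$, we identify $\E[\tilde{\theta}_L] = \E[\psi_1 \mid \mathcal{D}_n] =: \mu_n$ and $V_L = \Var(\psi_1 \mid \mathcal{D}_n)$. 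First I would invoke Theorem~\ref{thm:primal_est_rate}: under the three stated rate conditions each of its three bias terms is $o_p(n^{-1/2})$, so $\hat{\theta}_L - \theta_L = (\tilde{\theta}_L - \mu_n) + R_n$ with $\sqrt{n}\,R_n = o_p(1)$. It then suffices to prove $\sqrt{n}(\tilde{\theta}_L - \mu_n)/\sqrt{V_L} \Rightarrow N(0,1)$ and apply Slutsky, using a nondegeneracy condition $\liminf V_L > 0$ (w.p.a.\ $1$) so that dividing the $o_p(1)$ remainder by $\sqrt{V_L}$ still yields $o_p(1)$.

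The core is a conditional CLT for $\tilde{\theta}_L$. Given $\mathcal{D}_n$, the observations $O_i$ in the estimation fold are iid and $\widehat{B}_L$ is a fixed measurable map, so $\psi_1,\dots,\psi_n$ are conditionally iid with mean $\mu_n$ and variance $V_L$. Because their law depends on $n$ through $\widehat{B}_L$, this is a genuine triangular array, and I would verify the Lindeberg--Feller condition. The key simplification is that $\widehat{B}_L(x)$ takes values in the finite set $\mathcal{B}$ of at most $\binom{K}{J}$ bases: writing $\psi_i^{(B)}$ for the summand with $\widehat{B}_L(X_i)$ replaced by a fixed $B$, we have $|\psi_i| \le G := \max_{B \in \mathcal{B}}|\psi_i^{(B)}|$, and since $A$ is fixed (so each $A_B^{-1}$ is a fixed finite matrix), $\|\bm{b}\|_\infty < \infty$, $\bm{c}$ is bounded, and the de-biasing functions have finite second moments by Assumption~\ref{a:riesz}, each $\psi_1^{(B)}$ is square-integrable and hence $\E[G^2] \le \sum_{B \in \mathcal{B}} \E[(\psi_1^{(B)})^2] < \infty$. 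This $n$-free, square-integrable envelope dominates the entire array, so $\{\psi_1^2\}$ is uniformly integrable across rows; combined with $V_L$ bounded away from $0$ and $\infty$, the Lindeberg condition follows and Lindeberg--Feller gives $\sqrt{n}(\tilde{\theta}_L - \mu_n)/\sqrt{V_L} \Rightarrow N(0,1)$ conditionally on $\mathcal{D}_n$.

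Finally I would pass from conditional to unconditional convergence. Since the Gaussian limit does not depend on $\mathcal{D}_n$, the conditional characteristic function $\E[\exp(it\,\sqrt{n}(\tilde{\theta}_L - \mu_n)/\sqrt{V_L}) \mid \mathcal{D}_n]$ converges in probability to $e^{-t^2/2}$; bounded convergence then yields convergence of the unconditional characteristic function, so $\sqrt{n}(\tilde{\theta}_L - \mu_n)/\sqrt{V_L} \Rightarrow N(0,1)$ unconditionally. Adding the $o_p(1)$ remainder $\sqrt{n}\,R_n/\sqrt{V_L}$ via Slutsky gives the claimed limit.

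I expect the main obstacle to be the triangular-array structure: both the law of the summands and the normalizer $V_L$ move with $n$ through the estimated basis map $\widehat{B}_L$. The finiteness of $\mathcal{B}$ is precisely what tames this, furnishing a single square-integrable envelope $G$ valid for every realization of $\widehat{B}_L$ and every $n$; without it one would have to control moments uniformly over a continuum of candidate bases. Secondary points needing care are the nondegeneracy $\liminf V_L > 0$, which I would impose as a standard regularity condition, and the conditional-to-unconditional transfer, which is routine in the cross-fitting literature but should be stated explicitly.
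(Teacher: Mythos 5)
Your proposal is correct and follows essentially the same route the paper intends for this corollary (which it treats as an immediate consequence of Theorem~\ref{thm:primal_est_rate}): the three rate conditions kill the bias terms at $o_p(n^{-1/2})$, and the leading term $\tilde{\theta}_L - \E[\tilde{\theta}_L]$ is a conditionally iid sample average given the separate nuisance-fitting fold, so a CLT with variance $V_L$ plus Slutsky finishes the argument. Your additional care with the triangular-array structure, the square-integrable envelope furnished by the finiteness of $\mathcal{B}$, the implicit nondegeneracy requirement $\liminf V_L > 0$, and the conditional-to-unconditional transfer simply makes explicit what the paper leaves unstated.
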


This result involves the variance of the plugin \emph{estimates} of the optimal basis and requires that the estimates are sufficiently well-behaved to allow for $\tilde{\theta}_L$ to be asymptotically normal, which
may be difficult to verify.
Ideally, we would like to find an asymptotic expansion that relies only on population quantities, such as the set of true optimal bases $\mathcal{B}^\ast_L(x)$.
The challenge is that there are potentially several such true optimal bases,
and even though we can control the probability that the plugin estimates are sub-optimal, we cannot guarantee that the estimated basis corresponds to any particular true optimal basis.
Indeed, it may be that the plugin estimate shifts between different optimal bases depending on the particular error structure of the nuisance functions.

If for (almost) all $x \in \mathcal{X}$ there is only one unique solution (i.e. $|\mathcal{B}^\ast_L(x)| = 1$), then
the asymptotic expansion of $\hat{\theta}_L$ involves the true optimal basis.
However, even if there are multiple optimal solutions to the conditional LP,
if at least one solution is \emph{non-degenerate} with basis values that are strictly greater than zero (i.e. $A_B^{-1}\bm{b}(x) > 0$),
then the asymptotic expansion of $\hat{\theta}_L$ will be invariant to the choice of optimal basis if the objective vector $\bm{c}(x)$ is known.
This is because a non-degenerate basis implies that the there is a unique solution to the \emph{Lagrangian dual} of the conditional LP, and so the dual solution corresponding to an optimal basis, $A_{B_L^\ast(x)}^{-1\prime}\bm{c}(x)$, is the same regardless of which optimal basis is chosen.
\begin{corollary}
  \label{cor:primal_est_normal_unique}
  Under the conditions in Theorem~\ref{thm:primal_est_rate} and the rate conditions in Corollary~\ref{cor:primal_est_normal}, if  either (i) $P(|\mathcal{B}^\ast_L(x)| = 1) = 1$; (ii) for almost all  $x \in \mathcal{X}$ there exists a non-degenerate solution and $\bm{\varphi}^{(c)}(o) = f(o) \bm{c}(x)$ for some function $f$; or (iii) $\Var(\bm{\varphi}^{(c)}(O) \mid x) = 0$ for all $x$ such that $|\mathcal{B}^\ast_L(x)| > 1$, and $\Var(\bm{\varphi}^{(b)}(O) \mid x) = 0$ for all $x$ such there does not exist a non-degenerate solution,
  then $\frac{\sqrt{n}\left(\hat{\theta}_L - \theta_L\right)}{\sqrt{V_L}} \Rightarrow N(0, 1)$, where  
  \[
  V_L \equiv \Var\left(\left\<\bm{c}(X), A_{B_L^\ast(X)}^{-1} \left(\bm{b}(X) + \bm{\varphi}^{(b)}(O)\right)\right\> + \left\<\bm{\varphi}^{(c)}(O), A_{B_L^\ast(X)}^{-1} \bm{b}(X)\right\>\right),
  \]
  and $B_L^\ast(x) \in \mathcal{B}_L^\ast(x)$ is any particular optimal basis.
\end{corollary}
Corollary~\ref{cor:primal_est_normal_unique} slightly generalizes the above discussion to allow for cases where the objective vector is not known, but the de-biasing function is a linear function of the objective vector.
This is the case, for instance, when the objective function is known up to a scaling factor such as the propensity score.
Corollary~\ref{cor:primal_est_normal_unique} also allows for cases where there is neither a unique nor a non-degenerate solution, but the variance of the de-biasing function is zero for such cases, a stringent condition that may be satisfied in some cases.
Finally, note that Corollary~\ref{cor:primal_est_normal_unique} explicitly 
allows for point-identified cases when the objective vector $\bm{c}(x)$ is known
up to a scaling factor, as long as there is a non-degenerate solution.
In the examples in Section~\ref{sec:examples}, non-degeneracy is a mild condition that will be satisfied if there is a non-zero probability of observing each outcome or treatment level at each value of the covariates $x$.
It can also be possible to simplify the program to avoid degenerate solutions to remove such zero probabilities.

To construct confidence intervals, we estimate the variances
with
\begin{align*}
  \hat{V}_L & = \frac{1}{n}\sum_{i=1}^n \left(\left\<\hat{\bm{c}}(X_i), A_{\widehat{B}_L(X_i)}^{-1} \left(\hat{\bm{b}}(X_i) + \hat{\bm{\varphi}}^{(b)}(O_i)\right)\right\> + \left\<\hat{\bm{\varphi}}^{(c)}(O_i), A_{\widehat{B}_L(X_i)}^{-1} \hat{\bm{b}}(X_i)\right\> - \hat{\theta}_L\right)^2,
\end{align*}
for the lower bound, and analogously for the upper bound.
Following \citet{imbens_confidence_2004}, we can construct a Wald-type approximate $(1 - \alpha)$ confidence interval for the partially identified parameter $\theta$ by combining one-sided confidence intervals for the lower and upper bounds  as $\left[\hat{\theta}_L - z_{1-\alpha} \sqrt{\frac{\hat{V}_L}{n}}, \hat{\theta}_U + z_{1-\alpha} \sqrt{\frac{\hat{V}_U}{n}}\right]$, where $z_{1-\alpha}$ is the $1-\alpha$ quantile of a standard normal distribution.

\subsection{Estimating bounds with entropic regularization}
\label{sec:entropic}
Due to the non-smoothness of the conditional linear programs, small errors in either the constraint or objective vectors can lead to large changes in the optimal solution of the conditional LP.
As an alternative, we propose to target a smoothed version of the estimand that adds an entropy penalty to the conditional linear program.
As we will see, this will induce some approximation error, but we will typically be able to tune the level of regularization so that the approximation error will be small relative to the standard error of the estimator.

For a vector $\bm{p} \in \R_+^K$
define the solutions to the \emph{entropic conditional linear programs} as
\begin{equation}
  \label{eq:condl_entropic_primal}
  \bm{p}^{\eta}_L(x) = \underset{A\bm{p} = \bm{b}(x)}{\argmin}\;\<\bm{c}(X), \bm{p}\> + \frac{1}{\eta}\sum_{k=1}^K p_k(\log p_k - 1) \; \text{ and } \;  \bm{p}^{\eta}_U(x) = \underset{A\bm{p} = \bm{b}(x)}{\argmax}\; \<\bm{c}(X), \bm{p}\> - \frac{1}{\eta}\sum_{k=1}^K p_k(\log p_k - 1)
\end{equation}
We then define the \emph{entropic regularized lower and upper bounds} as the expected value using the entropic solutions:
$
  \theta_L^\eta \equiv \E\left[\left\<\bm{c}(X), \bm{p}_L^{\eta}(X)\right\>\right]$ and
  $ \theta_U^\eta \equiv \E\left[\left\<\bm{c}(X), \bm{p}_U^{\eta}(X)\right\>\right]$.

The hyperparameter $\eta > 0$ controls the level of regularization. As $\eta \to 0$, the entropy penalty in Equation~\eqref{eq:condl_entropic_primal} dominates the objective function, and so the entropic solutions $\bm{p}_L^\eta(x)$ and $\bm{p}_U^\eta(x)$ will converge to the maximum entropy distribution that satisfies the constraints $A\bm{p} = \bm{b}(x)$. In this case, the lower and upper bounds will be equal, $\theta_L^0 = \theta_U^0$, and correspond to point-identifying the estimand $\theta$ by making a maximum entropy assumption. In the potential outcomes setting from Section~\ref{sec:examples}, sending $\eta \to 0$ corresponds to identifying $\theta$ by assuming that the potential outcomes are mutually independent given the covariates.
On the other hand, as $\eta \to \infty$, the entropic solutions will converge to the solutions of the unregularized conditional linear programs.
In this way, $\eta$ can act as a sensitivity parameter, moving smoothly between the bounds under no additional assumptions and the point-identified case under the maximum entropy assumption.

There are several benefits of using the entropic regularized bounds in place of solving the linear program as in Section~\ref{sec:primal_estimation}.
The computational complexity required to solve the conditional LPs can scale unfavorably with aspects of the problem, such as the number of outcome or treatment levels.
Because the conditional LPs need to be solved for each data point, this can lead to a large computational burden if the number of outcome levels or especially the number of treatment levels is large in potential outcome settings.
In certain problems such as cases with a binary treatment, entropic regularization can lead to significant computational speedups by using the  \citet{sinkhorn_diagonal_1967} algorithm \citep{cuturi_sinkhorn_2013}.

Furthermore, the regularizer is strongly convex, so there exists a unique solution to both the lower and upper entropic bounds for all values of the covariates $x \in \mathcal{X}$, and the bounds are  smooth functions of the nuisance parameters $\bm{b}(x)$ and $\bm{c}(x)$.
In particular, by taking the Lagrangian dual of the conditional entropic linear programs~\eqref{eq:condl_entropic_primal}, for any constraint and objective vector $\bm{b} \in \R^J$ and $\bm{c} \in \R^K$,  we can write the solutions as $\bm{p}^\eta_L(\bm{b}, \bm{c}) = \exp\left(-A'\bm{\lambda}_L^\eta(\bm{b}, \bm{c}) + \eta \bm{c}\right)$ and $\bm{p}^\eta_U(\bm{b}, \bm{c}) = \exp\left(A'\bm{\lambda}_U^\eta(\bm{b}, \bm{c}) + \eta \bm{c}\right)$, where the dual variables $\bm{\lambda}_L^\eta(\bm{b}, \bm{c})$ and $\bm{\lambda}_U^\eta(\bm{b}, \bm{c})$ are the solutions to the entropic regularized dual problems:

  \begin{equation}
  \label{eq:entropy_dual_var}
  \begin{aligned}
    \bm{\lambda}^{\eta}_L(\bm{b}, \bm{c}) & = \underset{\bm{\lambda}}{\argmin} \; \sum_{k=1}^K \exp\left(-\<A_{\cdot k}, \bm{\lambda}\> + \eta c_k\right) + \<\bm{\lambda}, \bm{b}\>\\
    \bm{\lambda}^{\eta}_U(\bm{b}, \bm{c}) &  = \underset{\bm{\lambda}}{\argmin}\; \sum_{k=1}^K \exp\left(\<A_{\cdot k}, \bm{\lambda}\> + \eta c_k\right) - \<\bm{\lambda}, \bm{b}\>,
  \end{aligned}
\end{equation}
where $A_{\cdot k}$ denotes the $k$\super{th} column of the matrix $A$.

This removes the ambiguity of multiple solutions, and allows us to  construct a de-biased estimator using standard semi-parametric theory
because the dual variables are an implicitly defined function of the nuisance parameters $\bm{b}(x)$ and $\bm{c}(x)$,
sidestepping the
issue of selecting an optimal basis that arises in the unregularized linear programs.
To do so, we can compute the Jacobian of the entropic regularized solutions with respect to the vectors $\bm{b}(x)$ and $\bm{c}(x)$
and use a Taylor expansion of the true solution around the plugin estimates (i.e. $\hat{\bm{p}}_L^\eta(x) \equiv \bm{p}_L^\eta(\hat{\bm{b}}(x), \hat{\bm{c}}(x))$ and $\hat{\bm{p}}_U^\eta(x) \equiv \bm{p}_U^\eta(\hat{\bm{b}}(x), \hat{\bm{c}}(x))$).
Then, we can approximately de-bias the plugin estimates as
\begin{equation}
  \label{eq:entropic_theta_est}
  \begin{aligned}
    \hat{\theta}_L^\eta &= \frac{1}{n}\sum_{i=1}^n \<\hat{\bm{c}}(X_i) + \hat{\bm{\varphi}}^{(c)}(O_i), \hat{\bm{p}}_L^\eta(X_i)\> + \<\hat{\bm{c}}(X_i), \nabla_{\bm{b}} \hat{\bm{p}}^\eta_L(X_i) \hat{\bm{\varphi}}^{(b)}(O_i) + \nabla_{\bm{c}} \hat{\bm{p}}^\eta_L(X_i) \hat{\bm{\varphi}}^{(c)}(O_i)\>\\
    \hat{\theta}_U^\eta &= \frac{1}{n}\sum_{i=1}^n \<\hat{\bm{c}}(X_i) + \hat{\bm{\varphi}}^{(c)}(O_i), \hat{\bm{p}}_U^\eta(X_i)\> + \<\hat{\bm{c}}(X_i), \nabla_{\bm{b}} \hat{\bm{p}}^\eta_U(X_i) \hat{\bm{\varphi}}^{(b)}(O_i) + \nabla_{\bm{c}} \hat{\bm{p}}^\eta_U(X_i) \hat{\bm{\varphi}}^{(c)}(O_i)\>,
  \end{aligned}
\end{equation}
where $\nabla_{\bm{b}} \hat{\bm{p}}^\eta_L(x) \equiv \nabla_{\bm{b}}\bm{p}^\eta_L(\hat{\bm{b}}(x), \hat{\bm{c}}(x))$ is shorthand for the Jacobian of the entropic regularized lower bound with respect to the constraint vector $\bm{b}$ evaluated at the plugin estimates of the nuisance functions $\hat{\bm{b}}(x)$ and $\hat{\bm{c}}(x)$; $\nabla_{\bm{c}} \hat{\bm{p}}^\eta_L(x), \nabla_{\bm{b}} \hat{\bm{p}}^\eta_U(x), \nabla_{\bm{c}} \hat{\bm{p}}^\eta_U(x) $ are similarly defined. Lemma~\ref{lem:jacobian} in the Appendix gives the explicit form of these Jacobians.
These estimators are similar to the de-biased BFS estimators in Section~\ref{sec:primal_estimation}, but now rather than relying on plugin estimates of the optimal basic feasible solutions, we can use the differentiability of the entropic solutions to make the estimator more robust to errors in the nuisance function estimates.
Again, we focus on the properties of the lower bound estimator for brevity, but the upper bound is analogous.

In Appendix Theorem~\ref{thm:entropic_est_rate}, we show that the asymptotic bias of $\hat{\theta}_L^\eta$ relative to the true entropic regularized bound $\theta_L^\eta$ behaves as we would expect, decreasing with the de-biased rate $r_{n}$ and second order errors in the nuisance functions $\hat{\bm{b}}$ and $\hat{\bm{c}}$, and so there is no first-order dependence on the nuisance functions.
Of course, changing the estimand by including the entropy penalty will introduce approximation error relative to the original bounds because $\theta_L^\eta \geq \theta_L$.
However, the approximation error induced via entropic regularization has been shown to decrease exponentially with the hyperparameter $\eta$ when $\eta$ is large enough relative to the sub-optimality gap \citet{weed_explicit_2018}.
The following result incorporates the approximation error in a setting where we choose the hyperparameter $\eta$ to grow with the sample size.

\begin{theorem}
  \label{thm:entropic_est_approx}
Define
  \[
    \tilde{\theta}_L^\eta \equiv \frac{1}{n}\sum_{i=1}^n \<\bm{c}(X_i) + \bm{\varphi}^{(c)}(O_i), \bm{p}_L^\eta(X_i)\> + \<\bm{c}(X_i), \nabla_{\bm{b}} \bm{p}^\eta_L(X_i) \bm{\varphi}^{(b)}(O_i) + \nabla_{\bm{c}} \bm{p}^\eta_L(X_i) \bm{\varphi}^{(c)}(O_i)\>,
    \]
    where $\bm{p}_L^\eta(X_i)$ is the solution to the entropic conditional linear program~\eqref{eq:condl_entropic_primal} using the true nuisance functions $\bm{b}(X_i)$ and $\bm{c}(X_i)$.
    Under Assumptions~\ref{a:riesz} and \ref{a:rates}, if  $\hat{\bm{b}}$, $\hat{\bm{c}}$, $\hat{\bm{\varphi}}^{(b)}$, and $\hat{\bm{\varphi}}^{(c)}$ are fit on a separate, independent sample, if for each sample $X_1,\ldots,X_n$, $\eta \geq  \frac{R_1(X_i) + R_H(X_i)}{\Delta_L(X_i)}$ for all $i=1,\ldots,n$ such that $\Delta_L(X_i) > 0$, where   $R_1(x) \equiv \max_{\bm{p} \in \mathcal{P}(x)} \|\bm{p}\|_1$, $R_H(x) \equiv \max_{\bm{p}_1, \bm{p}_2 \in \mathcal{P}(X)} \sum_k p_{2k}(\log p_{2k} - 1) - p_{1k}(\log p_{1k} - 1)$, and $\eta$ is allowed to grow with the sample size $n$, then
    \[
      \hat{\theta}^\eta_L - \theta_L = \tilde{\theta}^\eta_L - \theta^\eta_L  +  O_p\left(e^{-\eta} + \eta \times r_n + \eta^2 \|\hat{\bm{b}} - \bm{b}\|_2^2 + \eta^2 \|\hat{\bm{c}} - \bm{c}\|_2^2 \right) + o_p\left(n^{-1/2}\right).
    \]
  Furthermore,  if $n^{1/2}e^{-\eta} \to 0$, $n^{-1/4}\eta \to 0$, $ \|\hat{\bm{b}} - \bm{b}\|_2 = o_p\left(\frac{1}{n^{1/4}\eta}\right)$, $\|\hat{\bm{c}} - \bm{c}\|_2  = o_p\left(\frac{1}{n^{1/4} \eta}\right)$ , and $r_{n} = o\left(\frac{1}{n^{1/2}\eta}\right)$,
   then $\frac{\sqrt{n}\left(\hat{\theta}_L^\eta - \theta_L\right)}{\sqrt{V_L^\eta}} \Rightarrow N(0, 1)$,
    where  
 \[
 V_L^\eta \equiv \Var\left(\ \<\bm{c}(X) + \bm{\varphi}^{(c)}(O), \bm{p}_L^\eta(X)\> + \<\bm{c}(X), \nabla_{\bm{b}} \bm{p}^\eta_L(X) \bm{\varphi}^{(b)}(O) + \nabla_{\bm{c}} \bm{p}^\eta_L(X) \bm{\varphi}^{(c)}(O)\>\right).
 \]
\end{theorem}

\noindent Theorem~\ref{thm:entropic_est_approx} includes the approximation error induced by regularization.
The key requirement is that the hyperparameter $\eta$ is large---and so the level of regularization is small---relative to the ratio of the size of the feasible set $\mathcal{P}(X_i)$
and the sub-optimality gap $\Delta_L(X_i)$ for each observed covariate $X_i$ in the sample for which the sub-optimality gap is not zero
 (if it is zero, then any feasible solution is optimal).
This ensures that the overall level of regularization $\frac{1}{\eta}$ is small enough so that the approximation error decreases exponentially with $\eta$ for each observed $X_i$ in the sample \citep{weed_explicit_2018}.
Because these aspects of the conditional linear programs are independent of the sample size, we can expect that to find an $\eta$ that is larger than this minimal value for a finite sample.

On the other hand, the first and second derivatives of the entropic solutions with respect to the nuisance functions $\bm{b}(x)$ and $\bm{c}(x)$ scale with $\eta$: larger values of $\eta$ can increase the magnitude of the derivatives and increase the scale of the bias quadratically with $\eta$.
Theorem~\ref{thm:entropic_est_approx} summarizes sufficient conditions for this increase in magnitude to be outweighed by the decrease in approximation error so that the overall bias converges faster than $n^{-1/2}$---and so is asymptotically smaller than the standard error and will not affect our confidence intervals.
The faster $\eta$ grows, the more quickly the nuisance function errors need to converge (the maximum being a parametric convergence rate when $\eta = O(n^{1/4})$), but it is sufficient for $\eta$ 
to scale logarithmically with the sample size, $\eta = O(\log n)$, a relatively slow growth.

Theorem~\ref{thm:entropic_est_approx} relies on being able to choose a large enough $\eta$ to ensure that the approximation error decays exponentially.
However, depending on the tail behavior of the sub-optimality gaps $\Delta_L(X_i)$, it may be possible that under some data generating processes the minimal size of $\eta$ grows too quickly and the magnitude of the derivatives outweighs the decrease in estimation error.
While it is not clear under which data generating processes this will occur, in Appendix Corollary~\ref{cor:entropic_est_approx_margin}
we consider a more pessimistic case where we have no guarantee that $\eta$ is large enough to have a small bias.
In this case, the approximation error will not decay exponentially, but rather polynomially with $\eta$:
if $\eta$ is too small, then the approximation error may only decay with the inverse of $\eta$ \citep{weed_explicit_2018}, but under the margin condition in Assumption~\ref{a:margin}, this occurs with probability $O(\eta^{-\alpha})$, where $\alpha$ is the margin parameter, leading to an approximation error that decays with $\eta^{-(1+\alpha)}$.
This affects the minimum rate of growth for $\eta$ and the required convergence rates of the nuisance functions.
For example, if we choose $\eta = O(n^\beta)$ for some $0 < \beta < \frac{1}{4}$, then we need $\beta > \frac{1}{2(1+\alpha)}$.
If $\alpha$ is just larger than 1, then this would require that the nuisance functions converge at a parametric rate, but for larger values of $\alpha$ this would decrease.
In Appendix~\ref{sec:sim} we inspect the impact of different growth rates for $\eta$ via simulation and find that the de-biased entropic estimator is not particularly sensitive to the choice of hyper-parameter.
As a baseline heuristic, one can first compute the de-biased BFS estimate, then choose $\eta$ to be large enough so that the difference between the de-biased BFS and entropic estimates is not substantively meaningful (e.g. an order of magnitude or two smaller than the estimates).

Finally, we can construct confidence intervals for the entropic bounds by first estimating the variances with
\[
  \label{eq:entropic_theta_est}
  \begin{aligned}
    \hat{V}_L^\eta &= \frac{1}{n}\sum_{i=1}^n \left(\<\hat{\bm{c}}(X_i) + \hat{\bm{\varphi}}^{(c)}(O_i), \hat{\bm{p}}_L^\eta(X_i)\> + \<\hat{\bm{c}}(X_i), \nabla_{\bm{b}} \hat{\bm{p}}^\eta_L(X_i) \hat{\bm{\varphi}}^{(b)}(O_i) + \nabla_{\bm{c}} \hat{\bm{p}}^\eta_L(X_i) \hat{\bm{\varphi}}^{(c)}(O_i)\> - \hat{\theta}_L^\eta\right)^2\\
    \hat{V}_U^\eta &= \frac{1}{n}\sum_{i=1}^n \left(\<\hat{\bm{c}}(X_i) + \hat{\bm{\varphi}}^{(c)}(O_i), \hat{\bm{p}}_U^\eta(X_i)\> + \<\hat{\bm{c}}(X_i), \nabla_{\bm{b}} \hat{\bm{p}}^\eta_U(X_i) \hat{\bm{\varphi}}^{(b)}(O_i) + \nabla_{\bm{c}} \hat{\bm{p}}^\eta_U(X_i) \hat{\bm{\varphi}}^{(c)}(O_i)\>- \hat{\theta}_U^\eta\right)^2,
  \end{aligned}  
\]
then constructing $1-\alpha$ level confidence intervals for the entropic bounds as
$\hat{\theta}_L^\eta - z_{1-\alpha} \sqrt{\frac{\hat{V}_L^\eta}{n}}$ and  $\hat{\theta}_U^\eta + z_{1-\alpha} \sqrt{\frac{\hat{V}_U^\eta}{n}}$.

\section{Decision making with de-biased bound estimators}
\label{sec:policy_learning}

We now turn to applying the de-biased BFS and entropic estimators to the problem of decision making, where the goal is to find a decision rule that optimizes a parameter that is partially identified via conditional linear programs.
In particular, consider a decision rule $\pi: \mathcal{X} \to [0,1]$ that maps covariates to a continuous action between 0 and 1 (e.g. a treatment probability).
We will focus on settings where the covariate-conditional objective function is given by $\bm{c}(x, \pi(x))$, so that the conditional lower bound is 
$\theta_L(x, \pi) = \min_{\bm{p} \in \mathcal{P}(x)} \<\bm{c}(x, \pi(x)), \bm{p}\>,$
and the overall expected lower bound on the objective value for a given decision rule $\pi$ is given by $\theta_L(\pi) = \E[\theta_L(X, \pi)]$.

Our goal is to find a decision rule that maximizes this lower bound across all decision rules $\pi$ in a policy class $\Pi$, i.e. $\pi^\ast \in  \argmax_{\pi \in \Pi} \ \theta_L(\pi)$. This setup accommodates a wide variety of decision making problems under uncertainty \citep{Manski2011} including maximin value and  minimax regret rules (i.e. by maximizing the minimum negative regret).
In this section we will consider finding such policies by maximizing an estimate of the lower bound $\hat{\theta}_L(\pi)$, using either the the de-biased BFS or entropic estimators from the previous section.

Throughout, we will focus on settings where the objective function $\bm{c}(x, \pi(x))$ is a known function of the covariates $x$ and the decision rule $\pi(x)$ and we will make several regularity assumptions.
\begin{assumption}[Regularity conditions]
  \label{a:regularity}
  For any $x \in \mathcal{X}$ and $o \in \mathcal{O}$, the constraint and objective functions are bounded, i.e. $\|\bm{b}(x)\|_\infty \leq B_b$ and $\|\bm{c}(x, \pi(x))\|_\infty \leq B_c$ for some constants $B_b, B_c < \infty$, as is the de-biasing function for the constraints, i.e. $\|\bm{\varphi}^{(b)}(o)\|_\infty \leq B_\varphi$ for some constant $B_\varphi < \infty$. The estimates also satisfy these bounds. Furthermore, the objective function is Lipschitz continuous in the decision rule $\pi(x)$, i.e. there exists a constant $L_\pi$ such that for all $x \in \mathcal{X}$ and $\pi_1, \pi_2 \in [0,1]$, we have $\|\bm{c}(x, \pi_1) - \bm{c}(x, \pi_2)\|_\infty \leq L_\pi |\pi_1 - \pi_2|$.
\end{assumption}
\noindent These assumptions ensure that the objective is bounded and Lipschitz continuous in the decision rule $\pi(x)$.
As a final piece of setup, we will characterize the complexity of the policy class $\Pi$ via the Rademacher complexity
$\mathcal{R}_n(\Pi) = \E\left[\sup_{\pi \in \Pi} \left|\frac{1}{n}\sum_{i=1}^n \pi(X_i) \varepsilon_i\right|\right],$
where $\varepsilon_i \in \{-1,1\}$ are independent Rademacher random variables.
\subsection{Policy learning with the de-biased BFS estimator}
\label{sec:policy_learning_primal}

For a policy $\pi$, let  $\mathcal{B}_L(x, \pi(x))$ denote the set of optimal bases given the objective vector $\bm{c}(x, \pi(x))$ and let $\widehat{B}_L(x, \pi(x))$ be the plugin estimate of an optimal basis.
Then, we estimate a policy $\hat{\pi}$ by maximizing the de-biased BFS estimator of the lower bound on the expected objective value. Denoting $\hat{\bm{p}}(x, \pi(x)) \equiv A^{-1}_{\widehat{B}_L(x, \pi(x))} \hat{\bm{b}}(x)$, the minimax optimal policy $\hat{\pi}$ is given by
\begin{equation}
  \label{eq:primal_bound_est}
  \begin{aligned}
    \hat{\pi} \in \underset{\pi \in \Pi}{\argmax}  & \underbrace{\frac{1}{n}\sum_{i=1}^n \left\< \bm{c}(X_i, \pi(X_i)), \hat{\bm{p}}(X_i, \pi(X_i)) +  A_{\widehat{B}_L(X_i, \pi(X_i))}^{-1} \hat{\bm{\varphi}}^{(b)}(O_i)\right\>}_{\hat{\theta}_L(\pi)}.
  \end{aligned}
\end{equation}
To evaluate the quality of the estimated policy $\hat{\pi}$, we compare it to the best-in-class policy that maximizes the true lower bound on the expected objective value, i.e. $\pi^\ast \in \argmax_{\pi \in \Pi} \theta_L(\pi)$, using the \emph{excess regret}, $\theta_L(\pi^\ast) - \theta_L(\hat{\pi})$.
Controlling this excess regret involves (i) linking the excess regret to the bias in the de-biased BFS estimator, and (ii) applying standard results on empirical risk minimization.

To do so, we first make a stronger version of the margin condition in Assumption~\ref{a:margin}.
\begin{assumption}[Strong margin condition]
  \label{a:margin_strong}
  For a given $\pi$ define the sub-optimality gap as
  \[\Delta_L(x, \pi(x)) = \min_{B \in \mathcal{B} \setminus \mathcal{B}^\ast_L(x, \pi(x))} \<\bm{c}(x), A_B^{-1} b(x)\> - \<\bm{c}(x, \pi(x)), A_{B^\ast_L(x, \pi(x))}^{-1} \bm{b}(x)\>,
  \]
  where if $\mathcal{B} \setminus \mathcal{B}^\ast_L(x, \pi(x)) = \emptyset$, then $\Delta_L(x, \pi(x)) = 0$.
  There exists an $\alpha > 0$, such that for all $\pi \in \Pi$, $P\left(0 < \Delta_L(X, \pi(X)) \leq t \right) \lesssim t^\alpha$.
\end{assumption}
\noindent Assumption~\ref{a:margin_strong} is primarily a requirement on the form of the objective function $\bm{c}(x, \pi(x))$.
It restricts the number of cases where $\bm{c}(x, \cdot)$ is such that it is possible to choose a policy value $\pi(x)$ with a small sub-optimality gap $\Delta_L(x, \pi(x))$. If the optimal BFS does not depend on the policy value $\pi(x)$ (e.g. if the objective function is linear in $\pi(x)$), then this ``strong'' margin condition is is not stronger than the margin condition in Assumption~\ref{a:margin}.

\begin{theorem}
  \label{thm:primal_policy}
  Under Assumptions~\ref{a:riesz}, \ref{a:regularity}, and \ref{a:margin_strong} if $\hat{\bm{b}}$ and $\hat{\bm{\varphi}}^{(b)}$ are fit on a separate, independent sample, then the excess regret of the estimated policy $\hat{\pi}$ satisfies
  $\theta_L(\pi^\ast) - \theta_L(\hat{\pi}) = O_p\left(\mathcal{R}_n(\Pi) +\|\hat{\bm{b}} - \bm{b}\|_\infty^{1 + \alpha} + r_n \right)$.
\end{theorem}
Theorem~\ref{thm:primal_policy} shows that the excess regret of the estimated policy $\hat{\pi}$ inherits most of the properties of the de-biased BFS estimator in Theorem~\ref{thm:primal_est_rate}. The error in the estimated constraint vector $\hat{\bm{b}}(x)$ again enters, with the impact of the error lessened by the margin parameter $\alpha$, and the de-biased rate $r_{n}$ enters as well.
In this case the objective vector is known, and so there are no estimation errors to consider.
There is an additional term in the excess regret that depends on the Rademacher complexity of the policy class $\Pi$, which is standard for empirical risk minimization problems. In settings where it would be possible to estimate the lower bound $\theta_L(\pi)$ using standard, asymptoic normality-based inference as in Corollary~\ref{cor:primal_est_normal}, the excess regret would be primarily impacted by the Rademacher complexity of the policy class $\Pi$, a quantity that is controllable by the analyst by choosing a simpler or more expressive policy class.

\subsection{Policy learning with the entropic estimator}
\label{sec:policy_learning_entropic}

Just as with the de-biased BFS estimator, we can use the entropic estimator to estimate the lower bound on the expected objective value for a policy $\pi$, and then find the policy that maximizes this estimate.
Fixing a regularization hyperparameter $\eta$, and denoting $\hat{\bm{p}}_L^\eta(x, \pi(x)) \equiv \bm{p}_L^\eta(\hat{\bm{b}}(x), \bm{c}(x, \pi(x)))$ as the solution to the entropic conditional linear program with policy value $\pi(x)$, we can find a policy $\hat{\pi}^\eta$ by solving
\begin{equation}
  \label{eq:entropic_bound_est}
  \begin{aligned}
    \hat{\pi}^\eta \in \underset{\pi \in \Pi}{\argmax} & \underbrace{\frac{1}{n}\sum_{i=1}^n \left\<\bm{c}(X_i, \pi(X_i)),  \hat{\bm{p}}_L^\eta(X_i, \pi(X_i)) +  \nabla_{\bm{b}} \hat{\bm{p}}_L^\eta(X_i, \pi(X_i)) \hat{\bm{\varphi}}^{(b)}(O_i) \right\>}_{\hat{\theta}_L^\eta(\pi)}.
  \end{aligned}
\end{equation}
It is more computationally straightforward to solve this optimization problem than for the de-biased BFS estimator in \eqref{eq:primal_bound_est}.
If the optimal BFS depends on the policy value $\pi(x)$, optimizing the de-biased BFS estimator in \eqref{eq:primal_bound_est} would require using gradient-free optimization methods.
In contrast, the entropic estimator of the lower bound $\hat{\theta}_L^\eta(\pi)$ is differentiable with respect to the policy value $\pi(x)$ and so it can be optimized using standard gradient-based methods---though the overall objective is non-convex. Note, however, that each objective value and gradient computation requires solving the dual to the entropic conditional linear program for each unique value of the covariates $x$ in the sample, which can be computationally expensive for iterative optimization methods.

We will compare this estimated decision policy $\hat{\pi}^\eta$ to two notions of an optimal policy: (i) the best-in-class policy that maximizes the true lower bound on the expected objective value, $\pi^\ast \in \argmax_{\pi \in \Pi} \ \theta_L(\pi)$, and (ii) the best-in-class policy that maximizes the entropic lower bound on the expected objective value, $\pi^\ast_\eta \in \argmax_{\pi \in \Pi} \ \theta_L^\eta(\pi)$. In each case we will measure the excess regret relative to the corresponding bound.

\begin{theorem}
  \label{thm:entropic_policy}
  Under Assumptions~\ref{a:riesz} and \ref{a:regularity}, if $\hat{\bm{b}}$ and $\hat{\bm{\varphi}}^{(b)}$ are fit on a separate, independent sample, for a fixed $\eta$ the excess regret of the estimated policy $\hat{\pi}^\eta$  relative to $\pi_\eta^\ast$ measured via the entropic lower bound satisfies
  $\theta_L^\eta(\pi^\ast_\eta) - \theta_L^\eta(\hat{\pi}^\eta) = O_p\left(\mathcal{R}_n(\Pi) + r_n + \|\hat{\bm{b}} - \bm{b}\|_2^2\right)$.

  Furthermore, if for each sample $X_1,\ldots,X_n$, $ \eta \geq  \frac{R_1(X_i) + R_H(X_i)}{\Delta_L(X_i, \pi(X_i))}$ for all $i=1,\ldots,n$ such that $\Delta_L(X_i, \pi(X_i)) >0$, and for all  $\pi \in \Pi$, and $\eta$ is allowed to grow with the sample size $n$, then the excess regret of the estimated policy $\hat{\pi}^\eta$ relative to $\pi^\ast$ measured via the unregularized lower bound satisfies
  $\theta_L(\pi^\ast) - \theta_L(\hat{\pi}^\eta) = O_p\left(\eta \times \mathcal{R}_n(\Pi)  + e^{-\eta} + r_n +  \eta^2 \times \|\hat{\bm{b}} - \bm{b}\|_2^2\right)$.
\end{theorem}
Theorem~\ref{thm:entropic_policy} shows that relative to the optimal decision policy for the true entropic lower bound, $\pi^\ast_\eta$, the excess regret of the estimated policy $\hat{\pi}^\eta$ is impacted by estimation errors in the constraint vector $\hat{\bm{b}}(x)$ only via the de-biased rate $r_{n}$ and squared errors $\|\hat{\bm{b}} - \bm{b}\|_2^2$. This means that the estimated policy is relatively robust to such errors and we would typically expect the complexity of the policy class $\Pi$ to drive the excess regret.
The same is true for the excess regret relative to the optimal policy for the true lower bound, $\pi^\ast$, along with the additional scaling with the regularization hyperparameter $\eta$ as with Theorem~\ref{thm:entropic_est_approx}.
The excess regret bound also requires that $\eta$ be chosen sufficiently large across all potential policies $\pi$ in the policy class $\Pi$.
As in Section~\ref{sec:entropic}, we can also extend these results to make no assumptions on $\eta$ but require the strong margin condition in Assumption~\ref{a:margin_strong} to hold, in which case the approximation error would decay like $\eta^{-1(1 + \alpha)}$.

\section{Example: impact of Medicaid access or enrollment}
\label{sec:ohie}
In this section we apply the general procedure outlined in the previous sections to measure the impact of Medicaid access and enrollment on emergency department visits \citep{finkelstein_oregon_2012, taubman_medicaid_2014}.
In Appendix~\ref{sec:sim} we also conduct a simulation study to understand the finite sample behavior of the estimators.

This natural experiment occurred in 2008 when the state of Oregon offered a Medicaid program to individuals by randomly selecting names off a waiting list. Here, being chosen off the waitlist is an instrument $Z$, while enrolling in Medicaid is the treatment $D$. We will focus on the number of emergency department (ED) visits as the outcome $Y$, it is discrete with $L=23$ levels corresponding to the number of ED visits in the past 6 months, ranging from 0 to 22 visits (the maximum censored value).
Our dataset consists of the 13,019 individuals who lived in areas served by one of 12 hospitals for which the authors received ED records, were sampled to be part of an in-person follow-up survey, and had no other members in the household.
For this group, the probability of being selected off the waitlist was uniform across individuals (with 48\% selected).
We include basic demographic variables, pre-period measures of health and health-care utilization, and pre-period enrollment in public assistance programs in our covariate vector $X$. Here, strong ignorability of $Z$ is satisfied by design; we also assume the exclusion restriction that being selected from the waitlist does not affect the outcome except through Medicaid enrollment.

\paragraph{Impacts of Medicaid access.}
First, we focus on the impacts of access to Medicaid by being chosen off the waitlist ($Z = 1$) versus not ($Z = 0$).
We compare the observed randomization procedure to an oracle policy that could observe each individual's potential outcomes (i.e. $Y(D(0))$ and $Y(D(1))$) and can select individuals from the waitlist in order to minimize the number of ED visits (i.e. when $Y(D(1)) \leq Y(D(0))$).
We do this by estimating bounds on (i) the number of additional ED visits under the observed distribution relative to the counterfactually optimal policy, $E[Y] - \E[\min_z Y(D(z))]$, (ii) the proportion of individuals who experience any additional ED visits under the observed distribution relative to the oracle optimal policy---$P(Y \neq \min_{z} Y(D(z)))$, and (iii) the regret under the powerlaw value function from Section~\ref{sec:examples}, ${V^\lambda(\bbone\{Y(D(1)) \leq Y(D(0))\}) - V^\lambda(P(Z = 1 \mid X))}$.
Because we would like to minimize the number of ED visits, we set the utility of $Y$ ED visits to be the maximum number of ED visits minus $Y$, $U(Y) = 22 - Y$.
We estimate the nuisance functions using multinomial gradient boosted decision trees and use 3-fold cross-fitting.
To make the hyperparameter $\eta$ comparable across estimands, we scale the objective of each linear program to have a maximum value of 1 i.e. $\|\bm{c}\|_\infty = 1$.

\begin{figure}[t]
  \centering
  \begin{subfigure}[t]{0.45\textwidth}
    \includegraphics[width=0.7\textwidth]{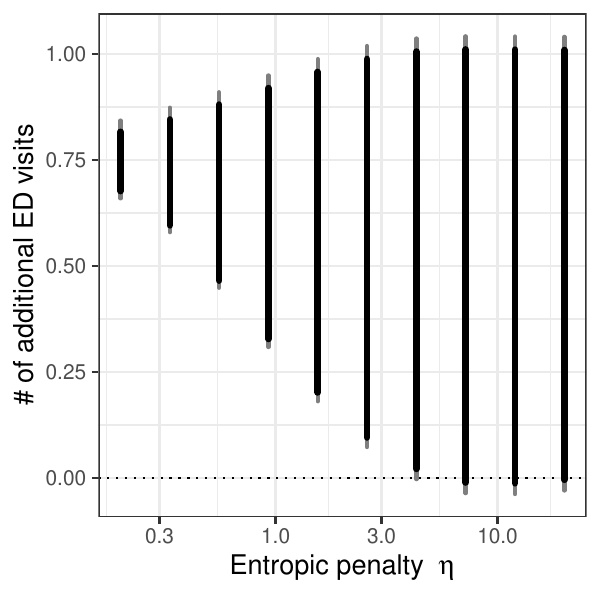}
  \end{subfigure}
  \begin{subfigure}[t]{0.45\textwidth}
    \includegraphics[width=0.7\textwidth]{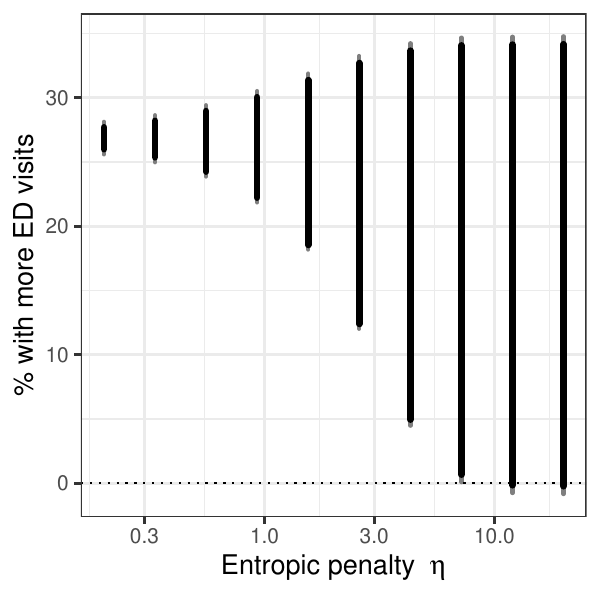}
  \end{subfigure}
  \caption{Estimated bounds on the expected number of additional ED visits and the percentage of individuals with any more ED visits under the randomization policy versus the ED-minimizing waitlist policy, using varying levels of entropic regularization $\eta$ (x-axis on log scale). The thin grey lines correspond to 95\% confidence intervals.}
  \label{fig:ent_penalty}
\end{figure}

Figure~\ref{fig:ent_penalty} shows the estimated lower and upper bounds on the first two estimands
using the de-biased entropic estimator with varying levels of entropic regularization $\eta$.
Setting $\eta$ to be small assumes that the potential outcomes are nearly independent, and we estimate bounds narrowly centered around an additional 0.75 ED visits per person under the randomization policy vs the ED-minimizing waitlist policy (relative to an observed average of $1.04 \pm 0.04$ visits per person) and 25\% of individuals experiencing any more ED visits relative to the oracle.
Decreasing the level of regularization, the estimates stabilize at 
between $[0.02, 1]$ additional ED visits per person (95\% CI $[0, 1.04]$)  and between [0\%, 34\%] of individuals experiencing any more ED visits (95\% CI $[0\%, 34.4\%]$).

Figure~\ref{fig:powerlaw} shows the estimated regret of the randomization policy versus the ED-minimizing waitlist policy using the power-law collective utility function as the power parameter $\lambda$ changes.
We see that the upper bound on the regret is larger the closer the utility function is to the expected utility ($\lambda \to 1$), and it decreases as $\lambda$ decreases to $-1$ and is more inequality-averse.
This indicates that  randomizing incurs less potential regret for more inequality-averse the utility functions.
These bounds do not rule out that randomly selecting individuals from the waitlist led to higher than the minimal number of ED visits. However, they do indicate that there is potentially substantial room for improvement on these policies at moderate to low levels of inequality aversion.

\begin{figure}
  \centering
  \begin{subfigure}[t]{0.45\textwidth}
  \includegraphics[width=0.7\textwidth]{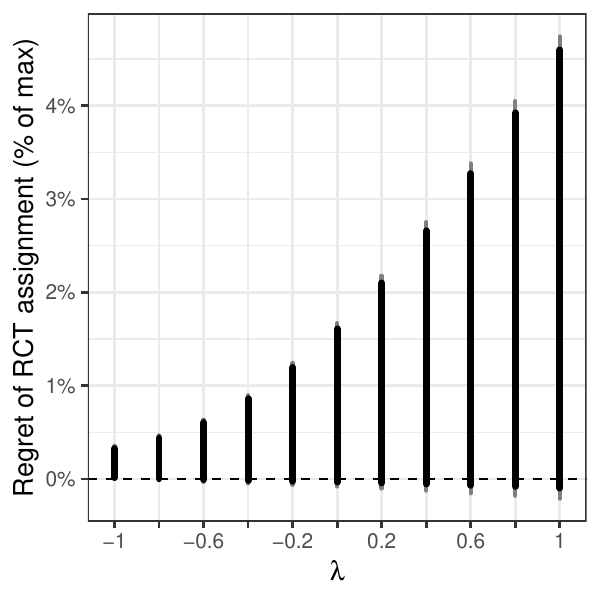}
  \caption{Regret of randomization policy.}
  \label{fig:powerlaw}
  \end{subfigure}
  \begin{subfigure}[t]{0.45\textwidth}
    \includegraphics[width=0.7\textwidth]{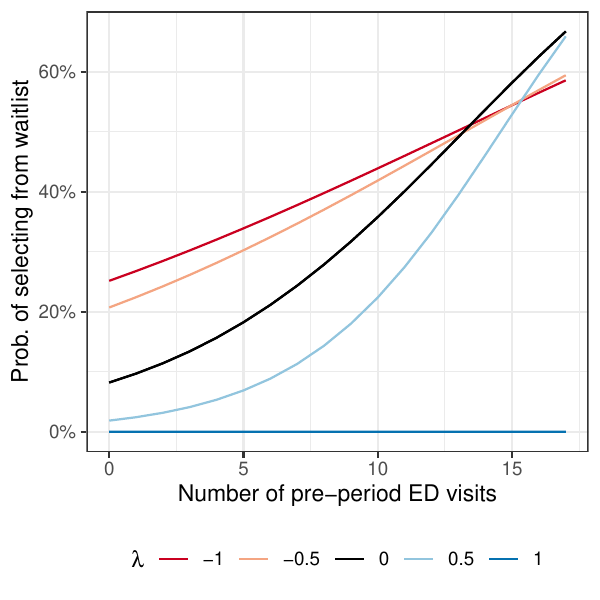}
  \caption{Minimax regret waitlist allocation rules.}
  \label{fig:policies}
  \end{subfigure}

  \caption{(a) Estimated bounds on the regret of the randomization policy versus the ED-minimizing waitlist policy, using the power-law collective utility function as the power parameter $\lambda$ changes. The thin grey lines correspond to 95\% confidence intervals. (b) Minimax regret waitlist allocation rules as a function of the number of pre-period ED visits.}
\end{figure}

\begin{figure}[t]
  \centering
  \begin{subfigure}[t]{0.45\textwidth}
    \includegraphics[width=0.75\textwidth]{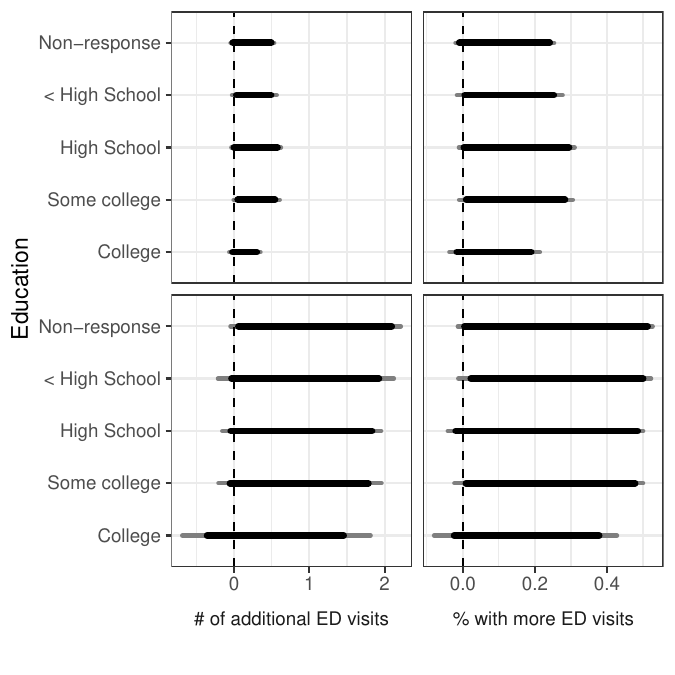}
    \caption{Randomizing access to Medicaid.}
    \label{fig:ed_het_plot_1}
  \end{subfigure}
  \begin{subfigure}[t]{0.45\textwidth}
    \includegraphics[width=0.75\textwidth]{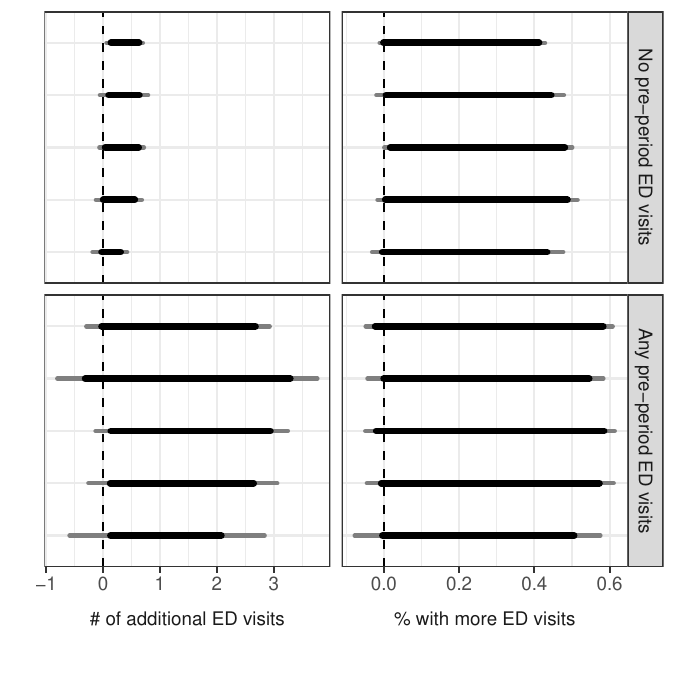}
    \caption{Natural off-waitlist Medicaid enrollment.}
    \label{fig:ed_het_plot_1_iv}
  \end{subfigure}
  \caption{Bounds on the expected number of additional ED visits and the percentage of individuals with any more ED visits for (a) randomizing access to Medicaid vs the ED-minimizing waitlist policy, and (b) natural Medicaid uptake conditional on access vs the ED-minimizing coverage policy. Thin grey lines represent 95\% confidence intervals. ``Non-response'' indicates that the individual did not respond to the in-person survey.}
  \label{fig:het_plot1}
\end{figure}

These ranges also mask considerable variation in the level of uncertainty.
Figure~\ref{fig:ed_het_plot_1} shows the estimated bounds using the least amount of regularization, stratified by education and whether the individual had any pre-randomization ED visits.
We see that for individuals that had no ED visits in the pre-period, the upper bound on the expected number of additional ED visits is much smaller than for individuals that had at least one visit; a similar trend holds for the proportion of individuals with any additional visits. There is limited heterogeneity by level of education.
In Appendix~\ref{sec:condl_bounds} we consider estimating bounds conditional on continuous measures using a pseudo-outcome regression approach.

Taken together, these results imply that there is potential for an optimized waitlist rule.
To evaluate this, we consider a stylized policy learning problem where the goal is to reduce ED visits (this is not the only goal of Medicaid).
We learn a minimax regret waitlist allocation rule relative to the oracle policy under power-law collective utility functions
using the de-biased entropic estimator of the regret relative to the oracle with $\eta = 100$. We parameterize the rule as a logistic function of the number of pre-period ED visits, and find the minimax optimal parameters by solving Equation~\eqref{eq:entropic_bound_est}.
Figure~\ref{fig:policies} shows the estimated minimax allocation rules using a range of power-law collective utility functions.
With $\lambda = 1$, the regret is minimized by minimizing the conditional average effect of $Z$ on $Y$; this selects nobody off the waitlist because the average effect is negative across the range of pre-period ED visits.
As $\lambda$ decreases, the minimax regret rule is no longer point-identifiable, and the estimated minimax regret waitlist policies assign a higher selection probability to individuals with a higher number of pre-period ED visits, with the slope flattening for more inequality-averse utility functions.

\paragraph{Impacts of Medicaid enrollment.}

Next, we consider the same question but for enrollment in Medicaid $D$ rather than selection from the waitlist $Z$. In particular, if there were no capacity constraints, and we could enroll individuals into Medicaid directly based off their potential outcomes ($Y(0)$ and $Y(1)$), how would this compare to simply taking everyone off the waitlist (i.e. $Y(D(1))$)? We make this comparison in two ways: (i) the number of additional ED visits $E[Y(D(1))] - \E[\min_d Y(d)]$, and the proportion of individuals with additional ED visits under the no-waitlist policy, $P(Y(D(1)) \neq \min_{d} Y(d))$. 
Here we use the de-biased BFS estimator with nuisance functions again fit using multinomial gradient boosted decision trees and 3-fold cross-fitting.
For $14\%$ of units in the sample, the LP was not feasible, implying the exclusion restriction is not possible with those estimated values. We exclude these units from the analysis.

Overall, we estimate that if everyone were to be taken off the waitlist
there would be an additional $[0.06, 1.23]$ ED visits per person relative to the oracle policy that enrolls individuals into Medicaid based on their potential outcomes (95\% CI $[0.04, 1.28]$), and between $[0\%, 48.3\%]$ of individuals would experience any more ED visits relative to the oracle policy (95\% CI $[0\%, 49\%]$).
Again there is substantial heterogeneity in these upper bounds.
Figure~\ref{fig:ed_het_plot_1_iv} shows the estimated bounds stratified by education and whether the individual had an ED visits in the 14 months pre-randomization.
As with selection off the waitlist, we see that the bounds are much smaller for individuals with no  ED visits.

\section{Discussion}

In this paper we proposed a general framework for estimating bounds on partially-identified parameters that satisfy a series of linear constraints, conditional on auxiliary covariates, leading to a series of linear programs that are conditional on the covariates.
We developed two estimators for these bounds.
The de-biased BFS estimator directly solves the conditional linear programs for each unique covariate value in the data, and then extracts readily available information from standard LP solvers to construct de-biased estimates of the bounds.
This estimator is related to existing methods for covariate-assisted bounds, but eschews the need to analytically derive the bound or fully enumerate a combinatorial number of potential solutions.

A key issue with solving the linear programs directly is that the solutions are not smooth in the constraint and objective vectors, so small errors in these nuisance parameters can lead to large changes in the estimated bounds.
The entropic regularized estimator instead targets a smoothed version of the bounds, including an entropy penalty to the conditional linear programs that ensures that the solutions are smooth in the nuisance parameters and so we can estimate them in a de-biased manner.
Although this induces approximation error, 
we can likely expect the approximation error to be small relative to the standard error of the estimates if we allow the level of regularization to shrink appropriately.
Finally, we extend these results to decision making problems where the value of a decision policy is only partially identified by a set of conditional linear constraints. We show that taking an empirical risk minimization approach and optimizing bounds estimated via the de-biased BFS or entropic estimators results in a policy with excess regret that has an analogous structure to the bias of the estimators.

An important limitation of these approaches is that they require a finite number of decision variables and so for instance cannot accommodate continuous outcomes.
While it may be possible to approximate some continuous outcomes with a finite number of discrete levels, this may not be feasible in all cases.
One potential path forward is to leverage the duality of the linear programs to adapt the de-biased BFS and entropic estimators to return valid bounds even if an approximation is used, similar in spirit to the dual estimator proposed by \cite{ji_model-agnostic_2024}.
In addition, while we have used an entropy penalty to regularize the LPs, other forms of regularization may have different benefits. For example, it may be possible to use barrier-based interior point methods to solve the LPs, and directly account for regularization bias via a double de-biased estimator following recent results in \citet{liu_beyond_2025}. Such an approach could lessen the need to select a regularization hyperparameter.
Finally, another limitation is that we have sidestepped the question whether and when conditioning on covariates information leads to narrower bounds.

\clearpage
% \singlespacing
% \bibliographystyle{apalike}
% \bibliography{../citations}

\appendix

\renewcommand\thefigure{\thesection.\arabic{figure}}    
\setcounter{figure}{0}  
\numberwithin{figure}{section}
\renewcommand\theassumption{\thesection.\arabic{assumption}}    
\setcounter{assumption}{0}  
\renewcommand\theexample{\thesection.\arabic{example}}    
\setcounter{example}{0}
\numberwithin{equation}{section}
\setcounter{theorem}{0}
\numberwithin{theorem}{section}
\setcounter{proposition}{0}
\numberwithin{proposition}{section}
\setcounter{lemma}{0}
\numberwithin{lemma}{section}
\setcounter{corollary}{0}
\numberwithin{corollary}{section}
\setcounter{table}{0}
\numberwithin{table}{section}

\section{Examples from Section~\ref{sec:examples}, revisited}
\label{sec:examples_method}
\paragraph{Functions of the joint distributions of potential outcomes.}

Recall that for estimands involving the joint distribution of potential outcomes, the constraints involve the marginal distributions of each outcome. There are various ways to index the constraints, but say that the first $j=1,\ldots,L-1$ constraints correspond to the the marginal distribution of $Y(0)$, the next $j=L,\ldots, 2L-2$ constraints correspond to the marginal distribution of $Y(1)$, and so on.
In this case, we can define $b_j(x) = m_{bj}(d_j, x) \equiv P(Y = y_j \mid D = d_j, X = x)$ where $d_j$  and $y_j$ index the treatment and outcome levels corresponding to the $j$\super{th} constraint, respectively.
Defining the propensity score as $e(d_j, x) = P(D = d_j \mid X = x)$,
a de-biasing function can be constructed from the Riesz representor (or efficient influence function) as $\varphi^{(b)}_{j}(O; b_j) = \frac{\bbone\{D = d_j\}}{P(D = d_j \mid X)}\left(\bbone\{Y = y_j\} - b_j(x)\right)$, the inverse propensity weighted residual.
Here the de-biased rate is $r_{n} = \|\hat{\bm{b}} - \bm{b}\|_2 \|\hat{\bm{e}} - \bm{e}\|_2$.
When the objective vector $\bm{c}(x)$ is known, Theorems~\ref{thm:primal_est_rate} and \ref{thm:entropic_est_approx} give that the bias of the BFS and entropic estimators depend on the product of the
errors of the estimates of the outcome model $b_j(x)$ and the propensity score $e(d_j, x)$.
However,  the estimator is not ``doubly robust:'' even if the propensity score were known as in a randomized experiment, the error of the outcome model still affects the bias, either through the margin term in Theorem~\ref{thm:primal_est_rate} or the second order term in Theorem~\ref{thm:entropic_est_approx}.

For a case where the objective vector $\bm{c}(x)$ is not known, consider estimating the proportion of individuals not assigned an optimal treatment $P(Y \neq \max_{d} Y(d))$. Recall that in this case the elements of $\bm{c}(x)$ correspond to $\sum_{d'}e(d', x)\bbone\{y_{d'} \neq \max_{d} y_d^{(k)}\}$, where $(y_0^{(k)},\ldots,y_{M-1}^{(k)})$ are the $k$\super{th} combination of possible potential outcomes. Here,
$\bm{c}(x)$ is a function of the propensity score, and we can estimate it as
$\hat{c}_k(x) = \sum_{d'}\hat{e}(d', x)\bbone\{y_{d'} \neq \max_{d} y_d^{(k)}\}$, where $\hat{e}(d', x)$ is the estimated propensity score.
The corresponding de-biasing function is a function of the residuals of the propensity score: $\varphi_k^{(c)}(O; c_k) = \sum_{d'}\left(\bbone\{D = d'\} - e(d', x)\right)\bbone\{y_{d'} \neq \max_{d} y_d^{(k)}\}$.
In this case, $\hat{\bar{\varphi}}_k^{(c)}(x) - \bar{\varphi}_k^{(c)}(x; \hat{c}_k) = 0$, and so the de-biased rate for $\bm{c}(\cdot)$ is simply $r_{n} = 0$.
Note, however, that Theorems~\ref{thm:primal_est_rate} and \ref{thm:entropic_est_approx} show that estimation error in the objective function---and hence, the propensity score---still affects the bias of the BFS and entropic estimators, as it enters both through the margin term in Theorem~\ref{thm:primal_est_rate} and the second order term in Theorem~\ref{thm:entropic_est_approx}.

\paragraph{Instrumental variables.}

In the case of instrumental variables, recall that the constraints involve the joint distribution of the outcome $Y$ and treatment $D$, conditioned on the instrument $Z$ and covariates $X$.
The constraint vector is $b_j(x) = P(Y = y_j, D = d_j \mid Z = z_j, X = x)$, where $y_j, d_j$, and $z_j$ index the levels of the outcome, treatment, and instrument corresponding to the $j$\super{th} constraint, respectively.
Defining $e(z_j, x) = P(Z = z_j \mid X = x)$ as the instrument propensity score,
the de-biasing function is again constructed from the Riesz representor as 
$\varphi_j^{(b)}(O; b_j) = \frac{\bbone\{Z = z_j\}}{P(Z = z_j \mid X)}\left(\bbone\{Y = y_j, D = d_j\} - b_j(x)\right)$, the inverse instrument probability weighted residual.
Here the de-biased rate is the product of the errors in the estimates of the joint distribution of $Y$ and $D$ and the instrument assignment probabilities.

Now consider estimating bounds on the proportion of individuals not selecting an optimal treatment given a particular level of the instrument $P(Y(D(z)) \neq \max_{d} y_d)$.
In Section~\ref{sec:examples} we discussed that the elements of $\bm{c}(x)$ correspond to $\sum_{d'}P(D = d'\mid Z = z, X = x)\bbone\{y_{d'} \neq \max_{d} y_d^{(k)}\}$, where $(y_0^{(k)},\ldots,y_{M-1}^{(k)})$ are the $k$\super{th} combination of possible potential outcomes and treatment levels.
We can estimate it as $\hat{c}_k(x) = \sum_{d'}\hat{P}(D = d'\mid Z = z, X = x)\bbone\{y_{d'} \neq \max_{d} y_d^{(k)}\}$, where $\hat{P}(D = d'\mid Z = z, X = x)$ is the estimated treatment selection probability given the level of the instrument and the value of the covariates.
The corresponding de-biasing function is the instrument probability weighted residual $\varphi_k^{(c)}(O; c_k) = \sum_{d'}\frac{\bbone\{Z = z\}}{P(Z = z \mid X)}\left(\bbone\{D = d'\} - P(D = d' \mid Z = z, X)\right)\bbone\{y_{d'} \neq \max_{d} y_d^{(k)}\}$,
and so the de-biased rate will be a product of the error in the estimates of the treatment and instrument assignment probabilities.

\section{Simulation study}
\label{sec:sim}

For this simulation study, we consider a setting with a single binary treatment $D \in \{0,1\}$ and an ordinal outcome with $L$ levels scaled to be between 0 and 1: $Y \in \{0, 1 / (L-1), 2 / (L-1), \ldots,1\}$.
The goal is to estimate the probability that the observed outcome is worse than the best potential outcome,
$\theta = P(\max\{Y(1), Y(0)\} > Y) = \E[D \bbone\{Y(0) > Y(1)\} + (1 - D) \bbone\{Y(1) > Y(0)\}]$.
Note that this corresponds to a setting where both the conditional constraint and objective vectors are unknown.
Our focus is on understanding the relative performance between the de-biased BFS and entropic estimators as 
(a) the level of regularization changes, (b) the sample size changes, (c) the estimation error of the nuisance functions changes, and (d) the complexity of the conditional linear program changes.
For the latter, we change the number of outcome levels. More levels of the outcome add more decision variables and constraints into the conditional linear program, which changes the complexity.

\begin{figure}[t]
  \centering
  \includegraphics[width=0.8\textwidth]{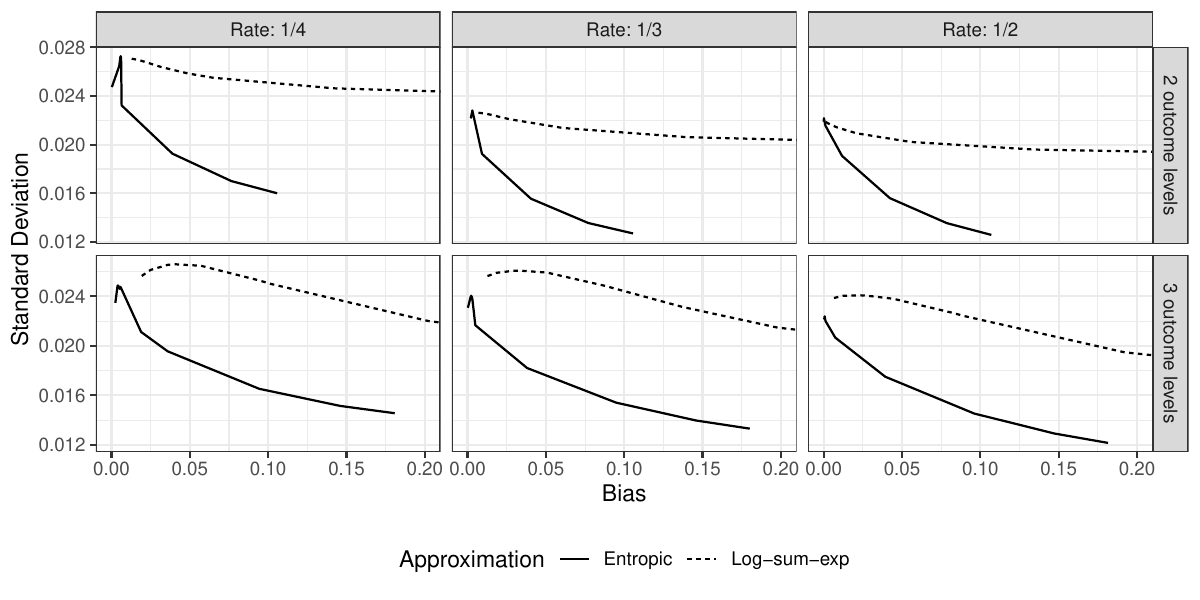}
  \caption{Bias versus standard deviation of the de-biased entropic estimator and log-sum-exp approximation of the lower bound as the hyper-parameter $\eta$ changes, for varying rates $r$ and number of outcome levels $L$, with a sample size of $n = 1000$.}
  \label{fig:bias_variance_plot}
\end{figure}

The data generating process for the simulation study is as follows. For each simulated dataset, a covariate $X_i$ is drawn independently from a standard normal distribution.
Control and treatment potential outcomes are drawn as $Y_i(0) = \varepsilon_i(0)$ and $Y_i(1) = \frac{1}{2.4}X^2 + 1.2X + \varepsilon_i(1)$, where $\varepsilon_i(0), \varepsilon_i(1)$ are multivariate normal random variables with mean zero, variance 1, and covariance $\rho = 0.9$, so the marginal variance of $Y_i(0)$ is 1 and the marginal variance of $Y_i(1)$ is 2.
Both $Y(0)$ and $Y(1)$ are then discretized into $L$ ordinal categories using quantile-based breaks from the standard normal distribution so that a roughly equal proportion of units have $Y_i(0) = \frac{\ell}{L-1}$ for each $\ell=0,\ldots,L-1$ and more units have values of $Y_i(1)$ closer to 0 and 1.
From this, we can analytically compute conditional outcome models $m_j(d, x) = P(Y = y_j \mid D = d, X = x)$ for each outcome level $y_j$.
Treatments are drawn with a probability equal to $e(x) = 1/(1 + \exp(x))$.
To control the estimation error of the nuisance functions in the simulations, we adapt the construction in \citet{levis_assisted_2023} and perturb the true outcome models with random noise $\log \tilde{m}_j(d, X_i) = \log m_j(d, X_i) + \varepsilon_{ijd}$, where $\varepsilon_{ijd}$ are independent Normal random variables with mean equal to $2.25 n^{-r}$ and variance equal to $2.25^2 n^{-2r}$, where $r$ is a parameter that controls the estimation error rate of the outcome model. The ``estimated'' outcome model is then normalized as $\hat{m}_j(d, X_i) = \frac{\exp\left(\log \tilde{m}_j(d, X_i)\right)}{\sum_{j'} \exp\left(\log \tilde{m}_{j'}(d, X_i)\right)}$. Similarly, we generate an ``estimated'' propensity score as $\hat{e}(X_i) = \left(1 + \exp\left(-\log(e(X_i) / (1  - e(X_i))) + \varepsilon_{ie} \right)\right)$, where $\varepsilon_{ie}$ are also independent Normal random variables with mean equal to $2.25 n^{-r}$ and variance equal to $2.25^2 n^{-2r}$.

\begin{figure}[t]
  \centering
  \includegraphics[width=0.8\textwidth]{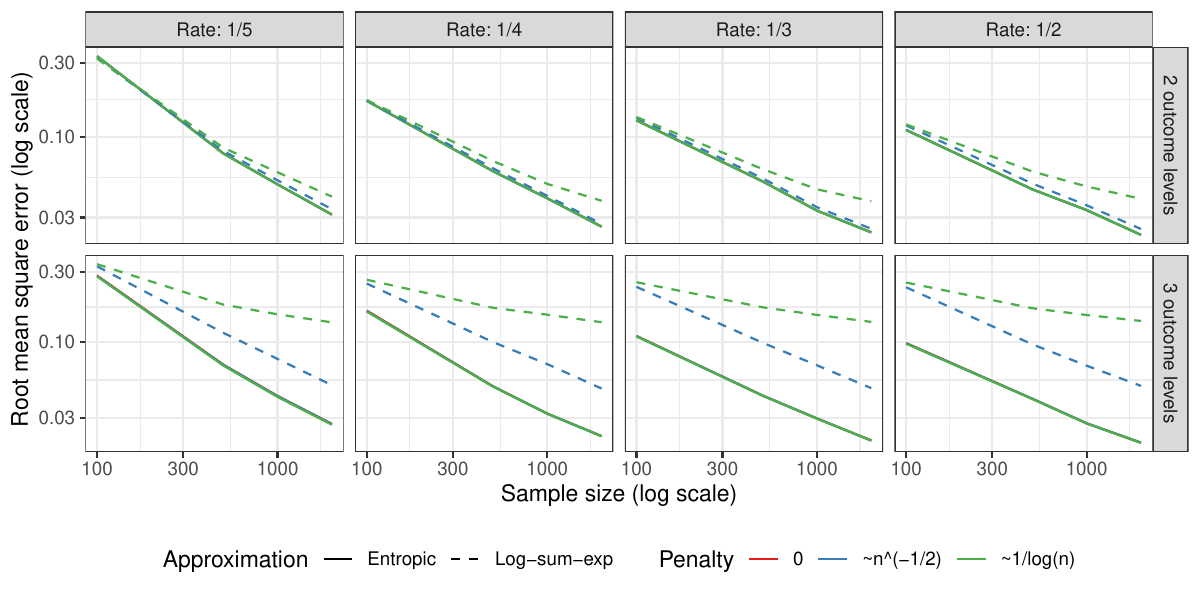}
  \caption{Root mean square error versus sample size of the de-biased BFS and entropic estimators, along with the log-sum-exp approximation, with penalties $\eta^{-1}$ and $\xi$ scaling with $\frac{1}{\sqrt{n}}$ and $\frac{1}{\log(n)}$,  for varying rates $r$ and number of outcome levels $L$ ($\eta^{-1} = 0$ corresponds to the de-biased BFS estimator).}
  \label{fig:rmse_plot}
\end{figure}

For different combinations of the sample size $n$, number of outcome levels $L$ and rates $r$, we fit the de-biased BFS and entropic estimator with a range of hyper-parameters to estimate the lower bound $\theta_L$.
We also compute estimates based on the log-sum-exp approximation discussed in the instrumental variables setting by \citet{levis_assisted_2023}.
In the general setting considered here for the lower bound, this would iterate over all basic feasible solutions $B \in \mathcal{B}$ and compute the approximation as $
-\frac{1}{\xi}\log \sum_{B \in \mathcal{B}} \exp\left(-\xi \left\<\bm{c}(x), A_B^{-1}\bm{b}(X)\right\> \right)$, where  $\xi$  is a hyperparameter that controls the approximation.
In this simulation, we  estimate this smooth approximation in a de-biased manner; however, note that it requires explicitly computing all basic feasible solutions.

Figure~\ref{fig:bias_variance_plot} shows the bias versus standard deviation of the de-biased entropic estimator and log-sum-exp approximation of the lower bound as the hyper-parameters $\eta$ and $\xi$ change, for varying rates $r$ and number of outcome levels $L$, with a sample size of $n = 1,000$. 
This exhibits the trade-off between bias and variance for the de-biased entropic estimator as the level of regularization $\eta$ changes.
Across all specifications of the hyper-parameters, the de-biased entropic estimator pareto dominates the log-sum-exp approximation, with a more favorable bias-variance trade-off and lower minimal values of both the bias and variance, especially with $L = 3$ outcome levels.

\begin{figure}[h]
  \centering
  \includegraphics[width=0.8\textwidth]{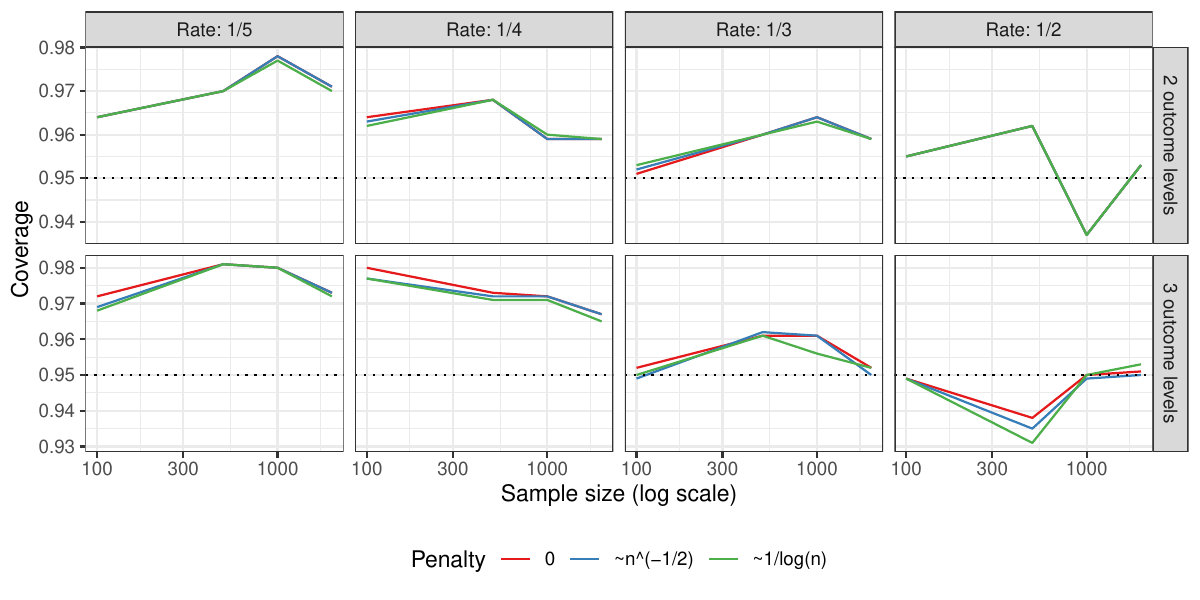}
    \caption{Coverage versus sample size of the de-biased BFS and entropic estimators, with penalty $\eta^{-1}$ scaling with $\frac{1}{\sqrt{n}}$ and $\frac{1}{\log(n)}$,  for varying rates $r$ and number of outcome levels $L$ ($\eta^{-1} = 0$ corresponds to the de-biased BFS estimator). Dotted line indicates nominal coverage of 95\%.}
  \label{fig:coverage_plot}
\end{figure}

Figure~\ref{fig:rmse_plot} shows the root mean square error (RMSE) of the de-biased BFS and entropic estimators, along with the log-sum-exp approximation, as the sample size and nuisance function estimation error rate change, using hyperparameters $\eta^{-1}$ and $\xi$ that scale with $\frac{1}{\sqrt{n}}$ and $\frac{1}{\log(n)}$.
We see that the de-biased entropic estimator is not particularly sensitive to the scaling of the hyper-parameter $\eta$: both choice behave roughly the same as the de-biased BFS estimator.
In addition, both the de-biased BFS and entropic estimators are robust to the estimation error rate of the nuisance functions, and decrease quickly with the sample size (empirically the RMSE decreases with the sample size at a $\frac{1}{\sqrt{n}}$ rate or faster).
In contrast, the log-sum-exp approximation is sensitive to the scaling of the hyper-parameter $\xi$. With three outcome levels, choosing $\xi \sim \frac{1}{\log n}$ leads to a slower rate of convergence, and while choosing $\xi \sim \frac{1}{\sqrt{n}}$ leads to a comparable rate of convergence as the de-biased BFS and entropic estimators, the RMSE is still larger at every sample size.
Figure~\ref{fig:coverage_plot} shows that the empirical coverage of the one-sided confidence intervals for the de-biased BFS and entropic estimators is close to the nominal level.

\begin{figure}[t]
  \centering
  \includegraphics[width=0.8\textwidth]{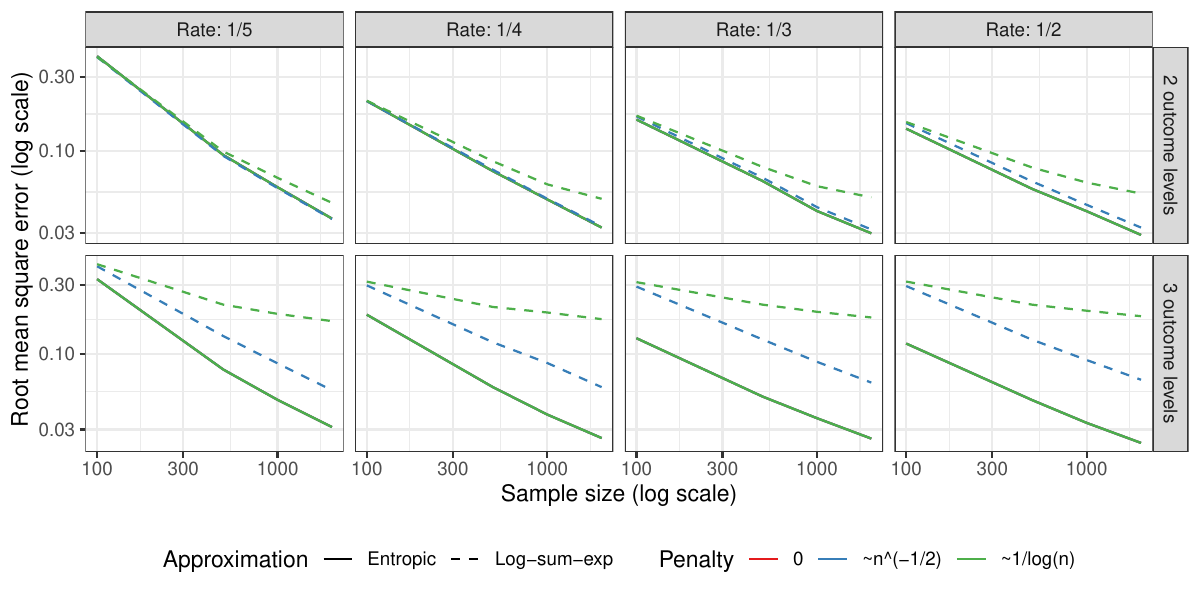}
  \caption{Root mean square error versus sample size of the de-biased BFS and entropic estimators, along with the log-sum-exp approximation, with penalties $\eta^{-1}$ and $\xi$ scaling with $\frac{1}{\sqrt{n}}$ and $\frac{1}{\log(n)}$,  for an identified estimand, varying rates $r$ and number of outcome levels $L$ ($\eta^{-1} = 0$ corresponds to the de-biased BFS estimator).}
  \label{fig:rmse_plot_id}
\end{figure}

Finally, we also consider an identified estimand, the ATT $\theta = \E[e(X) (Y(1) - Y(0))] / \E[e(X)]$.
Figure~\ref{fig:rmse_plot_id} shows the RMSE versus the sample size for varying rates and number of outcome levels.
Because this estimand is point-identified, any feasible solution is optimal, and so the solutions are not unique.
However, there are no degeneracy issues and the objective vector is known up to a scaling by the propensity score $e(x)$
As we expect, the entropic estimators return numerically the same result as the de-biased BFS estimator regardlesss of the level of regularization. We also see that both the de-biased BFS and entropic estimators still perform better than the log-sum-exp approximation when the number of outcome levels is larger.

\section{Estimating conditional bounds}
\label{sec:condl_bounds}
In addition to estimating bounds on the partially identified parameter $\theta$, we are often interested in understanding heterogeneity in the parameter across the covariate space. This can be important, for instance, when estimating treatment rules with partial identification, where various forms of optimality (e.g. minimax, maximin) are defined in terms of bounds on a conditional parameter \citep{Pu2021, benmichael_asymmetric_2024}.

Let $V \in \mathcal{V} \subseteq \mathcal{X}$ denote a (subset of) covariates of interest, and let $\theta(v) \equiv \E[\<\bm{c}(X), \bm{p}^\ast(X)\> \mid V = v]$ be the conditional parameter. Then, our interest is in estimating the conditional lower and upper bounds:
\begin{equation}
  \label{eq:partial_id_condl}
  \theta_L(v) \equiv \E\left[\min_{\bm{p} \in \mathcal{P}(x)} \<\bm{c}(X), \bm{p}\> \mid V = v\right] \qquad \theta_U(v) = \E\left[\max_{\bm{p} \in \mathcal{P}(x)} \<\bm{c}(X), \bm{p}\> \mid V = v\right].
\end{equation}
To estimate the conditional bounds, we can adjust the de-biased BFS and entropic estimators in Sections~\ref{sec:primal_estimation} and \ref{sec:entropic} to condition on the covariates $V$ by regressing \emph{pseudo-outcomes} on the covariates $V$.
For example, for the BFS estimator in Section~\ref{sec:primal_estimation}, we can estimate the conditional lower bound as
\begin{equation}
  \label{eq:primal_bound_est_condl}
  \begin{aligned}
    \hat{\theta}_L(v) &= \widehat{\E}\left[\< \hat{\bm{c}}(X) + \hat{\bm{\varphi}}^{(c)}(O), \hat{\bm{p}}_L(X)\> + \< \hat{\bm{c}}(X), A_{\widehat{B}_L(X)}^{-1}  \hat{\bm{\varphi}}^{(b)}(O)\>\mid V = v\right],
  \end{aligned}
\end{equation}
and for the entropic estimator in Section~\ref{sec:entropic}, we can estimate the conditional lower bound as
\begin{equation}
  \label{eq:entropic_bound_est_condl}
  \begin{aligned}
    \hat{\theta}_L^\eta(v) &= \widehat{\E}\left[\< \hat{\bm{c}}(X) + \hat{\bm{\varphi}}^{(c)}(O), \hat{\bm{p}}_L^\eta(X)\> + \<\hat{\bm{c}}(X), \nabla_{\bm{b}} \hat{\bm{p}}^\eta_L(X) \hat{\bm{\varphi}}^{(b)}(O) + \nabla_{\bm{c}} \hat{\bm{p}}^\eta_L(X) \hat{\bm{\varphi}}^{(c)}(O)\mid V = v\right],
  \end{aligned}
\end{equation}
where $\widehat{\E}[f(X) \mid V = v]$ denotes an estimator for the conditional expectation of $f(X)$ given $V = v$.

Broadly, pseudo-outcome-based estimators such as these inherit the properties of the de-biased estimators off which they are based \citep{Kennedy2022_drlearner}.
A consequence of the results in Theorems~\ref{thm:primal_est_rate} and \ref{thm:entropic_est_rate} and the general analysis of pseudo-outcome-based estimators in \citet{Kennedy2022_drlearner} is that the error for the conditional bounds estimators depend on the errors in the nuisance function estimates as in  Theorems~\ref{thm:primal_est_rate}  and \ref{thm:entropic_est_rate} as well as an ``oracle'' term that depends on the estimation error of the conditional expectation $\widehat{\E}[\cdot \mid V = v]$.

\begin{figure}[t]
  \centering
  \begin{subfigure}[t]{0.45\textwidth}
    \includegraphics[width=\textwidth]{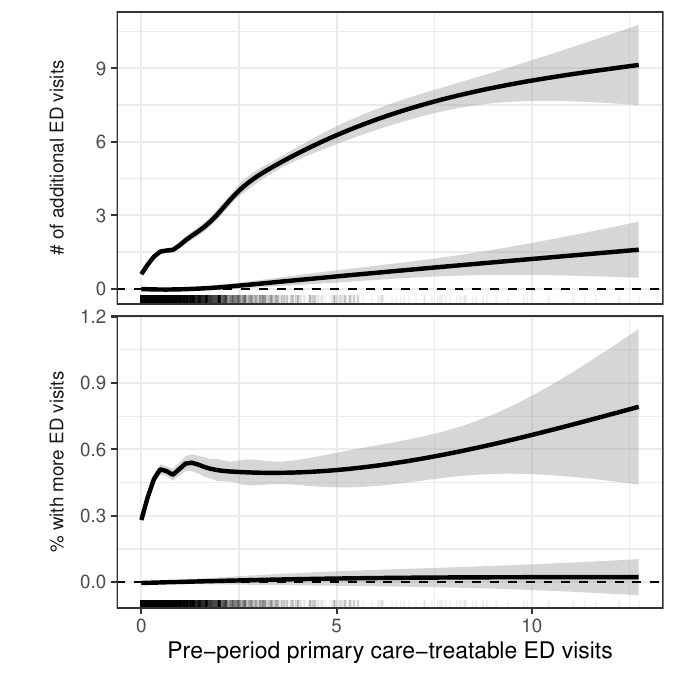}
    \caption{Randomizing access.}
    \label{fig:ed_het_plot_2}
  \end{subfigure}
  \begin{subfigure}[t]{0.45\textwidth}
    \includegraphics[width=\textwidth]{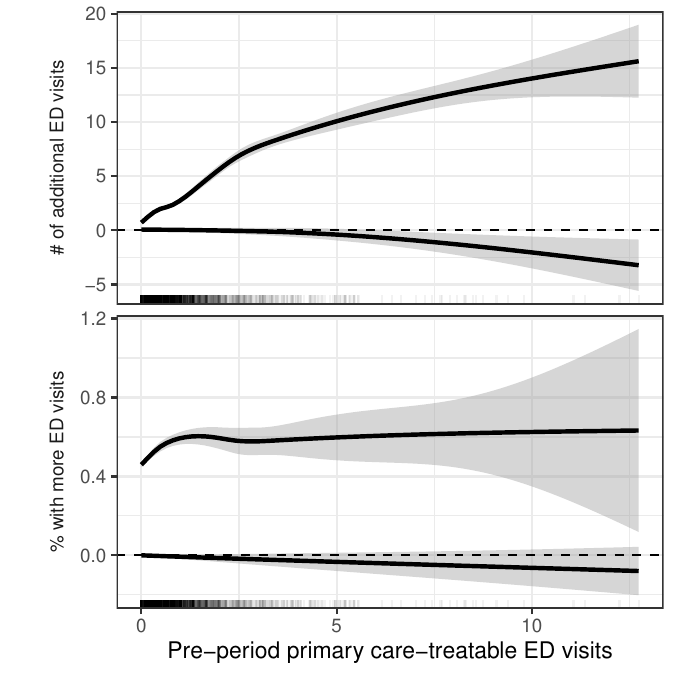}
    \caption{Medicaid enrollment if off waitlist.}
    \label{fig:ed_het_plot_2_iv}
  \end{subfigure}
  \caption{Conditional bounds on the expected number of additional ED visits and the percentage of individuals with any more ED visits versus the (modeled) number of primary-care-treatable ED visits during the pre-period for (a) randomizing access to Medicaid vs providing access based on potential outcomes to minimize ED visits, and (b) natural Medicaid uptake conditional on access vs enforcing Medicaid coverage based on potential outcomes to minimize ED visits. Thick dark lines are GAM estimates of the conditional lower and upper bounds; shaded regions correspond to GAM-based 95\% confidence intervals to show approximate level of uncertainty.}
  \label{fig:het_plot2}
\end{figure}

Figure~\ref{fig:het_plot2} shows estimates of the bounds from the empirical analysis in Section~\ref{sec:ohie} conditional on a modeled estimate of the number of primary care-treatable ED visits in the pre-period, using the pseudo-outcome regression approach. We see that for individuals with few to no modeled primary-care-treatable visits, the upper bounds are much smaller, but increase rapidly.
There is a larger degree of uncertainty---and consequently more potential upside for optimizing the waitlist and enrollment policies---for individuals with more pre-period ED visits.

% \clearpage

\section{Additional results and proofs}

\subsection{Auxiliary lemmas and additional results}

\begin{lemma}
  \label{lem:mis_class}
  Under Assumption~\ref{a:margin}, if $\hat{\bm{b}}$ and $\hat{\bm{c}}$ are fit on a separate, independent sample, then the probability that the plugin basic feasible solution is incorrect is bounded by
  \[
    P\left(\widehat{B}_L(X) \not \in \mathcal{B}_L^\ast(X)\right) \lesssim \left(\|\hat{\bm{b}}- \bm{b}\|_\infty + \|\hat{\bm{c}}- \bm{c}\|_\infty \right)^{\alpha},
  \]
  and the regret of the plugin basic feasible solution is bounded by
  \[
  \begin{aligned}
    \E\left[\<\bm{c}(x), A_{\widehat{B}_L(X)}^{-1} \bm{b}(X)\> - \<\bm{c}(x), A_{B^\ast_L(X)}^{-1} \bm{b}(X)\>  \right] & \lesssim  \left(\|\hat{\bm{b}}- \bm{b}\|_\infty + \|\hat{\bm{c}}- \bm{c}\|_\infty \right)^{1 + \alpha}
  \end{aligned}
  \]
\end{lemma}

\begin{proof}[Proof of Lemma~\ref{lem:mis_class}]
  Note that for a fixed value of the covariates $x$, if $\widehat{B}_L(x) \not \in \mathcal{B}_L^\ast(x)$ then this implies that there must exist some bases that are feasible but not optimal, and that $\widehat{B}_L(x)$ is at most a second-best basis.
  Therefore, the sub-optimality gap for a second-best basis is upper bounded by the sub-optimality gap for $\widehat{B}_L(x)$, and so
  \begin{align*}
   \Delta_L(x) &  \leq \langle \bm{c}(X), A_{\widehat{B}_L(x)}^{-1} \bm{b}(X) \rangle  - \langle \bm{c}(X), A_{B^\ast_L(x)}^{-1} \bm{b}(X) \rangle\\
    & = \langle \bm{c}(X),  A_{\widehat{B}_L(x)}^{-1}  \bm{b}(X) \rangle - \langle \hat{\bm{c}}(X),  A_{B^\ast_L(x)}^{-1} \hat{\bm{b}}(X) \rangle  +  \langle \hat{\bm{c}}(X),  A_{B^\ast_L(x)}^{-1} \hat{\bm{b}}(X) \rangle - \langle \bm{c}(X),  A_{B^\ast_L(x)}^{-1} \bm{b}(X)  \rangle\\
    & \leq \langle \bm{c}(X),  A_{\widehat{B}_L(x)}^{-1}  \bm{b}(X) \rangle - \langle \hat{\bm{c}}(X),  A_{\widehat{B}_L(x)}^{-1} \hat{\bm{b}}(X) \rangle  +  \langle \hat{\bm{c}}(X),  A_{B^\ast_L(x)}^{-1} \hat{\bm{b}}(X) \rangle - \langle \bm{c}(X),  A_{B^\ast_L(x)}^{-1} \bm{b}(X)  \rangle\\
    & = \langle \bm{c}(X), (A_{\widehat{B}_L(x)}^{-1} - A_{B^\ast_L(x)}^{-1}) (\bm{b}(X) - \hat{\bm{b}}(X)) \rangle + \langle \hat{\bm{c}}(X) - \bm{c}(X), (A_{B^\ast_L(x)}^{-1} - A_{\widehat{B}_L(x)}^{-1}) \hat{\bm{b}}(X) \rangle,
  \end{align*}
  where the second inequality uses the fact that $\widehat{B}_L(x)$ is an optimal basis for the estimated constraints $\hat{\bm{b}}(X)$ and objective $\hat{\bm{c}}(X)$. Next, by H\"{o}lder's inequality, we have for 
  \[\tilde{C} \equiv \max\left\{\sup_{x \in \mathcal{X}}\sup_{B,B' \in \mathcal{B}} \|(A_{B} - A_{B'})'\bm{c}(x)\|_1,  \sup_{x \in \mathcal{X}}\sup_{B,B' \in \mathcal{B}} \|(A_{B} - A_{B'}) \hat{\bm{b}}(x) \|_1 \right\}< \infty,
  \]
  \begin{align*}
    & \langle \bm{c}(X), (A_{\widehat{B}_L(x)} - A_{B^\ast_L(x)}) (\bm{b}(X) - \hat{\bm{b}}(X)) \rangle + \langle \hat{\bm{c}}(X) - \bm{c}(X), (A_{B^\ast_L(x)}^{-1} - A_{\widehat{B}_L(x)}^{-1}) \hat{\bm{b}}(X) \rangle \\
    & \leq \tilde{C} \left(\|\hat{\bm{b}}(X) - \bm{b}(X)\|_\infty + \|\hat{\bm{c}}(X) - \bm{c}(X)\|_\infty\right)\\
    & \leq \tilde{C} \left(\sup_{x\in\mathcal{X}} \|\hat{\bm{b}}(x) - \bm{b}(x)\|_\infty  + \sup_{x \in \mathcal{X}} \|\hat{\bm{c}}(x) - \bm{c}(x)\|_\infty\right)
  \end{align*}

  So, $\widehat{B}_L(x) \not \in \mathcal{B}_L^\ast(x)$ implies that $0 < \Delta_L(x) \leq \ \tilde{C} \left(\|\hat{\bm{b}} - \bm{b}\|_\infty  + \|\hat{\bm{c}} - \bm{c}\|_\infty\right) $. Using this and the margin condition, we have
  \begin{align*}
    P\left(\widehat{B}_L(X) \not \in \mathcal{B}_L^\ast(X)\right) & \leq P\left(0 < \Delta_L(x) \leq \ \tilde{C} \left(\|\hat{\bm{b}} - \bm{b}\|_\infty  + \|\hat{\bm{c}} - \bm{c}\|_\infty\right)\right)\\
    & \lesssim \left(\|\hat{\bm{b}} - \bm{b}\|_\infty  + \|\hat{\bm{c}} - \bm{c}\|_\infty\right)^\alpha.
  \end{align*}

  Next, note that the expected sub-optimality of the plugin basic feasible solution is
  \begin{align*}
    & \; \E\left[\<\bm{c}(X), A_{\widehat{B}_L(X)}^{-1} \bm{b}(X)\> - \<\bm{c}(X), A_{B^\ast_L(X)}^{-1} \bm{b}(X)\> \right]\\
    = & \; \E\left[\bbone\{\widehat{B}_L(X) \not \in \mathcal{B}_L^\ast(X)\} \left(\<\bm{c}(X), A_{\widehat{B}_L(X)}^{-1} \bm{b}(X)\> - \<\bm{c}(X), A_{B^\ast_L(X)}^{-1} \bm{b}(X)\> \right)\right]\\
    & \leq \E\left[\bbone\{\widehat{B}_L(X) \not \in \mathcal{B}_L^\ast(X)\} \langle \bm{c}(X), (A_{\widehat{B}_L(x)} - A_{B^\ast_L(x)}) (\bm{b}(X) - \hat{\bm{b}}(X)) \rangle\right]\\
    & \quad + \E\left[\bbone\{\widehat{B}_L(X) \not \in \mathcal{B}_L^\ast(X)\} \langle \hat{\bm{c}}(X) - \bm{c}(X), (A_{B^\ast_L(x)}^{-1} - A_{\widehat{B}_L(x)}^{-1}) \hat{\bm{b}}(X) \rangle \right]\\
    & \leq \tilde{C} P(\widehat{B}_L(X) \not \in \mathcal{B}_L^\ast(X))  \left(\|\hat{\bm{b}} - \bm{b}\|_\infty  + \|\hat{\bm{c}} - \bm{c}\|_\infty\right) \\
    & \lesssim \left(\|\hat{\bm{b}} - \bm{b}\|_\infty  + \|\hat{\bm{c}} - \bm{c}\|_\infty\right)^{1 + \alpha}
  \end{align*}
  where the first inequality uses the bound on $ \<\bm{c}(X), A_{\widehat{B}_L(X)}^{-1} \bm{b}(X)\> - \<\bm{c}(X), A_{B^\ast_L(X)}^{-1} \bm{b}(X)\>$ when $\widehat{B}_L(x) \not \in \mathcal{B}_L^\ast(x)$, the second inequality uses H\"{o}lder's inequality, and the final inequality uses the bound on $P(\widehat{B}(X) \not \in \mathcal{B}_L^\ast(X))$.
\end{proof}

\begin{lemma}
  \label{lem:primal_est_bias}
    Under Assumptions~\ref{a:riesz} and \ref{a:margin}, the bias for $\hat{\theta}_L$ is bounded by
    \begin{align*}
      \E[\hat{\theta}_L] - \theta_L & \lesssim \left(\|\hat{\bm{b}} - \bm{b}\|_\infty   + \|\hat{\bm{c}} - \bm{c}\|_\infty\right)^{1 + \alpha} +  \left(1 + \sup_{x \in \mathcal{X}}\|\hat{\bm{b}}(x) - \bm{b}(x)\|_2\right)r_n + \|\hat{\bm{b}} - \bm{b}\|_2 \|\hat{\bm{c}} - \bm{c}\|_2
    \end{align*}
\end{lemma}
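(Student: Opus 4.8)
The plan is to condition on the independent sample used to fit $\hat{\bm{b}}, \hat{\bm{c}}, \hat{\bm{\varphi}}^{(b)}, \hat{\bm{\varphi}}^{(c)}$, so that these—and hence the plug-in basis map $x \mapsto A_{\widehat{B}_L(x)}^{-1}$—become fixed functions of $x$ and $\E[\hat{\theta}_L]$ reduces to a single-observation expectation. I would then insert the intermediate quantity $\bar{\theta}_L \equiv \E[\<\bm{c}(X), A_{\widehat{B}_L(X)}^{-1}\bm{b}(X)\>]$, namely the true bilinear objective evaluated at the \emph{estimated} basis but with the \emph{true} nuisances, and split the bias as $\E[\hat{\theta}_L] - \theta_L = (\bar{\theta}_L - \theta_L) + (\E[\hat{\theta}_L] - \bar{\theta}_L)$.

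The first piece, $\bar{\theta}_L - \theta_L$, is exactly the regret of using the plug-in basis in place of a true optimal basis $B_L^\ast$, so under the margin condition (Assumption~\ref{a:margin}) it is bounded by Lemma~\ref{lem:mis_class} by $\left(\|\hat{\bm{b}}-\bm{b}\|_\infty + \|\hat{\bm{c}}-\bm{c}\|_\infty\right)^{1+\alpha}$, which delivers the first term of the claimed bound. For the second piece I would expand the per-observation summand $\<\hat{\bm{c}}+\hat{\bm{\varphi}}^{(c)}, A_{\widehat{B}_L(X)}^{-1}\hat{\bm{b}}\> + \<\hat{\bm{c}}, A_{\widehat{B}_L(X)}^{-1}\hat{\bm{\varphi}}^{(b)}\>$, substitute $\hat{\bm{b}} = \bm{b} + (\hat{\bm{b}}-\bm{b})$ and $\hat{\bm{c}} = \bm{c} + (\hat{\bm{c}}-\bm{c})$, and regroup. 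Writing $M(X)=A_{\widehat{B}_L(X)}^{-1}$, the terms collect into (i) $\<\hat{\bm{c}}+\hat{\bm{\varphi}}^{(c)}-\bm{c}, M\bm{b}\>$ and $\<M'\bm{c}, \hat{\bm{b}}+\hat{\bm{\varphi}}^{(b)}-\bm{b}\>$, de-biasing residuals against the bounded functionals $M\bm{b}$ and $M'\bm{c}$; (ii) $\<\hat{\bm{c}}+\hat{\bm{\varphi}}^{(c)}-\bm{c}, M(\hat{\bm{b}}-\bm{b})\>$, a de-biasing residual against the functional $M(\hat{\bm{b}}-\bm{b})$; and (iii) a genuine cross term built from $\<\hat{\bm{c}}-\bm{c}, M\hat{\bm{\varphi}}^{(b)}\>$.

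For the first-order groups I would apply the rate bounds defining $r_n$ (Assumption~\ref{a:rates}): group (i) is $\lesssim r_n$ since $M\bm{b}$ and $M'\bm{c}$ are uniformly bounded, while group (ii) is $\lesssim \sup_x\|\hat{\bm{b}}(x)-\bm{b}(x)\|_2\, r_n$, producing the $(1+\sup_x\|\hat{\bm{b}}-\bm{b}\|_2)r_n$ factor. For the cross term I would rewrite $\hat{\bm{\varphi}}^{(b)} = (\hat{\bm{b}}+\hat{\bm{\varphi}}^{(b)}-\bm{b}) - (\hat{\bm{b}}-\bm{b})$, so that $\<\hat{\bm{c}}-\bm{c}, M\hat{\bm{\varphi}}^{(b)}\>$ splits into a de-biasing residual against $M'(\hat{\bm{c}}-\bm{c})$—bounded by $\sup_x\|\hat{\bm{c}}-\bm{c}\|_2\, r_n = o_p(r_n)$ via sup-norm consistency, hence absorbed into the $O(r_n)$ term—plus $-\<\hat{\bm{c}}-\bm{c}, M(\hat{\bm{b}}-\bm{b})\>$, which Cauchy--Schwarz over the covariate distribution bounds by $\|\hat{\bm{b}}-\bm{b}\|_2\|\hat{\bm{c}}-\bm{c}\|_2$. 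Summing the three groups and the regret term yields the stated bound.

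I expect the main obstacle to be the clean handling of the single cross term: it is the only place where the naive de-biasing identity does not immediately apply, and it requires both the rate bound and the Cauchy--Schwarz step after the rewrite above, together with the (easily established but necessary) uniform boundedness of the basis-inverse operator norms $\sup_x\|A_{\widehat{B}_L(x)}^{-1}\|_{\mathrm{op}} \le \max_{B\in\mathcal{B}}\|A_B^{-1}\|_{\mathrm{op}} < \infty$, which controls the coefficients $\|M\bm{b}\|_2$, $\|M'\bm{c}\|_2$, and $\|M(\hat{\bm{b}}-\bm{b})\|_2$ appearing throughout.
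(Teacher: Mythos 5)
Your proposal is correct and follows essentially the same route as the paper: both split off the regret term $\E[\<\bm{c}(X), A^{-1}_{\widehat{B}_L(X)}\bm{b}(X)\> - \<\bm{c}(X), A^{-1}_{B^\ast_L(X)}\bm{b}(X)\>]$ and control it via Lemma~\ref{lem:mis_class}, then expand the remaining bilinear discrepancy into first-order de-biasing residuals bounded by $r_n$ (using Assumption~\ref{a:rates} and uniform boundedness of $A_B^{-1}$), a residual against $A^{-1}_{\widehat{B}_L}(\hat{\bm{b}}-\bm{b})$ giving the $\sup_x\|\hat{\bm{b}}-\bm{b}\|_2\,r_n$ factor, and a Cauchy--Schwarz cross term giving $\|\hat{\bm{b}}-\bm{b}\|_2\|\hat{\bm{c}}-\bm{c}\|_2$. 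The only difference is cosmetic bookkeeping: the paper centers the $\bm{b}$-residual against $\hat{\bm{c}}$ so the cross term appears directly as $\<\hat{\bm{c}}-\bm{c}, A^{-1}_{\widehat{B}_L}(\hat{\bm{b}}-\bm{b})\>$, whereas you center it against $\bm{c}$ and need one extra rewrite of $\hat{\bm{\varphi}}^{(b)}$ to reach the same two pieces.
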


\begin{proof}[Proof of Lemma~\ref{lem:primal_est_bias}]

  First, note that we can decompose the bias into
  \begin{align*}
    \E[\hat{\theta}_L] - \theta_L & = \E\left[\left\<\hat{\bm{c}}(X), A^{-1}_{\widehat{B}_L(X)} \left(\hat{\bm{b}}(x) + \hat{\bm{\varphi}}^{(b)}(O)\right)\right\>\right]\\
      & \quad + \E\left[\left\<\hat{\bm{\varphi}}^{(c)}(O),  A^{-1}_{\widehat{B}_L(X)} \hat{\bm{b}}(X)\right\> - \left\<\bm{c}(X),  A^{-1}_{B^\ast_L(X)} \bm{b}(X)\right\>\right]\\
      & = \E\left[\<\bm{c}(X),  A^{-1}_{\widehat{B}_L(X)} \bm{b}(X)\> - \<\bm{c}(X),  A^{-1}_{B^\ast_L(X)} \bm{b}(X)\>\right] \tag{i}\\
      & \quad + \E\left[\left\<\hat{\bm{c}}(X), A^{-1}_{\widehat{B}_L(X)} \left(\hat{\bm{b}}(x) + \hat{\bm{\varphi}}^{(b)}(O) - \bm{b}(X)\right) \right\>\right] \tag{ii}\\
      & \quad + \E\left[\left\<\hat{\bm{c}}(X) + \hat{\bm{\varphi}}^{(c)}(O) - \bm{c}(X), A^{-1}_{\widehat{B}_L(X)} \bm{b}(X)\right\>\right] \tag{iii}\\
      & \quad + \E\left[\left\<\hat{\bm{\varphi}}^{(c)}(O), A^{-1}_{\widehat{B}_L(X)} (\hat{\bm{b}}(X) - \bm{b}(X))\right\>\right] \tag{iv}
  \end{align*}
  We have bounded (i) in Lemma~\ref{lem:mis_class}.
  To bound (ii), note that by Assumption~\ref{a:rates},
  \begin{align*}
    (ii) & = \E\left[\left\<A^{-1 \prime}_{\widehat{B}_L(X)}\hat{\bm{c}}(X),  \left(\hat{\bm{b}}(x) + \hat{\bm{\varphi}}^{(b)}(O) - \bm{b}(X)\right) \right\>\right]\\
    & \lesssim \sup_{x \in \mathcal{X}} \|A^{-1 \prime}_{\widehat{B}_L(x)}\hat{\bm{c}}(x)\|_2 r_n\\
    & \leq \sup_{B \in \mathcal{B}} \|A^{-1}_{B}\|_2 \sup_{x \in \mathcal{X}}\|\hat{\bm{c}}(x)\|_2 r_n\\
    & \lesssim r_n,
  \end{align*}
  because by construction for each basis $B$ the matrix $A_B$ is invertible, and $\hat{\bm{c}}(x)$ is bounded for all $x$.
  We can similarly bound (iii) to get $(iii) \lesssim r_n$.
  Finally, for term (iv), we have that
  \begin{align*}
    (iv) 
    & = \underbrace{\E\left[\left\<\hat{\bm{c}}(X) + \hat{\bm{\varphi}}^{(c)}(O) - \bm{c}(X), A^{-1}_{\widehat{B}_L(X)} (\hat{\bm{b}}(X) - \bm{b}(X))\right\>\right]}_{(iva)} + \underbrace{\E\left[\left\<\bm{c}(X) - \hat{\bm{c}}(X), A^{-1}_{\widehat{B}_L(X)} (\hat{\bm{b}}(X) - \bm{b}(X))\right\>\right]}_{(ivb)}
  \end{align*}
  By Assumption~\ref{a:rates}
  \[
  |(iva)| \lesssim \max_{B \in \mathcal{B}} \|A_B^{-1\prime}\|_2 \sup_{x \in \mathcal{X}} \|\hat{\bm{b}}(x) - \bm{b}(x)\|_2 r_n \lesssim \sup_{x \in \mathcal{X}} \|\hat{\bm{b}}(x) - \bm{b}(x)\|_2 r_n.
  \]
  By Cauchy-Schwarz, 
  \begin{align*}
    |(ivb)| \leq \max_{B \in \mathcal{B}} \|A_B^{-1\prime}\|_2 \E\left[\|\hat{\bm{b}}(X) - \bm{b}(X)\|_2 \|\hat{\bm{c}}(X) - \bm{c}(X)\|_2\right] \lesssim \|\hat{\bm{b}} - \bm{b}\|_2 \|\hat{\bm{c}} - \bm{c}\|_2
  \end{align*}
  Putting together the bounds gives the result.
\end{proof}

\begin{lemma}
  \label{lem:jacobian}
  Define $Q(\bm{p}) =  \left(A \;\text{diag}(\bm{p})A'\right)^{-1}$ and $ H(\bm{p}) = \text{diag}(\bm{p})A'Q(\bm{p})A$, where $\text{diag}(\bm{p})$ denotes the diagonal matrix with the vector $\bm{p}$ on the diagonal. The Jacobian of $\bm{p}_L^\eta(\bm{b}, \bm{c})$ and $\bm{p}_U^\eta(\bm{b}, \bm{c})$ with respect to $\bm{b} \in \R^J$ and $\bm{c} \in \R^K$ are:
  \begin{align*}
    \nabla_{\bm{b}} \bm{p}^\eta_L(\bm{b}, \bm{c}) & = \text{diag}(\bm{p}^\eta_L(\bm{b}, \bm{c})) A'Q(\bm{p}^\eta_L(\bm{b}, \bm{c})) & \quad & \nabla_{\bm{c}} \bm{p}^\eta_L(\bm{b}, \bm{c}) & = \eta \ \text{diag}(\bm{p}_L^\eta(\bm{b}, \bm{c})) \left(I_K - H(\bm{p}^\eta_L(\bm{b}, \bm{c}))\right)\\
    \nabla_{\bm{b}} \bm{p}^\eta_U(\bm{b}, \bm{c}) & = \text{diag}(\bm{p}^\eta_U(\bm{b}, \bm{c})) A'Q(\bm{p}^\eta_U(\bm{b}, \bm{c})) & \quad & \nabla_{\bm{c}} \bm{p}^\eta_U(\bm{b}, \bm{c}) & = \eta \ \text{diag}(\bm{p}_U^\eta(\bm{b}, \bm{c})) \left(I_K - H(\bm{p}^\eta_U(\bm{b}, \bm{c}))\right)
  \end{align*}
\end{lemma}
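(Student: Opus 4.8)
The plan is to obtain both Jacobians by implicitly differentiating the optimality conditions of the entropic dual, using the closed-form primal--dual relationship stated just above Equation~\eqref{eq:entropy_dual_var}. I focus on the lower-bound solution $\bm{p}^\eta_L(\bm{b},\bm{c}) = \exp(-A'\bm{\lambda}^\eta_L(\bm{b},\bm{c}) + \eta\bm{c})$; the upper-bound case is identical after flipping the sign of $A'\bm{\lambda}$.

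First I would record that $\bm{\lambda}^\eta_L(\bm{b},\bm{c})$ is the unique minimizer of the strictly convex dual objective $g(\bm{\lambda}) = \sum_k \exp(-\<A_{\cdot k},\bm{\lambda}\> + \eta c_k) + \<\bm{\lambda},\bm{b}\>$, whose gradient is $\nabla_{\bm{\lambda}} g = -A\,\bm{p}(\bm{\lambda},\bm{c}) + \bm{b}$ with $\bm{p}(\bm{\lambda},\bm{c}) = \exp(-A'\bm{\lambda} + \eta\bm{c})$. The stationarity condition is therefore exactly primal feasibility $A\,\bm{p}^\eta_L = \bm{b}$, and the dual Hessian is $\nabla^2_{\bm{\lambda}} g = A\,\text{diag}(\bm{p}^\eta_L)A' = Q(\bm{p}^\eta_L)^{-1}$. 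Because $A$ has full row rank $J$ by assumption and the exponential parameterization forces $p_k > 0$ for every $k$, this Hessian is positive definite, so $g$ is strictly convex and $\bm{\lambda}^\eta_L$ is a well-defined, continuously differentiable function of $(\bm{b},\bm{c})$; this is precisely what licenses the implicit function theorem.

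Next I would differentiate the stationarity identity $A\,\exp(-A'\bm{\lambda}^\eta_L + \eta\bm{c}) = \bm{b}$ implicitly. Differentiating in $\bm{b}$ gives $Q(\bm{p}^\eta_L)^{-1}\nabla_{\bm{b}}\bm{\lambda}^\eta_L + I_J = 0$, hence $\nabla_{\bm{b}}\bm{\lambda}^\eta_L = -Q(\bm{p}^\eta_L)$. Differentiating in $\bm{c}$, the explicit $\eta\bm{c}$-dependence of $\bm{p}$ contributes a term $\eta\,A\,\text{diag}(\bm{p}^\eta_L)$, yielding $\nabla_{\bm{c}}\bm{\lambda}^\eta_L = \eta\,Q(\bm{p}^\eta_L)\,A\,\text{diag}(\bm{p}^\eta_L)$. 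Finally I would apply the chain rule to $\bm{p}^\eta_L = \exp(-A'\bm{\lambda}^\eta_L + \eta\bm{c})$, whose outer derivative contributes the factor $\text{diag}(\bm{p}^\eta_L)$: this gives $\nabla_{\bm{b}}\bm{p}^\eta_L = -\text{diag}(\bm{p}^\eta_L)A'\nabla_{\bm{b}}\bm{\lambda}^\eta_L = \text{diag}(\bm{p}^\eta_L)A'Q(\bm{p}^\eta_L)$, and $\nabla_{\bm{c}}\bm{p}^\eta_L = \text{diag}(\bm{p}^\eta_L)(\eta I_K - A'\nabla_{\bm{c}}\bm{\lambda}^\eta_L)$, which collapses to $\eta\,\text{diag}(\bm{p}^\eta_L)(I_K - H(\bm{p}^\eta_L))$ after substituting $\nabla_{\bm{c}}\bm{\lambda}^\eta_L$ and grouping the product into $H(\bm{p}^\eta_L) = \text{diag}(\bm{p}^\eta_L)A'Q(\bm{p}^\eta_L)A$. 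Repeating verbatim with $\bm{p}^\eta_U = \exp(+A'\bm{\lambda}^\eta_U + \eta\bm{c})$ flips the two signs in the $\bm{\lambda}$-derivatives but leaves the final $\bm{p}$-Jacobians in the same form, giving the upper-bound expressions.

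The main obstacle is the justification step rather than the algebra: I must verify that $\bm{\lambda}^\eta_L$ is genuinely differentiable so the implicit function theorem applies, which reduces to invertibility of $A\,\text{diag}(\bm{p}^\eta_L)A'$. This follows from full row rank of $A$ together with strict positivity of every entry of $\bm{p}^\eta_L$, and I would state this positivity explicitly as the key structural fact. The remaining work---expanding the products $A'Q(\bm{p})A\,\text{diag}(\bm{p})$ and collecting them into $Q$ and $H$---is routine matrix bookkeeping, with the only delicate points being the consistent sign convention ($-A'\bm{\lambda}$ for the lower bound, $+A'\bm{\lambda}$ for the upper) and the ordering of the diagonal factors relative to $A'Q(\bm{p})A$.
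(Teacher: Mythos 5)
Your proposal is correct and takes essentially the same route as the paper's own proof: implicit differentiation of the dual stationarity condition $A\exp(\mp A'\bm{\lambda}+\eta\bm{c})=\bm{b}$ followed by the chain rule on the primal--dual map, with your positive-definiteness check on the dual Hessian merely making explicit the implicit-function-theorem justification that the paper leaves tacit. One fine point you share with the paper: substituting $\nabla_{\bm{c}}\bm{\lambda}^\eta_L=\eta\,Q(\bm{p}^\eta_L)A\,\text{diag}(\bm{p}^\eta_L)$ literally yields $\nabla_{\bm{c}}\bm{p}^\eta_L=\eta\,\text{diag}(\bm{p}^\eta_L)\bigl(I_K-A'Q(\bm{p}^\eta_L)A\,\text{diag}(\bm{p}^\eta_L)\bigr)=\eta\bigl(I_K-H(\bm{p}^\eta_L)\bigr)\text{diag}(\bm{p}^\eta_L)$, a symmetric matrix with the diagonal factor to the \emph{right} of $I_K-H$, so the final ``collapse'' to the lemma's stated form $\eta\,\text{diag}(\bm{p}^\eta_L)(I_K-H(\bm{p}^\eta_L))$ involves a reordering of non-commuting factors that the paper's own proof also glosses over.
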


\begin{proof}[Proof of Lemma~\ref{lem:jacobian}]

  Since $\bm{p}^\eta_L(\bm{b}(x),\bm{c}(x)) = \exp\left(-A'\bm{\lambda}(\bm{b}(x), \bm{c}(x)) + \eta \bm{c}(x)\right)$,
  \begin{align*}
    \nabla_{\bm{b}} p_L^\eta(\bm{b}, \bm{c}) & = - \text{diag}( p(\bm{b}, \bm{c}) ) \ A'\nabla_{\bm{b}} \bm{\lambda}_L^\eta(\bm{b}, \bm{c})\\
    \nabla_{\bm{c}} p_L^\eta(\bm{b}, \bm{c}) & =  \text{diag}(p(\bm{b}, \bm{c})) \left(\eta I - A'\nabla_{\bm{c}} \bm{\lambda}_L^\eta(\bm{b}, \bm{c})\right).
  \end{align*}
  Now because $\bm{\lambda}_L^\eta(\bm{b}, \bm{c})$ is the unique solution to the dual of the entropic regularized primal problem, the vectors $\bm{\lambda}_L^\eta, \bm{b}, \bm{c}$ satisfy the following zero gradient condition
  \[
    G_L(\bm{\lambda}, \bm{b}, \bm{c}) = -A \exp\left(-A'\bm{\lambda} + \eta \bm{c}\right) + \bm{b} = 0.
  \]
  Taking the Jacobians of $G_L$ with respect to $\bm{\lambda}, \bm{b}$ and $\bm{c}$, we have that
  \begin{align*}
    \nabla_{\bm{\lambda}} G_L & = A \ \text{diag}(\bm{p}(\bm{b}, \bm{c})) A'\\
    \nabla_{\bm{b}} G_L & = I\\
    \nabla_{\bm{c}} G_L &= -\eta \ A \ \text{diag}(\bm{p}(\bm{b}, \bm{c})).
  \end{align*}
  So by the implicit function theorem the Jacobians of $\bm{\lambda}_L^\eta(\bm{b}, \bm{c})$ with respect to $\bm{b}$ and $\bm{c}$ are
  \begin{align*}
    \nabla_{\bm{b}} \bm{\lambda}_L^\eta(\bm{b}, \bm{c}) & = -\left(\nabla_{\bm{\lambda}} G_L\right)^{-1} \nabla_{\bm{b}} G_L = -(A \ \text{diag}(\bm{p}(\bm{b}, \bm{c})) \ A')^{-1}\\\\
    \nabla_{\bm{c}} \lambda_L^\eta(\bm{b}, \bm{c}) & = -\left(\nabla_{\bm{\lambda}} G_L\right)^{-1} \nabla_{\bm{c}} G_L = \eta \left( A \ \text{diag}(\bm{p}(\bm{b}, \bm{c})) \ A'\right)^{-1} A \ \text{diag}(\bm{p}(\bm{b}, \bm{c})).\\
  \end{align*}
  Putting together the pieces gives the result for the Jacobian of $\bm{p}_L^\eta(\bm{b}, \bm{c})$. The result for $\bm{p}_U^\eta(\bm{b}, \bm{c})$ follows similarly, but the signs are different. Specifically, since $\bm{p}_U(\bm{b},\bm{c}) = \exp\left(A'\bm{\lambda}(\bm{b}, \bm{c}) + \eta \bm{c}\right)$,
  \begin{align*}
    \nabla_{\bm{b}} \bm{p}_U^\eta(\bm{b}, \bm{c}) & = \text{diag}( \bm{p}(\bm{b}, \bm{c}) ) \ A'\nabla_{\bm{b}} \bm{\lambda}_L^\eta(\bm{b}, \bm{c})\\
    \nabla_{\bm{c}} \bm{p}_U^\eta(\bm{b}, \bm{c}) & =  \text{diag}(\bm{p}(\bm{b}, \bm{c})) \left(\eta I + A'\nabla_{\bm{c}} \bm{\lambda}_L^\eta(\bm{b}, \bm{c})\right).
  \end{align*}
  Now because $\bm{\lambda}_U^\eta(\bm{b}, \bm{c})$ is the unique solution to the dual of the entropic regularized primal problem, the functions $\bm{\lambda}_U^\eta, \bm{b}, \bm{c}$ satisfy the following zero gradient condition
  \[
    G_U(\bm{\lambda}, \bm{b}, \bm{c}) = A \exp\left(A'\bm{\lambda} + \eta \bm{c}\right) - \bm{b} = 0.
  \]
  Taking the Jacobians of $G_U$ with respect to $\bm{\lambda}, \bm{b}$ and $\bm{c}$, we have that
  \begin{align*}
    \nabla_{\bm{\lambda}} G_U & = A \ \text{diag}(\bm{p}(\bm{b}, \bm{c})) A'\\
    \nabla_{\bm{b}} G_U & = -I\\
    \nabla_{\bm{c}} G_U &= \eta \ A \ \text{diag}(\bm{p}(\bm{b}, \bm{c})).
  \end{align*}
  So by the implicit function theorem the Jacobian of $\bm{\lambda}_L^\eta(\bm{b},\bm{c})$ with respect to $\bm{b}$ and $\bm{c}$ are
  \begin{align*}
    \nabla_{\bm{b}} \bm{\lambda}_L^\eta(\bm{b}, \bm{c}) & = -\left(\nabla_{\bm{\lambda}} G_L\right)^{-1} \nabla_{\bm{b}} G_L = -(A \ \text{diag}(\bm{p}(\bm{b}, \bm{c})) \ A')^{-1}\\\\
    \nabla_{\bm{c}} \bm{\lambda}_L^\eta(\bm{b}, \bm{c}) & = -\left(\nabla_{\bm{\lambda}} G_L\right)^{-1} \nabla_{\bm{c}} G_L = \eta \left( A \ \text{diag}(\bm{p}(\bm{b}, \bm{c})) \ A'\right)^{-1} A \ \text{diag}(\bm{p}(\bm{b}, \bm{c})).
  \end{align*}
  Putting together the pieces gives the result for the Jacobian of $\bm{p}_U^\eta(\bm{b}, \bm{c})$.
\end{proof}

\begin{lemma}
  \label{lem:entropic_est_bias}
    Define $S_{\bm{b}}(\eta) = \max_{\bm{b}, \bm{c}} \|\nabla_{\bm{b}}\bm{p}_{L}^\eta(\bm{b}, \bm{c})\|_2$, $S_{\bm{c}}(\eta) = \max_{\bm{b}, \bm{c}} \|\nabla_{\bm{c}}\bm{p}_{L}^\eta(\bm{b}, \bm{c})\|_2$, and $S_{\bm{b},\bm{c}}(\eta) = \max_{\bm{b}, \bm{c}, k} \|\nabla^2_{\bm{b},\bm{c}} p_{Lk}^\eta(\bm{b}, \bm{c})\|_2$. Under Assumption~\ref{a:riesz}, the bias for $\hat{\theta}^\eta_L$ relative to $\theta^\eta_L$ is bounded by
    \begin{align*}
      \E[\hat{\theta}_L^\eta] - \theta_L^\eta & \lesssim r_n + S_{\bm{b}}(\eta) \left(r_n + \|\hat{\bm{b}} - \bm{b}\|_2\|\hat{\bm{c}} - \bm{c}\|_2\right) + S_{\bm{c}}(\eta)\left(r_n + \|\hat{\bm{c}} - \bm{c}\|_2^2\right)\\
      & \quad  + S_{\bm{b}, \bm{c}}(\eta) \left(\|\hat{\bm{b}} - \bm{b}\|_2^2 + \|\hat{\bm{c}} - \bm{c}\|_2^2\right)\left(1 + \sum_{k=1}^K \|\hat{c}_k - c_k\|_\infty\right)
    \end{align*}
\end{lemma}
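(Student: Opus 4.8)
\textbf{Proof proposal for Lemma~\ref{lem:entropic_est_bias}.}
The plan is to recognize $\hat{\theta}_L^\eta$ as a one-step de-biased plug-in estimator for the smooth functional $\phi(\bm{b}, \bm{c}) \equiv \<\bm{c}, \bm{p}_L^\eta(\bm{b}, \bm{c})\>$, so that $\theta_L^\eta = \E[\phi(\bm{b}(X), \bm{c}(X))]$, and then to control the bias through a first-order von Mises expansion whose remainder is second order in the nuisance errors. First I would use the separate-sample assumption to condition on the estimated nuisance functions, treating $\hat{\bm{b}}, \hat{\bm{c}}, \hat{\bm{\varphi}}^{(b)}, \hat{\bm{\varphi}}^{(c)}$ as fixed and taking the expectation over a fresh observation $O$. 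Using the chain rule together with the Jacobian identities in Lemma~\ref{lem:jacobian}, I would establish the two gradient identities $\nabla_{\bm{b}}\phi = (\nabla_{\bm{b}}\bm{p}_L^\eta)'\bm{c}$ and $\nabla_{\bm{c}}\phi = \bm{p}_L^\eta + (\nabla_{\bm{c}}\bm{p}_L^\eta)'\bm{c}$, which let me rewrite the correction terms compactly so that $\E[\hat{\theta}_L^\eta] = \E[\phi(\hat{\bm{b}}, \hat{\bm{c}}) + \<\nabla_{\bm{b}}\phi(\hat\cdot), \hat{\bm{\varphi}}^{(b)}(O)\> + \<\nabla_{\bm{c}}\phi(\hat\cdot), \hat{\bm{\varphi}}^{(c)}(O)\>]$, where $(\hat\cdot)$ denotes evaluation at $(\hat{\bm{b}}(X), \hat{\bm{c}}(X))$.

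Next I would apply the de-biasing rate bound in Assumption~\ref{a:rates} separately to each correction term, viewing the gradient as a fixed linear-functional weight $\bm{f}(X)$. This converts $\E[\<\nabla_{\bm{b}}\phi(\hat\cdot), \hat{\bm{\varphi}}^{(b)}\>]$ into $\E[\<\nabla_{\bm{b}}\phi(\hat\cdot), \bm{b} - \hat{\bm{b}}\>]$ up to an error of order $\sup_x\|\nabla_{\bm{b}}\phi(\hat\cdot)\|_2\, r_n \lesssim S_{\bm{b}}(\eta)\, r_n$, and analogously for $\bm{c}$. Crucially, I would split $\nabla_{\bm{c}}\phi = \bm{p}_L^\eta + (\nabla_{\bm{c}}\bm{p}_L^\eta)'\bm{c}$: the first summand has coefficient bounded by a constant, since the feasible solutions lie in the bounded set $\mathcal{P}(x)$, producing the standalone $r_n$ term, while the second has coefficient bounded by $S_{\bm{c}}(\eta)$, producing the $S_{\bm{c}}(\eta)\, r_n$ term. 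After this step the bias equals $\E[\phi(\hat{\bm{b}},\hat{\bm{c}}) - \phi(\bm{b},\bm{c}) + \<\nabla_{\bm{b}}\phi(\hat\cdot), \bm{b}-\hat{\bm{b}}\> + \<\nabla_{\bm{c}}\phi(\hat\cdot), \bm{c}-\hat{\bm{c}}\>]$ plus the accumulated terms of order $r_n + S_{\bm{b}}(\eta) r_n + S_{\bm{c}}(\eta) r_n$.

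The remaining expectand is exactly the negative of the second-order Taylor remainder of $\phi$ at $(\hat{\bm{b}}(X),\hat{\bm{c}}(X))$ evaluated at $(\bm{b}(X),\bm{c}(X))$, so I would bound it using the Hessian of $\phi$ along the segment between the two points. Because $\phi$ is the inner product of $\bm{c}$ with the $\bm{c}$-dependent solution $\bm{p}_L^\eta$, its Hessian blocks split into two types: blocks built from the first derivatives of $\bm{p}_L^\eta$, arising from differentiating the explicit outer $\bm{c}$, and blocks built from the second derivatives of $\bm{p}_L^\eta$ weighted by $\bm{c}$. The first type contributes $\partial_{\bm{b}}\bm{p}_L^\eta$ in the $\bm{b}$–$\bm{c}$ block, bounded by $S_{\bm{b}}(\eta)$ and paired with $\|\hat{\bm{b}}-\bm{b}\|_2\|\hat{\bm{c}}-\bm{c}\|_2$, and $\partial_{\bm{c}}\bm{p}_L^\eta$ in the $\bm{c}$–$\bm{c}$ block, bounded by $S_{\bm{c}}(\eta)$ and paired with $\|\hat{\bm{c}}-\bm{c}\|_2^2$; the second type is bounded by $S_{\bm{b},\bm{c}}(\eta)\sum_k|c_k|$, and bounding $|c_k|$ uniformly along the path by $|\hat c_k| + \|\hat c_k - c_k\|_\infty$ yields the factor $(1 + \sum_k \|\hat c_k - c_k\|_\infty)$ multiplying $\|\hat{\bm{b}}-\bm{b}\|_2^2 + \|\hat{\bm{c}}-\bm{c}\|_2^2$. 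Applying Cauchy–Schwarz and the triangle inequality over these blocks and collecting gives the stated bound.

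The main obstacle I anticipate is the bookkeeping in this last step: organizing the Hessian of $\phi$ into the pieces scaling with $S_{\bm{b}}(\eta)$ and $S_{\bm{c}}(\eta)$ versus $S_{\bm{b},\bm{c}}(\eta)$, correctly pairing each block with the right product of nuisance-error norms, and justifying that the solution map is $C^2$ with finite derivative norms. The last point follows from the dual representation and strong convexity noted in Section~\ref{sec:entropic}, which make $\bm{p}_L^\eta$ an implicit smooth function of $(\bm{b}, \bm{c})$; the de-biasing step itself is routine once the gradient identities are in hand.
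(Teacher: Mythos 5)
Your proposal is correct and uses the same essential ingredients as the paper's proof---Assumption~\ref{a:rates} applied with gradient weights, Lemma~\ref{lem:jacobian}, and second-order Taylor control via the operator norms $S_{\bm{b}}(\eta)$, $S_{\bm{c}}(\eta)$, $S_{\bm{b},\bm{c}}(\eta)$---but organizes them through a genuinely different decomposition. The paper never introduces the scalar functional $\phi(\bm{b},\bm{c}) = \<\bm{c}, \bm{p}_L^\eta(\bm{b},\bm{c})\>$; instead it splits the bias into three additive pieces (the de-biasing error for the objective paired with $\hat{\bm{p}}_L^\eta$, a term expanding $\hat{\bm{p}}_L^\eta - \bm{p}_L^\eta$ weighted by $\hat{\bm{c}}$, and a cross term weighted by $\hat{\bm{c}} - \bm{c}$) and then performs coordinate-wise second-order Taylor expansions of $p_{Lk}^\eta$ inside the last two, producing six sub-terms that are bounded separately. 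Your route instead observes that the estimator is exactly the one-step estimator for $\phi$, via the gradient identities $\nabla_{\bm{b}}\phi = (\nabla_{\bm{b}}\bm{p}_L^\eta)'\bm{c}$ and $\nabla_{\bm{c}}\phi = \bm{p}_L^\eta + (\nabla_{\bm{c}}\bm{p}_L^\eta)'\bm{c}$ (which are correct: expanding the estimator confirms the correction terms are precisely these gradients paired with the de-biasing functions), and then runs a single von Mises expansion whose remainder is controlled by the Hessian of $\phi$ along the segment. Your bookkeeping of the Hessian blocks---first-derivative blocks of $\bm{p}_L^\eta$ from differentiating the explicit outer $\bm{c}$ giving the $S_{\bm{b}}(\eta)\|\hat{\bm{b}}-\bm{b}\|_2\|\hat{\bm{c}}-\bm{c}\|_2$ and $S_{\bm{c}}(\eta)\|\hat{\bm{c}}-\bm{c}\|_2^2$ terms, versus $\bm{c}$-weighted second-derivative blocks giving the $S_{\bm{b},\bm{c}}(\eta)$ term with the $(1+\sum_k\|\hat{c}_k-c_k\|_\infty)$ factor from bounding the weight at the intermediate point---recovers exactly the paper's sub-terms (iia)--(iic) and (iiia)--(iiic), with the paper's standalone $r_n$ arising from the $\bm{p}_L^\eta$ summand of $\nabla_{\bm{c}}\phi$ just as you say. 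What your approach buys is conceptual economy: it explains why the estimator has its particular form (it is the canonical one-step correction for $\phi$) and collapses the paper's case analysis into one expansion; what the paper's approach buys is that each sub-term is an explicit expectation that can be bounded with only the already-stated Jacobian formulas, without needing to verify that $\phi$ itself is $C^2$ as a composite map, a point you correctly flag as the remaining technical obligation and which is discharged by the same implicit-function-theorem computations the paper carries out in the proof of Theorem~\ref{thm:entropic_est_rate}.
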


\begin{proof}[Proof of Lemma~\ref{lem:entropic_est_bias}]
    To control the bias, we can write
  \begin{align*}
    \E[\hat{\theta}^\eta_L] - \theta^\eta_L & = \E\left[\left\<\hat{\bm{c}}(X) +  \hat{\bm{\varphi}}^{(c)}(O) - \bm{c}(X),\hat{\bm{p}}_L^\eta(X)\right\> \right] \tag{i}\\
    & \quad + \E\left[\left\<\hat{\bm{c}}(X), \hat{\bm{p}}_L^\eta(X) + \nabla_{\bm{b}}\hat{\bm{p}}_L^\eta(X) \hat{\bm{\varphi}}^{(b)}(O) + \nabla_{\bm{c}}\hat{\bm{p}}_L^\eta(X) \hat{\bm{\varphi}}^{(c)}(O) - \bm{p}_L^\eta(X)\right\>\right] \tag{ii}\\
    & \quad + \E\left[\left\<\bm{c}(X) - \hat{\bm{c}}(X), \hat{\bm{p}}_L^\eta(X) - \bm{p}_L^\eta(X)\right\>\right] \tag{iii}
  \end{align*}

  The first term (i) can be bounded as in the proof of Lemma~\ref{lem:primal_est_bias}.
  The second term (ii) can be written as
  \begin{align*}
    (ii) & = \sum_{k=1}^K \E\left[\hat{c}_k(X)\left(\hat{p}_{Lk}^\eta(X) - p_{Lk}^\eta(X) + \nabla_{\bm{b}}p_{Lk}^\eta(\hat{\bm{b}}(x), \hat{\bm{c}}(x)) \bm{\varphi}^{(b)}(O) + \nabla_{\bm{c}}p_{Lk}^\eta(\hat{\bm{b}}(x), \hat{\bm{c}}(x)) \bm{\varphi}^{(c)}(O)  \right)\right]
  \end{align*}

  \noindent For each $x \in \mathcal{X}$ and $k=1,\ldots,K$, we can use a second-order Taylor expansion of $p_{Lk}^\eta(\bm{b}(x), \bm{c}(x))$ around $(\hat{\bm{b}}(x), \hat{\bm{c}}(x)) \in \R^{ J + K}$ to get that
  \begin{align*}
    \hat{p}_{Lk}^\eta(x) - p_{Lk}^\eta(x) & = \nabla_{\bm{b}} p_{Lk}^\eta(\hat{\bm{b}}(x), \hat{\bm{c}}(x)) \cdot (\hat{\bm{b}}(x) - \bm{b}(x)) + \nabla_{\bm{c}} p_{Lk}^\eta(\hat{\bm{b}}(x), \hat{\bm{c}}(x)) \cdot (\hat{\bm{c}}(x) - \bm{c}(x))\\
    & \quad  + (\hat{\bm{c}}(x) - \bm{c}(x), \hat{\bm{b}}(x) - \bm{b}(x))' \nabla^2_{\bm{b},\bm{c}} p_{Lk}^\eta(\tilde{\bm{b}}(x), \tilde{\bm{c}}(x)) (\hat{\bm{c}}(x) - \bm{c}(x), \hat{\bm{b}}(x) - \bm{b}(x)),
  \end{align*}
  for some $\tilde{\bm{b}}(x)$ and $\tilde{\bm{c}}(x)$ between $(\hat{\bm{b}}(x), \hat{\bm{c}}(x))$ and $(\bm{b}(x), \bm{c}(x))$, where $\nabla^2_{\bm{b},\bm{c}} p_{Lk}^\eta(\tilde{\bm{b}}(x), \tilde{\bm{c}}(x))$ denotes the Hessian matrix of $p_{Lk}^\eta(\bm{b}(x), \bm{c}(x))$ with respect to $b_1(x),\ldots,b_J(x),c_1(x),\ldots,c_K(x)$ evaluated at $(\tilde{\bm{b}}(x), \tilde{\bm{c}}(x))$.
  Substituting this in to (ii) gives us
  \begin{align*}
    (ii) & = \sum_{k=1}^K \E\left[\hat{c}_k(X)\nabla_{\bm{b}}p_{Lk}^\eta(\hat{\bm{b}}(X), \hat{\bm{c}}(X)) \left(\hat{\bm{b}}(X) + \hat{\bm{\varphi}}^{(b)}(O) - \bm{b}(X)\right)\right] \tag{iia}\\
    & + \sum_{k=1}^K \E\left[\hat{c}_k(X)\nabla_{\bm{c}}p_{Lk}^\eta(\hat{\bm{b}}(X), \hat{\bm{c}}(X)) \left(\hat{\bm{c}}(X) + \hat{\bm{\varphi}}^{(c)}(O) - \hat{\bm{c}}(X) \right)\right] \tag{iib}\\
    & + \sum_{k=1}^K \E\left[\hat{c}_k(X) (\hat{\bm{c}}(X) - \bm{c}(X), \hat{\bm{b}}(X) - \bm{b}(X))' \nabla^2_{\bm{b},\bm{c}} p_{Lk}^\eta(\tilde{\bm{b}}(X), \tilde{\bm{c}}(X)) (\hat{\bm{c}}(X) - \bm{c}(X), \hat{\bm{b}}(X) - \bm{b}(X))\right] \tag{iic}
  \end{align*}
  Focusing on term (iia) first, let $S_{\bm{b}}(\eta) = \max_{\bm{b}, \bm{c}} \|\nabla_{\bm{b}}\bm{p}_{L}^\eta(\bm{b}, \bm{c})\|_2$, we have that
  \begin{align*}
  \left|(iia)\right| &= \left|\E\left[\sum_{k=1}^K\hat{c}_k(X)\nabla_{\bm{b}}p_{Lk}^\eta(\hat{\bm{b}}(X), \hat{\bm{c}}(X)) \left(\hat{\bm{b}}(X) + \hat{\bm{\varphi}}^{(b)}(O) - \bm{b}(X)\right)\right]\right|\\
  &  = \left|\E\left[\left\<\nabla_{\bm{b}}\bm{p}_L^\eta(\hat{\bm{b}}(X), \hat{\bm{c}}(X))'\hat{\bm{c}}(X),  \hat{\bm{b}}(X) + \hat{\bm{\varphi}}^{(b)}(O) - \bm{b}(X)\right\>\right]\right|\\
  &\lesssim \sup_{x \in \mathcal{X}} \|\nabla_{\bm{b}}\bm{p}_L^\eta(\hat{\bm{b}}(x), \hat{\bm{c}}(x))'\hat{\bm{c}}(x)\|_2 r_n\\
  & \leq \sup_{x \in \mathcal{X}} \|\nabla_{\bm{b}}\bm{p}_L^\eta(\hat{\bm{b}}(x), \hat{\bm{c}}(x))\|_2 \sup_{x \in \mathcal{X}}\|\hat{\bm{c}}(x)\|_2 r_n\\
  & \lesssim S_{\bm{b}}(\eta) r_n.
\end{align*}  
Similarly,  defining $S_{\bm{c}}(\eta) = \max_{\bm{b}, \bm{c}} \|\nabla_{\bm{c}}\bm{p}_{L}^\eta(\bm{b}, \bm{c})\|_2$, we can bound term (iib) as
  \begin{align*}
  \left|(iib)\right| &= \left|\E\left[\sum_{k=1}^K\hat{c}_k(X)\nabla_{\bm{c}}p_{Lk}^\eta(\hat{\bm{b}}(X), \hat{\bm{c}}(X)) \left(\hat{\bm{c}}(X) + \hat{\bm{\varphi}}^{(c)}(O) - \bm{c}(X)\right)\right]\right|\\
  &  = \left|\E\left[\left\<\nabla_{\bm{c}}\bm{p}_L^\eta(\hat{\bm{b}}(X), \hat{\bm{c}}(X))'\hat{\bm{c}}(X),  \hat{\bm{c}}(X) + \hat{\bm{\varphi}}^{(c)}(O) - \bm{c}(X)\right\>\right]\right|\\
  &\lesssim \sup_{x \in \mathcal{X}} \|\nabla_{\bm{c}}\bm{p}_L^\eta(\hat{\bm{b}}(x), \hat{\bm{c}}(x))'\hat{\bm{c}}(x)\|_2 r_n\\
  & \leq \sup_{x \in \mathcal{X}} \|\nabla_{\bm{c}}\bm{p}_L^\eta(\hat{\bm{b}}(x), \hat{\bm{c}}(x))\|_2 \sup_{x \in \mathcal{X}}\|\hat{\bm{c}}(x)\|_2 r_n\\
  & \lesssim S_{\bm{c}}(\eta) r_n.
\end{align*}
We can bound term (iic) as
\begin{align*}
  (iic) & \lesssim \sum_{k=1}^K \E\left[\|\nabla^2_{\bm{b},\bm{c}} p_{Lk}^\eta(\tilde{\bm{b}}(X), \tilde{\bm{c}}(X))\|_2\left(\|\hat{\bm{c}}(X) - \bm{c}(X)\|_2^2 + \|\hat{\bm{b}}(X) - \bm{b}(X)\|_2^2\right)\right]\\
  & \lesssim S_{\bm{b},\bm{c}}(\eta) \left(\|\hat{\bm{b}} - \bm{b}\|_2^2 + \|\hat{\bm{c}} - \bm{c}\|_2^2\right),
\end{align*}
where $S_{\bm{b},\bm{c}}(\eta) = \max_{\bm{b}, \bm{c}, k} \|\nabla^2_{\bm{b},\bm{c}} p_{Lk}^\eta(\bm{b}, \bm{c})\|_2$ is the operator norm of the Hessian matrix of $p_{Lk}^\eta(\bm{b}, \bm{c})$ with respect to $\bm{b}$ and $\bm{c}$.

Finally, for term (iii) we can similarly use a second order Taylor expansion to write it as
\begin{align*}
  (iii) &= \sum_{k=1}^K \E\left[(\hat{c}_k(X) - c_k(X)) \nabla_{\bm{b}}p_{Lk}^\eta(\hat{\bm{b}}(X), \hat{\bm{c}}(X))(\hat{\bm{b}}(X) - \bm{b}(X))\right] \tag{iiia}\\
  & \quad + \sum_{k=1}^K \E\left[(\hat{c}_k(X) - c_k(X)) \nabla_{\bm{c}}p_{Lk}^\eta(\hat{\bm{b}}(X), \hat{\bm{c}}(X))(\hat{\bm{c}}(X) - \bm{c}(X))\right] \tag{iiib}\\
  & \quad + \sum_{k=1}^K \E\left[(\hat{c}_k(X) - c_k(X)) (\hat{\bm{c}}(X) - \bm{c}(X), \hat{\bm{b}}(X) - \bm{b}(X))' \nabla^2_{\bm{b},\bm{c}} p_{Lk}^\eta(\tilde{\bm{b}}(X), \tilde{\bm{c}}(X))\right.\\
  & \qquad \qquad \qquad\qquad\qquad\qquad \qquad\qquad\qquad\qquad \cdot \left. (\hat{\bm{c}}(X) - \bm{c}(X), \hat{\bm{b}}(X) - \bm{b}(X))\right] \tag{iiic}.
\end{align*}
Using Cauchy-Schwarz, we can bound term (iiia) as
\begin{align*}
  (iiia) & = \E\left[\left\<\hat{\bm{c}}(X) - \bm{c}(X), \nabla_{\bm{b}}\bm{p}_{L}^\eta(\hat{\bm{b}}(X), \hat{\bm{c}}(X))(\hat{\bm{b}}(X) - \bm{b}(X))\right\>\right]\\
  & \leq \E\left[\|\hat{\bm{c}}(X) - \bm{c}(X)\|_2 \|\nabla_{\bm{b}}\bm{p}_{L}^\eta(\hat{\bm{b}}(X), \hat{\bm{c}}(X))\|_2\|\hat{\bm{b}}(X) - \bm{b}(X)\|_2\right]\\&
   \leq \sup_{x \in \mathcal{X}} \|\nabla_{\bm{b}}\bm{p}_{L}^\eta(\hat{\bm{b}}(x), \hat{\bm{c}}(x))\|_2\ \E\left[\|\hat{\bm{c}}(X) - \bm{c}(X)\|_2\|\hat{\bm{b}}(X) - \bm{b}(X)\|_2\right]\\
   & \leq S_{\bm{b}}(\eta) \sqrt{\E\left[\|\hat{\bm{c}}(X) - \bm{c}(X)\|_2^2\right]}\sqrt{\E\left[\|\hat{\bm{b}}(X) - \bm{b}(X)\|_2^2\right]}\\
   & = S_{\bm{b}}(\eta) \|\hat{\bm{c}} - \bm{c}\|_2 \|\hat{\bm{b}} - \bm{b}\|_2.
\end{align*}
Similarly, we can bound term (iiib) as
\begin{align*}
  (iiib) & = \E\left[\left\<\hat{\bm{c}}(X) - \bm{c}(X), \nabla_{\bm{c}}\bm{p}_{L}^\eta(\hat{\bm{b}}(X), \hat{\bm{c}}(X))(\hat{\bm{c}}(X) - \bm{c}(X))\right\>\right]\\
  & \leq \E\left[\|\hat{\bm{c}}(X) - \bm{c}(X)\|_2 \|\nabla_{\bm{c}}\bm{p}_{L}^\eta(\hat{\bm{b}}(X), \hat{\bm{c}}(X))\|_2\|\hat{\bm{c}}(X) - \bm{c}(X)\|_2\right]\\&
   \leq \sup_{x \in \mathcal{X}} \|\nabla_{\bm{c}}\bm{p}_{L}^\eta(\hat{\bm{b}}(x), \hat{\bm{c}}(x))\|_2\ \E\left[\|\hat{\bm{c}}(X) - \bm{c}(X)\|_2^2\right]\\
   & = S_{\bm{c}}(\eta) \|\hat{\bm{c}} - \bm{c}\|_2^2.
\end{align*}
Finally  we can bound term (iiic) as
\begin{align*}
  (iiic) & \leq \sum_{k=1}^K \E\left[\left|\hat{c}_k(X) - c_k(X)\right| \|\nabla^2_{\bm{b},\bm{c}} p_{Lk}^\eta(\tilde{\bm{b}}(X), \tilde{\bm{c}}(X))\|_2\left(\|\hat{\bm{b}}(X) - \bm{b}(X)\|_2^2 + \|\hat{\bm{c}}(X) - \bm{c}(X)\|_2^2\right)\right]\\
  & \leq \sup_{x \in \mathcal{X}} \|\nabla^2_{\bm{b},\bm{c}} p_{Lk}^\eta(\tilde{\bm{b}}(x), \tilde{\bm{c}}(x))\|_2 \E\left[\left|\hat{c}_k(X) - c_k(X)\right|\left(\|\hat{\bm{b}}(X) - \bm{b}(X)\|_2^2 + \|\hat{\bm{c}}(X) - \bm{c}(X)\|_2^2\right)\right]\\
  & \leq S_{\bm{b},\bm{c}}(\eta) \sum_{k=1}^K \E\left[\left|\hat{c}_k(X) - c_k(X)\right|\left(\|\hat{\bm{b}}(X) - \bm{b}(X)\|_2^2 + \|\hat{\bm{c}}(X) - \bm{c}(X)\|_2^2\right)\right]\\
  & \leq S_{\bm{b},\bm{c}}(\eta) \left(\|\hat{\bm{b}} - \bm{b}\|_2^2 + \|\hat{\bm{c}} - \bm{c}\|_2^2\right)\sum_{k=1}^K\|\hat{c}_k - c_k\|_\infty.
\end{align*}

  Putting all of these pieces together gives the result.
\end{proof}

\begin{theorem}
  \label{thm:entropic_est_rate}
  Define
  \[
    \tilde{\theta}_L^\eta \equiv \frac{1}{n}\sum_{i=1}^n \<\bm{c}(X_i) + \bm{\varphi}^{(c)}(O_i), \bm{p}_L^\eta(X_i)\> + \<\bm{c}(X_i), \nabla_{\bm{b}} \bm{p}^\eta_L(X_i) \bm{\varphi}^{(b)}(O_i) + \nabla_{\bm{c}} \bm{p}^\eta_L(X_i) \bm{\varphi}^{(c)}(O_i)\>,
    \]
    where $\bm{p}_L^\eta(X_i)$ is the solution to the entropic conditional linear program~\eqref{eq:condl_entropic_primal} using the true nuisance functions $\bm{b}(X_i)$ and $\bm{c}(X_i)$.
    Under Assumptions~\ref{a:riesz} and \ref{a:rates}, if  $\hat{\bm{b}}$, $\hat{\bm{c}}$, $\hat{\bm{\varphi}}^{(b)}$, and $\hat{\bm{\varphi}}^{(c)}$ are fit on a separate, independent sample, for a fixed $\eta >0$,
  \begin{align*}
     \hat{\theta}^\eta_L - \theta^\eta_L &= \tilde{\theta}^\eta_L - \theta^\eta_L  +  O_p\left(r_n + \|\hat{\bm{b}} - \bm{b}\|_2^2 + \|\hat{\bm{c}} - \bm{c}\|_2^2\right) + o_p(n^{-1/2}).
  \end{align*}
\end{theorem}

\begin{proof}[Proof of Theorem~\ref{thm:entropic_est_rate}]
  First, note that we can decompose the estimation error $\hat{\theta}^\eta_L - \theta^\eta_L$ into
  \begin{align*}
    \hat{\theta}^\eta_L - \theta^\eta_L & = \tilde{\theta}^\eta_L - \E[\tilde{\theta}^\eta_L]\\
    & \quad  + \E[\hat{\theta}^\eta_L] - \theta^\eta_L \tag{$\ast$}\\
    & \quad + (\hat{\theta}^\eta_L - \E[\hat{\theta}^\eta_L]) - (\tilde{\theta}^\eta_L - \E[\tilde{\theta}^\eta_L]) \tag{$\ast\ast$}
  \end{align*}
We have bounded the bias term ($\ast$) in Lemma~\ref{lem:entropic_est_bias}.
Because $\eta$ is allowed to grow with the sample size $n$, it remains to characterize the rates of growth of the operator norms $S_{\bm{b}}(\eta)$, $S_{\bm{c}}(\eta)$ and $S_{\bm{b},\bm{c}}(\eta)$ as $\eta \to \infty$.
For a matrix $A$, we will use $\sigma_{\max}(A)$ and $\sigma_{\min}(A)$ to denote the largest and smallest singular values of $A$, respectively.
First, for the Jacobian of $\bm{p}_L^\eta(\bm{b}, \bm{c})$ with respect to $\bm{b}$, note that we can write the Jacobian as
\[
\nabla_{\bm{b}}\bm{p}_L^\eta(\bm{b}, \bm{c}) = \text{diag}(\bm{p}_L^\eta(\bm{b}, \bm{c}))^{1/2} \tilde{A}(\eta)^{\dagger},
\]
where $\tilde{A}(\eta) = A \text{diag}(\bm{p}_L^\eta(\bm{b}, \bm{c}))^{1/2}$ and $\tilde{A}(\eta)^{\dagger} = \tilde{A}(\eta)'(\tilde{A}(\eta)\tilde{A}(\eta)')^{-1}$ is the Moore-Penrose pseudo-inverse of $\tilde{A}(\eta)$. Because $A$ is full row-rank, and the elements of $\bm{p}_L^\eta(\bm{b}, \bm{c})$ are strictly greater than zero, $\tilde{A}(\eta)$ is also full row-rank and so $\sigma_{\max}(\tilde{A}(\eta)^{\dagger}) = \frac{1}{\sigma_{\min}(\tilde{A}(\eta))} < \infty$ is bounded. As $\eta \to \infty$, $\bm{p}_L^\eta(\bm{b}, \bm{c})$ converges to a value on the optimal face of the polytope defined by the constraints on the conditional linear program, and so it is a convex combination of the set of optimal basic feasible solutions $\mathcal{B}^\ast_L(x)$. Therefore, $\tilde{A}(\eta)$ converges to a matrix with full row-rank. Similarly, because the solutions to the un-regularized program are bounded, $\bm{p}_L^\eta(\bm{b}, \bm{c})$ will be bounded as $\eta \to \infty$. Therefore, the operator norm of the Jacobian matrix, $\|\nabla_{\bm{b}}\bm{p}_L^\eta(\bm{b}, \bm{c})\|_2 \leq  \frac{\max_k p_{Lk}^\eta(\bm{b}, \bm{c})}{\sigma_{\min}(\tilde{A}(\eta))} \to C$ for some constant $C$ as $\eta \to \infty$, and so $S_{\bm{b}}(\eta) = O(1)$.

For the Jacobian of $\bm{p}_L^\eta(\bm{b}, \bm{c})$ with respect to $\bm{c}$, we can write the Jacobian as
\[
\nabla_{\bm{c}}\bm{p}_L^\eta(\bm{b}, \bm{c}) = \eta \times  \text{diag}(\bm{p}_L^\eta(\bm{b}, \bm{c}))^{1/2} (I_K - \tilde{A}(\eta)'(\tilde{A}(\eta)\tilde{A}(\eta)')^{-1}\tilde{A}(\eta)) \text{diag}(\bm{p}_L^\eta(\bm{b}, \bm{c}))^{1/2}.
\]
Therefore, the operator norm is bounded by
\[
\|\nabla_{\bm{c}}\bm{p}_L^\eta(\bm{b}, \bm{c})\|_2 \leq \eta \max_k p_{Lk}^\eta(\bm{b}, \bm{c}) \|I_K - \tilde{A}(\eta)'(\tilde{A}(\eta)\tilde{A}(\eta)')^{-1}\tilde{A}(\eta)\|_2 = \eta \max_k p_{Lk}^\eta(\bm{b}, \bm{c}),
\]
where we have used the fact that $I_K - \tilde{A}(\eta)'(\tilde{A}(\eta)\tilde{A}(\eta)')^{-1}\tilde{A}(\eta)$ is a residual of a projection matrix and so has eigenvalues all equal to one or zero. This implies that  $S_{\bm{c}}(\eta) = O(\eta)$.

Finally, we consider the operator norm of the Hessian matrix of $p_{Lk}^\eta(\bm{b}, \bm{c})$ with respect to $\bm{b}$ and $\bm{c}$.
Suppressing the dependence on $\bm{b}$ and $\bm{c}$ to make the notation more compact, note that
\begin{align*}
  \frac{\partial}{\partial b_j} Q(\bm{p}) & = - Q(\bm{p})A \text{diag}\left(\pderiv{\bm{p}}{b_j}\right)A' Q(\bm{p})\\
  \frac{\partial}{\partial c_k} Q(\bm{p}) & = - Q(\bm{p})A \text{diag}\left(\pderiv{\bm{p}}{c_k}\right)A' Q(\bm{p})\\
  \frac{\partial}{\partial b_j} H(\bm{p}) & = A' Q(\bm{p})A \text{diag}\left(\pderiv{\bm{p}}{b_j}\right)(I_K - H(\bm{p}))\\
  \frac{\partial}{\partial c_k} H(\bm{p}) & = A' Q(\bm{p})A \text{diag}\left(\pderiv{\bm{p}}{c_k}\right)(I_K - H(\bm{p}))\\
\end{align*}
This implies that the elements of the Hessian matrix are:
\begin{align*}
  \frac{\partial}{\partial b_j}\nabla_{\bm{b}} p_{Lk}^\eta(\bm{b}, \bm{c}) & = A_{\cdot k}' Q(\bm{p}^\eta_L)\pderiv{p_{Lk}^\eta}{b_j} - Q(\bm{p}^\eta_L)A \text{diag}\left(\pderiv{\bm{p}^\eta}{b_j}\right)A' Q(\bm{p}^\eta_L)A_{\cdot k} p_k^\eta\\
  \frac{\partial}{\partial c_{k'}}\nabla_{\bm{b}} p_{Lk}^\eta(\bm{b}, \bm{c}) & = A_{\cdot k}' Q(\bm{p}^\eta_L)\pderiv{p_{Lk}^\eta}{c_{k'}} - Q(\bm{p}^\eta_L)A \text{diag}\left(\pderiv{\bm{p}^\eta}{c_{k'}}\right)A' Q(\bm{p}^\eta_L)A_{\cdot k} p_k^\eta\\
  \frac{\partial}{\partial b_j}\nabla_{\bm{c}} p_{Lk}^\eta(\bm{b}, \bm{c}) & =\eta \pderiv{p_{Lk}^\eta}{b_j} \bm{e}_k' (I - H(\bm{p}^\eta_L))  - \eta p_{Lk}^\eta \bm{e}_k' A' Q(\bm{p}^\eta_L) A \text{diag}\left(\pderiv{\bm{p}}{b_j}\right)(I_K - H(\bm{p}_L^\eta)) \\
  \frac{\partial}{\partial c_{k'}}\nabla_{\bm{c}} p_{Lk}^\eta(\bm{b}, \bm{c}) & =\eta \pderiv{p_{Lk}^\eta}{c_{k'}} \bm{e}_k' (I - H(\bm{p}^\eta_L))  - \eta p_{Lk}^\eta \bm{e}_k' A' Q(\bm{p}^\eta_L) A \text{diag}\left(\pderiv{\bm{p}}{c_{k'}}\right)(I_K - H(\bm{p}_L^\eta)), \\
\end{align*}
where $\bm{e}_k$ is the $k$-th standard basis vector in $\R^K$.
Putting these together, the Hessian matrix is equal to
\[
  \left[\begin{array}{c c c c}
    \nabla^2_{\bm{b}}p_{Lk}^\eta(\bm{b}, \bm{c}) & \nabla^2_{\bm{b},\bm{c}}p_{Lk}^\eta(\bm{b}, \bm{c})\\
    \nabla^2_{\bm{c},\bm{b}}p_{Lk}^\eta(\bm{b}, \bm{c}) & \nabla^2_{\bm{c}}p_{Lk}^\eta(\bm{b}, \bm{c})\\
  \end{array}\right],
\]
where the sub-matrices are defined as follows:
\begin{align*}
  \nabla^2_{\bm{b}}p_{Lk}^\eta(\bm{b}, \bm{c}) & =  Q(\bm{p}^\eta_L)A_{\cdot k} (\nabla_{\bm{b}} p_{Lk}^\eta)' - \left(\nabla_{\bm{b}} \bm{p}_L^\eta \odot \bm{1}_J \left(A'Q(\bm{p}_L^\eta)A_{\cdot k} p_{Lk}^\eta\right)\right) A'Q(\bm{p}_L^\eta)\\
  \nabla^2_{\bm{c},\bm{b}}p_{Lk}^\eta(\bm{b}, \bm{c}) & = Q(\bm{p}^\eta_L) A_{\cdot k} (\nabla_{\bm{c}}  p_{Lk}^\eta)' - \left(\nabla_{\bm{c}} \bm{p}_L^\eta \odot \bm{1}_K \left(A'Q(\bm{p}_L^\eta)A_{\cdot k} p_{Lk}^\eta\right)\right) A'Q(\bm{p}_L^\eta)\\
  \nabla^2_{\bm{b}, \bm{c}}p_{Lk}^\eta(\bm{b}, \bm{c}) & = \eta \left( (I - H(\bm{p}_L^\eta)) \bm{e}_k (\nabla_{\bm{b}} p_{Lk}^\eta)'  - \left(\nabla_{\bm{b}} \bm{p}_L^\eta \odot \bm{1}_J A_{\cdot k}'Q(\bm{p}_L^\eta)Ap_{Lk}^\eta\right)(I_K - H(\bm{p}_L^\eta))\right)\\
  \nabla^2_{\bm{c}}p_{Lk}^\eta(\bm{b}, \bm{c}) & = \eta \left((I - H(\bm{p}_L^\eta)) \bm{e}_k (\nabla_{\bm{c}} p_{Lk}^\eta)'   - \left(\nabla_{\bm{c}} \bm{p}_L^\eta \odot \bm{1}_J A_{\cdot k}'Q(\bm{p}_L^\eta)Ap_{Lk}^\eta\right)(I_K - H(\bm{p}_L^\eta))\right)\\
\end{align*}
where $\odot$ denotes the element-wise product, and $\bm{1}_J$ and $\bm{1}_K$ are vectors of ones of length $J$ and $K$.

From above, we have seen that $\|\nabla_{\bm{b}}\bm{p}_L^\eta(\bm{b}, \bm{c})\|_2 = O(1)$ and $\|\nabla_{\bm{c}}\bm{p}_L^\eta(\bm{b}, \bm{c})\|_2 = O(\eta)$. Similarly, because the operator norms of the various sub matrices in the Hessian including $Q(\bm{p}_L^\eta)$ and $I - H(\bm{p}_L^\eta)$ and the solutions are bounded, we have that
\begin{align*}
  \|\nabla^2_{\bm{b}}p_{Lk}^\eta(\bm{b}, \bm{c})\|_2 & \leq \|\nabla_{\bm{b}} p_{Lk}^\eta\|_2 \|A_{\cdot k}'Q(\bm{p}^\eta_L)\|_2 + \|\nabla_{\bm{b}} \bm{p}_L^\eta\|_2 \|A'Q(\bm{p}_L^\eta)A_{\cdot k} p_{Lk}^\eta\|_2 \|A_{\cdot k}'Q(\bm{p}^\eta_L)\|_2\\
  &  \leq  \|\nabla_{\bm{b}} \bm{p}_L^\eta\|_2  \|A_{\cdot k}'Q(\bm{p}^\eta_L)\|_2 (1 + \|A_{\cdot k}'Q(\bm{p}^\eta_L)\|_2  )\\
  & = O(1),
\end{align*}
and similarly,
\begin{align*}
  \|\nabla^2_{\bm{c}, \bm{b}}p_{Lk}^\eta(\bm{b}, \bm{c})\|_2 & \leq \|\nabla_{\bm{c}} \bm{p}_L^\eta\|_2  \|A_{\cdot k}'Q(\bm{p}^\eta_L)\|_2 (1 + \|A_{\cdot k}'Q(\bm{p}^\eta_L)\|_2  ) = O(\eta)\\
  \|\nabla^2_{\bm{b}, \bm{c}}p_{Lk}^\eta(\bm{b}, \bm{c})\|_2 & \leq \eta \|\nabla_{\bm{b}} \bm{p}_L^\eta\|_2 \|I_K - H(\bm{p}_L^\eta)\|_2 (1 + \|A'Q(\bm{p}_L^\eta)A_{\cdot k}p_{Lk}^\eta\|_2) = O(\eta)\\
    \|\nabla^2_{\bm{c}}p_{Lk}^\eta(\bm{b}, \bm{c})\|_2 & \leq \eta \|\nabla_{\bm{c}} \bm{p}_L^\eta\|_2 \|I_K - H(\bm{p}_L^\eta)\|_2 (1 + \|A'Q(\bm{p}_L^\eta)A_{\cdot k}p_{Lk}^\eta\|_2) = O(\eta^2).
\end{align*}
Therefore, $S_{\bm{b},\bm{c}}(\eta) \leq \|\nabla^2_{\bm{b}}p_{Lk}^\eta(\bm{b}, \bm{c})\|_2 + \|\nabla^2_{\bm{c}}p_{Lk}^\eta(\bm{b}, \bm{c})\|_2 + \|\nabla^2_{\bm{b},\bm{c}}p_{Lk}^\eta(\bm{b}, \bm{c})\|_2 + \|\nabla^2_{\bm{c},\bm{b}}p_{Lk}^\eta(\bm{b}, \bm{c})\|_2 = O(\eta^2)$.

  Next, we control the second part of the decomposition, term ($\ast\ast$), following \citet{kennedy_semiparametric_2024}, Lemma 1. Specifically, we can write
  \begin{align*}
    h(O) & = \<\bm{c}(X) + \bm{\varphi}^{(c)}(O), \bm{p}_L^\eta(X)\> + \<\bm{c}(X), \nabla_{\bm{b}} \bm{p}^\eta_L(X) \bm{\varphi}^{(b)}(O) + \nabla_{\bm{c}} \bm{p}^\eta_L(X) \bm{\varphi}^{(c)}(O)\>\\
    \hat{h}(O) & =\<\hat{\bm{c}}(X) + \hat{\bm{\varphi}}^{(c)}(O), \hat{\bm{p}}_L^\eta(X)\> + \<\hat{\bm{c}}(X), \nabla_{\bm{b}} \hat{\bm{p}}^\eta_L(X) \hat{\bm{\varphi}}^{(b)}(O) + \nabla_{\bm{c}} \hat{\bm{p}}^\eta_L(X) \hat{\bm{\varphi}}^{(c)}(O)\>.
  \end{align*}
  Then we decompose the difference into
  \begin{align*}
    \hat{h}(O) - h(O) & = \left\<\hat{\bm{c}}(X) + \hat{\bm{\varphi}}^{(c)}(O) - \bm{c}(X) - \bm{\varphi}^{(c)}(O), \hat{\bm{p}}_L^\eta(X)\right\>\\
    & \quad + \left\<\bm{c}(X) + \bm{\varphi}^{(c)}(O), \hat{\bm{p}}_L^\eta(X)  - \bm{p}_L^\eta(X)\right\>\\
    & \quad + \left\<\hat{\bm{c}}(X) - \bm{c}(X) , \nabla_{\bm{b}} \hat{\bm{p}}^\eta_L(X) \hat{\bm{\varphi}}^{(b)}(O) + \nabla_{\bm{c}} \hat{\bm{p}}^\eta_L(X) \hat{\bm{\varphi}}^{(c)}(O)\right\>\\
    & \quad + \left\<\bm{c}(X), \nabla_{\bm{b}} \hat{\bm{p}}^\eta_L(X) \hat{\bm{\varphi}}^{(b)}(O) + \nabla_{\bm{c}} \hat{\bm{p}}^\eta_L(X) \hat{\bm{\varphi}}^{(c)}(O) - \nabla_{\bm{b}} \bm{p}^\eta_L(X) \bm{\varphi}^{(b)}(O) + \nabla_{\bm{c}} \bm{p}^\eta_L(X) \bm{\varphi}^{(c)}(O)\right\>.
  \end{align*}
  Now, because each component is consistent, i.e. $\sup_{x \in \mathcal{X}}\|\hat{\bm{b}}(x) - \bm{b}(x)\|_\infty = o_p(1)$ and $\sup_{x \in \mathcal{X}}\|\hat{\bm{c}}(x) - \bm{c}(x)\|_\infty = o_p(1)$, which implies that $\sup_{x \in \mathcal{X}}\|\hat{\bm{p}}_L^\eta(x) - \bm{p}_L^\eta(x)\|_\infty = o_p(1)$, and both $\sup_{o \in \mathcal{O}}\|\hat{\bm{\varphi}}^{(b)}(x) - \bm{\varphi}^{(b)}(o)\|_\infty = o_p(1)$ and $\sup_{o \in \mathcal{O}}\|\hat{\bm{\varphi}}^{(c)}(x) - \bm{\varphi}^{(c)}(o)\|_\infty = o_p(1)$, we have that $\sup_{o \in \mathcal{O}}\|\hat{h}(o) - h(o)\|_\infty = o_p(1)$ and so term ($\ast\ast$) is $o_p(\frac{1}{\sqrt{n}})$. Combining with term ($\ast$) gives the result.

\end{proof}

\begin{corollary}
  \label{cor:entropic_est_approx_margin}
  Under the conditions of Theorem~\ref{thm:entropic_est_rate} and the margin condition in Assumption~\ref{a:margin}, if $\eta$ is allowed to grow with the sample size $n$, then
  \begin{align*}
      \hat{\theta}^\eta_L - \theta_L &= \tilde{\theta}^\eta_L - \theta^\eta_L + O_p\left(\eta^{-(1+\alpha)} + \eta  r_n + \eta^2 \|\hat{\bm{b}} - \bm{b}\|_2^2 + \eta^2 \|\hat{\bm{c}} - \bm{c}\|_2^2 \right)
  \end{align*}
  Furthermore,  if $n^{1/2}\eta^{-(1+\alpha)} \to 0$, $n^{-1/4}\eta \to 0$, $ \|\hat{\bm{b}} - \bm{b}\|_2 = o_p\left(\frac{1}{n^{1/4}\eta}\right)$, $\|\hat{\bm{c}} - \bm{c}\|_2  = o_p\left(\frac{1}{n^{1/4} \eta}\right)$, and $r_{n} = o\left(\frac{1}{n^{1/2}\eta}\right)$,
   then $\frac{\sqrt{n}\left(\hat{\theta}_L^\eta - \theta_L\right)}{\sqrt{V_L^\eta}} \Rightarrow N(0, 1)$.
\end{corollary}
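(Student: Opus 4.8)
The plan is to reduce the statement to Corollary~\ref{cor:entropic_est_approx} by replacing only its approximation-error term. I would write $\hat{\theta}^\eta_L - \theta_L = (\hat{\theta}^\eta_L - \theta^\eta_L) + (\theta^\eta_L - \theta_L)$, where the first summand is the estimation error relative to the regularized target and the second is the deterministic approximation error of entropic smoothing. The first summand is controlled exactly as in the proof of Corollary~\ref{cor:entropic_est_approx}: a Taylor expansion of the plugin solution $\hat{\bm{p}}^\eta_L(x) = \bm{p}^\eta_L(\hat{\bm{b}}(x),\hat{\bm{c}}(x))$ around $\bm{p}^\eta_L(x)$, together with the explicit Jacobians of Lemma~\ref{lem:jacobian}, yields $\hat{\theta}^\eta_L - \theta^\eta_L = \tilde{\theta}^\eta_L - \theta^\eta_L + O_p(\eta r_n + \eta^2\|\hat{\bm{b}} - \bm{b}\|_2^2 + \eta^2\|\hat{\bm{c}} - \bm{c}\|_2^2)$, the factors of $\eta$ and $\eta^2$ arising because the first and second derivatives of the entropic solution scale with $\eta$. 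Crucially, this step never invokes the pointwise lower bound on $\eta$ used in Corollary~\ref{cor:entropic_est_approx}, so it carries over verbatim and the only new work is bounding $\theta^\eta_L - \theta_L$.

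For the approximation error I would work conditionally on $X=x$ and then integrate. Since $\bm{p}^\eta_L(x)$ is feasible but only approximately minimizes the linear objective, $\langle \bm{c}(x), \bm{p}^\eta_L(x)\rangle \geq \theta_L(x)$, so $\theta^\eta_L - \theta_L = \E[\langle \bm{c}(X), \bm{p}^\eta_L(X)\rangle - \theta_L(X)] \geq 0$ and it suffices to bound this expectation. The standing boundedness of $\mathcal{P}(x)$ and of $\|\bm{b}\|_\infty$ makes $R_1(x)$ and $R_H(x)$ uniformly bounded by some $\bar R$, so the \citet{weed_explicit_2018} threshold $(R_1(x)+R_H(x))/\Delta_L(x)$ is at most $2\bar R/\Delta_L(x)$. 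I would split the covariate space into a good regime $\{\eta \geq (R_1(X)+R_H(X))/\Delta_L(X)\}$, on which \citet{weed_explicit_2018} gives exponential decay $O(e^{-c\eta})$ of the per-$x$ error, and a bad regime on which only the worst-case rate $O(1/\eta)$ is available. The bad regime, excluding the point-identified stratum $\{\Delta_L(X)=0\}$ which contributes no error, is contained in $\{0 < \Delta_L(X) \leq 2\bar R/\eta\}$, whose probability is $\lesssim \eta^{-\alpha}$ by the margin condition in Assumption~\ref{a:margin}. Multiplying the $O(1/\eta)$ per-$x$ error by this probability gives a bad-regime contribution of order $\eta^{-(1+\alpha)}$, while the good regime contributes $O(e^{-c\eta})$, which is dominated. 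Hence $\theta^\eta_L - \theta_L = O(\eta^{-(1+\alpha)})$.

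Combining the two pieces gives the first display. For the asymptotic normality I would check that each remainder is $o_p(n^{-1/2})$ under the stated rates: $n^{1/2}\eta^{-(1+\alpha)}\to 0$ handles the approximation error, $r_n = o(1/(n^{1/2}\eta))$ handles $\eta r_n$, and $\|\hat{\bm{b}} - \bm{b}\|_2 = o(1/(n^{1/4}\eta))$ with the analogous condition on $\hat{\bm{c}}$ handle the two quadratic terms. It then remains to show $\sqrt{n}(\tilde{\theta}^\eta_L - \theta^\eta_L)/\sqrt{V_L^\eta} \Rightarrow N(0,1)$. Because $\eta = \eta_n$ grows, $\tilde{\theta}^\eta_L - \theta^\eta_L$ is a centered average of a triangular array of row-wise i.i.d. summands whose variance $V_L^\eta$ scales like $\eta^2$; I would verify the Lindeberg condition after normalizing the summands by $\eta$, using $n^{-1/4}\eta \to 0$ to ensure no single term dominates.

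I expect the approximation-error integration to be the main obstacle. The delicate points are (i) invoking the correct regime-dependent form of the \citet{weed_explicit_2018} bound and confirming its worst-case $1/\eta$ rate holds uniformly once $R_1,R_H$ are bounded, (ii) correctly excluding the point-identified stratum $\{\Delta_L(X)=0\}$ so that the margin condition is applied only where it bites, and (iii) pairing the per-$x$ error rate with the margin probability to land exactly on $\eta^{-(1+\alpha)}$ rather than a weaker exponent. The triangular-array CLT with $V_L^\eta$ growing in $\eta$ is a secondary technical point, controlled by $\eta = o(n^{1/4})$.
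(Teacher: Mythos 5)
Your proposal is correct and follows essentially the same route as the paper's proof: the paper likewise reduces to bounding $\theta_L^\eta - \theta_L$ (the bias terms $\eta r_n + \eta^2\|\hat{\bm{b}}-\bm{b}\|_2^2 + \eta^2\|\hat{\bm{c}}-\bm{c}\|_2^2$ carrying over from Theorem~\ref{thm:entropic_est_rate} without any lower bound on $\eta$), splits the expectation on the event $\left\{\eta \geq \frac{R_1(X)+R_H(X)}{\Delta_L(X)}\right\}$ versus its complement, applies the exponential bound of \citet{weed_explicit_2018} on the first regime and the worst-case $R_H(X)/\eta$ bound times the margin probability $\lesssim \eta^{-\alpha}$ on the second (after discarding the $\Delta_L(X)=0$ stratum, where the entropic and unregularized values coincide), landing exactly on $\eta^{-(1+\alpha)}$. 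Your attention to the triangular-array CLT and Lindeberg condition for the growing-$\eta$ leading term is a point of rigor the paper's proof leaves implicit.
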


\begin{proof}[Proof of Corollary~\ref{cor:entropic_est_approx_margin}]
  From Theorem~\ref{thm:entropic_est_rate}, it suffices to control the difference $\theta_L^\eta - \theta_L$.
  Note that we can decompose this difference into the difference when $\eta$ is larger than $\frac{R_1(x) + R_H(x)}{\Delta_L(x)}$ and when $\eta$ is smaller than this value:
  \begin{align*}
    \theta_L^\eta - \theta_L & = \E\left[\< \bm{c}(X), \bm{p}_L^\eta(X)\> - \<\bm{c}(X), \bm{p}_L(X)\>\right]\\
    & = \E\left[\left(\< \bm{c}(X), \bm{p}_L^\eta(X)\> - \<\bm{c}(X), \bm{p}_L(X)\>\right) \bbone\left\{\eta \geq \frac{R_1(X) + R_H(X)}{\Delta_L(X)}\right\}\right] \tag{i}\\
    & + \E\left[\left(\< \bm{c}(X), \bm{p}_L^\eta(X)\> - \<\bm{c}(X), \bm{p}_L(X)\>\right) \bbone\left\{\eta < \frac{R_1(X) + R_H(X)}{\Delta_L(X)}\right\}\right] \tag{ii}
  \end{align*}

  Since the hyperparameter is large relative to the margin in the first term (i), from Theorem 5 in \citet{weed_explicit_2018}, we have that
  \[
    (i) \leq \E\left[\Delta_L(X) \exp\left(-\eta \frac{\Delta_L(x)}{R_1(X)} + \frac{R_1(X) + R_H(X)}{R_1(X)}\right)\right] = O(e^{-\eta}).
  \]
  For the second term (ii), the sub-optimality gap is small relative to the inverse of $\eta$, $\Delta_L(X) < \frac{R_1(X) + R_H(X)}{\eta}$.
  Note that when $\Delta_L(X) = 0$, then $\<\bm{c}(X), \bm{p}_L^\eta(X)\> = \<\bm{c}(X), \bm{p}_L(X)\>$, and
  so by Proposition 4 in \citet{weed_explicit_2018}, and the margin condition in Assumption~\ref{a:margin}, we have that
\begin{align*}
  (ii) & \leq \E\left[\left(\frac{R_H(X)}{\eta}\bbone\left\{0 < \Delta_L(X) < \frac{R_1(X) + R_H(X)}{\eta}\right\}\right)\right]\\
  & \leq \frac{\sup_{x \in \mathcal{X}} R_H(x)}{\eta} P\left(0 < \Delta_L(X) \leq \frac{R_1(X) + R_H(X)}{\eta}\right)\\
  & \lesssim \frac{1}{\eta^{1 + \alpha}} = O\left(\eta^{-(1 + \alpha)}\right).
\end{align*}
Putting together the pieces gives the result.
\end{proof}

\subsection{Proofs of results in main text}

\begin{proof}[Proof of Theorem~\ref{thm:primal_est_rate}]
  We decompose the estimation error $\hat{\theta}_L - \theta_L$ into
  \begin{align*}
    \hat{\theta}_L - \theta_L & = \tilde{\theta}_L - \E[\tilde{\theta}_L]\\
    & \quad + \E[\hat{\theta}_L] - \theta_L \tag{$\ast$}\\
    & \quad + (\hat{\theta}_L - \E[\hat{\theta}_L]) - (\tilde{\theta}_L - \E[\tilde{\theta}_L]) \tag{$\ast\ast$}
  \end{align*}
  Term ($\ast$) is a bias term, which we have bounded in Lemma~\ref{lem:primal_est_bias}.
  It remains to bound ($\ast\ast$).
  Define 
  \begin{align*}
    h(O) & = \left\<\bm{c}(X), A_{\widehat{B}_L(X)}^{-1} \left(\bm{b}(X) + \bm{\varphi}^{(b)}(O)\right)\right\> + \left\<\bm{\varphi}^{(c)}(O), A_{\widehat{B}_L(X)}^{-1} \bm{b}(X)\right\>\\
    \hat{h}(O) & = \left\<\hat{\bm{c}}(X), A_{\widehat{B}_L(X)}^{-1} \left(\hat{\bm{b}}(X) + \hat{\bm{\varphi}}^{(b)}(O)\right)\right\> + \left\<\hat{\bm{\varphi}}^{(c)}(O), A_{\widehat{B}_L(X)}^{-1} \hat{\bm{b}}(X)\right\>.\\
  \end{align*}
  Then by \citet{kennedy_semiparametric_2024}, Lemma 1, we have that 
  \[
    (\hat{\theta}_L - \E[\hat{\theta}_L]) - (\tilde{\theta}_L - \E[\tilde{\theta}_L]) = O_p\left(\frac{1}{\sqrt{n}}\|\hat{h} - h\|_2\right)
  \]
  Now note that we can decompose the difference into
  \begin{align*}
     \hat{h}(O) - h(O) & =   \<\hat{\bm{c}}(X), A_{\widehat{B}_L(X)}^{-1}(\hat{\bm{b}}(X) + \hat{\bm{\varphi}}^{(b)}(O) - \bm{b}(X) - \bm{\varphi}^{(b)}(O))\>\\
    & \qquad  + \<\hat{\bm{c}}(X) - \bm{c}(X), A_{\widehat{B}_L(X)}^{-1} (\bm{b}(X) +\bm{\varphi}^{(b)}(O))\> \\
    & \qquad + \<\hat{\bm{\varphi}}^{(c)}(O) - \bm{\varphi}^{(c)}(O), A_{\widehat{B}_L(X)}^{-1} \bm{b}(X)\>\\
    & \qquad + \<\hat{\bm{\varphi}}^{(c)}(O), A_{\widehat{B}_L(X)}^{-1} (\hat{\bm{b}}(X) - \bm{b}(X))\>.
  \end{align*}
  By Assumption~\ref{a:riesz}, $\|\hat{\bm{b}}(x) - \bm{b}(x)\|_\infty = o_p(1)$, $\|\hat{\bm{c}}(x) - \bm{c}(x)\|_\infty = o_p(1)$, $\|\hat{\bm{\varphi}}^{(b)} - \bm{\varphi}^{(b)}\|_\infty = o_p(1)$, and $\|\hat{\bm{\varphi}}^{(c)} - \bm{\varphi}^{(c)}\|_\infty = o_p(1)$, so $\hat{h}(o) - h(o) = o_p(1)$ for any data point $o \in \mathcal{O}$. Therefore,  $ (\hat{\theta}_L - \E[\hat{\theta}_L]) - (\tilde{\theta}_L - \E[\tilde{\theta}_L]) = o_p(n^{-1/2})$.
  Combining this with Lemma~\ref{lem:primal_est_bias} gives the result.

\end{proof}

\begin{proof}[Proof of Corollary~\ref{cor:primal_est_normal_unique}]
  
  Define $\check{\theta}_L = \frac{1}{n} \sum_{i=1}^n \left\<\bm{c}(X_i), A_{B^\ast_L(X_i)}^{-1} \left(\bm{b}(X_i) + \bm{\varphi}^{(b)}(O_i)\right)\right\> + \left\<\bm{\varphi}^{(c)}(O_i), A_{B^\ast_L(X_i)}^{-1} \bm{b}(X_i)\right\>$ where $B^\ast_L(x) \in \mathcal{B}_L^\ast(x)$ is an optimal feasible basis. Note that $\E[\check{\theta}_L] = \theta_L$.
  To show Corollary~\ref{cor:primal_est_normal_unique} it suffices to show that 
  \[
    \tilde{\theta}_L - \E[\tilde{\theta}_L] - (\check{\theta}_L - \E[\check{\theta}_L]) = o_p(n^{-1/2}).
  \]
  Defining
  \begin{align*}
    h(O) &= \left\<\bm{c}(X), A_{B^\ast_L(X)}^{-1} \left(\bm{b}(X) + \bm{\varphi}^{(b)}(O)\right)\right\> + \left\<\bm{\varphi}^{(c)}(O), A_{B^\ast_L(X)}^{-1} \bm{b}(X)\right\>\\
    \tilde{h}(O) &= \left\<\bm{c}(X), A_{\widehat{B}_L(X)}^{-1} \left(\bm{b}(X) + \bm{\varphi}^{(b)}(O)\right)\right\> + \left\<\bm{\varphi}^{(c)}(O), A_{\widehat{B}_L(X)}^{-1} \bm{b}(X)\right\>,
  \end{align*}
  again from \citet{kennedy_semiparametric_2024} Lemma 1, it suffices to show that $\tilde{h}(O) - h(O) = o_p(1)$.
  This reduces to showing that
  \begin{align*}
    & \; \E\left[\left(\tilde{h}(O) - h(O)\right)^2\right]\\
    = & \; \E\left[\left(\left\<\bm{c}(X), (A_{\widehat{B}_L(X)}^{-1} - A_{B^\ast_L(X)}^{-1}) \left(\bm{b}(X) + \bm{\varphi}^{(b)}(O)\right)\right\> + \left\<\bm{\varphi}^{(c)}(O),(A_{\widehat{B}_L(X)}^{-1} - A_{B^\ast_L(X)})\bm{b}(X)\right\> \right)^2\right]\\
     =  & \; o(1).
  \end{align*}
    
  To show this, we decompose into two cases (i) where the estimated basis is optimal, $\widehat{B}_L(x) \in \mathcal{B}_L^\ast(x)$, and (ii) where the estimated basis is not optimal, $\widehat{B}_L(x) \not \in \mathcal{B}_L^\ast(x)$.
  \begin{align*}
    &\E\left[\bbone\{\widehat{B}_L(X) \not \in \mathcal{B}^\ast_L(X)\}\left(\left\<\bm{c}(X), (A_{\widehat{B}_L(X)}^{-1} - A_{B^\ast_L(X)}^{-1}) \left(\bm{b}(X) +  \bm{\varphi}^{(b)}(O)\right)\right\> \right. \right.\\
    & \qquad \qquad \qquad \qquad \qquad \qquad + \left.\left.\left\<\bm{\varphi}^{(c)}(O),(A_{\widehat{B}_L(X)}^{-1} - A_{B^\ast_L(X)}^{-1})\bm{b}(X)\right\> \right)^2\right]\\
    + & \E\left[\bbone\{\widehat{B}_L(X) \in \mathcal{B}^\ast_L(X)\}\left(\left\<\bm{c}(X), (A_{\widehat{B}_L(X)}^{-1} - A_{B^\ast_L(X)}^{-1}) \left(\bm{b}(X) + \bm{\varphi}^{(b)}(O)\right)\right\> \right. \right.\\
    & \qquad \qquad \qquad \qquad \qquad \qquad + \left.\left.\left\<\bm{\varphi}^{(c)}(O),(A_{\widehat{B}_L(X)}^{-1} - A_{B^\ast_L(X)}^{-1})\bm{b}(X)\right\> \right)^2\right]\\
    \leq & \tilde{C} P(\widehat{B}_L(X) \not \in \mathcal{B}^\ast_L(X))\\
    + & \E\left[\bbone\{\widehat{B}_L(X) \in \mathcal{B}^\ast_L(X)\}\left(\left\<\bm{c}(X), (A_{\widehat{B}_L(X)}^{-1} - A_{B^\ast_L(X)}^{-1}) \bm{\varphi}^{(b)}(O)\right\> \right. \right.\\
    & \qquad \qquad \qquad \qquad \qquad + \left.\left.\left\<\bm{\varphi}^{(c)}(O),(A_{\widehat{B}_L(X)}^{-1} - A_{B^\ast_L(X)}^{-1})\bm{b}(X)\right\> \right)^2\right],
  \end{align*}
  for some constant $\tilde{C}$, where we have used the fact that when both $\widehat{B}_L(x)$ and $B^\ast_L(x)$ are optimal bases, $\<\bm{c}(X), (A_{\widehat{B}_L(X)}^{-1} - A_{B^\ast_L(X)}^{-1}) \bm{b}(X)\> = 0$. From Lemma~\ref{lem:mis_class}, we see that the mis-classification probability $ P(\widehat{B}_L(X) \not \in \mathcal{B}^\ast_L(X)) \leq \left(\|\hat{\bm{b}} - \bm{b}\|_\infty + \|\hat{\bm{c}} - \bm{c}\|_\infty\right)^\alpha = o_p(1) $ under the conditions of Corollary~\ref{cor:primal_est_normal}.

  Next, we can further decompose the second term where $\widehat{B}_L(x) \in \mathcal{B}_L^\ast(x)$ into cases where the primal solution is and is not unique ($|\mathcal{B}^\ast_L(x)| = 1$ or $|\mathcal{B}^\ast_L(x)| > 1$) and cases where there does or does not exist a non-degenerate primal solution ($p_{Li}(x) > 0$ for all $i \in B$ for some $B \in \mathcal{B}^\ast_L(x)$ or $p_{Li}(x) = 0$ for some $i \in B$ for all $B \in \mathcal{B}_L(x)$).

  First, if the primal solution is unique and $\widehat{B}_L(x) \in \mathcal{B}^\ast_L(x)$, then $A_{\widehat{B}_L(x)}^{-1} = A_{B^\ast_L(x)}^{-1}$ and so both
  \begin{align*}
    \left\<\bm{c}(X), (A_{\widehat{B}_L(X)}^{-1} - A_{B^\ast_L(X)}^{-1}) \bm{\varphi}^{(b)}(O)\right\>  & = 0\\
    \left\<\bm{\varphi}^{(c)}(O),(A_{\widehat{B}_L(X)}^{-1} - A_{B^\ast_L(X)}^{-1})\bm{b}(X)\right\>  & = 0
  \end{align*}
  Next, if the primal solution is not unique, but there exists a non-degenerate primal solution, then this implies that the solution to the dual program, defined by
  \begin{align*}
    \max_\lambda \quad & \<\bm{\lambda}, \bm{b}(x)\>\\
    \text{s.t.} \quad & A'\bm{\lambda} \leq \bm{c}(x),
  \end{align*}
  is unique. This is because the complementary slackness KKT condition implies that the primal and dual solutions satisfy
    $\bm{p}(A'\bm{\lambda} - \bm{c}) = 0.$
  For any basis $B$ this condition implies that  $\bm{p}_B(A_B'\bm{\lambda} - \bm{c}) = 0$. If $\bm{p}$ is a non-degenerate primal solution, then $\bm{p}_B > 0$ and so the complementary slackness condition implies that $A_B'\bm{\lambda} = \bm{c}$, and $A_B'$ is full rank, so there is a unique dual solution that satisfies the $J$ constraints.
  
  Each optimal basis $B \in \mathcal{B}_L^\ast(x)$ corresponds to a dual solution $A_B^{-1'}\bm{c}(x)$, and so if the dual solution is unique, then $A_B^{-1'}\bm{c}(x) = A_{B'}^{-1'}\bm{c}(x)$ for any two optimal bases $B,B' \in \mathcal{B}_L^\ast(x)$ and so in particular $A_{\widehat{B}_L(x)}^{-1'}\bm{c}(x) = A_{B^\ast_L(x)}^{-1}\bm{c}(x)$, so 
  \[
    \left\<\bm{c}(X), (A_{\widehat{B}_L(X)}^{-1} - A_{B^\ast_L(X)}^{-1}) \bm{\varphi}^{(b)}(O)\right\>  = 0.
  \]
  If $\varphi^{(c)}(O) = f(O) \bm{c}(X)$, for some function $f$ then this also implies that $A_{\widehat{B}_L(x)}^{-1'}\varphi^{(c)}(O) = A_{B^\ast_L(x)}^{-1}\varphi^{(c)}(O)$ and so
  \[
    \left\<\bm{\varphi}^{(c)}(O),(A_{\widehat{B}_L(X)}^{-1} - A_{B^\ast_L(X)}^{-1})\bm{b}(X)\right\>  = 0.
  \]
  If $\varphi^{(c)}(O) \neq f(O) \bm{c}(X)$ for any function $f$, then the condition that $\Var(\varphi^{(c)}(O) \mid |\mathcal{B}_L(X) > 1) = 0$ ensures that 
  \[
    \E\left[\bbone\{\widehat{B}_L(X)\in \mathcal{B}^\ast_L(X),|\mathcal{B}_L(X)| > 1\}\left(\left\<\bm{\varphi}^{(c)}(O),(A_{\widehat{B}_L(X)}^{-1} - A_{B^\ast_L(X)}^{-1})\bm{b}(X)\right\> \right)^2\right] = 0
  \]
  Finally, let  $ND(x) = 1$ denote that there exists a non-degenerate primal solution. Then the condition that $\Var(\varphi^{(b)}(O) \mid ND(X) = 0) = 0$ ensures that when there does not exist a non-degenerate primal solution, then
  \[
  \E\left[\bbone\{\widehat{B}_L(X)\in \mathcal{B}^\ast_L(X), ND(X) = 0\}\left\<\bm{c}(X), (A_{\widehat{B}_L(X)}^{-1} - A_{B^\ast_L(X)}^{-1}) \bm{\varphi}^{(b)}(O)\right\>^2\right] = 0.
  \]
  So, combing the pieces gives that
  \[
   \E\left[\left(\tilde{h}(O) - h(O)\right)^2\right] \lesssim \left(\|\hat{\bm{b}} - \bm{b}\|_\infty + \|\hat{\bm{c}} - \bm{c}\|_\infty\right)^\alpha = o_p(1), 
  \]
  as desired.

\end{proof}

\begin{proof}[Proof of Theorem~\ref{thm:entropic_est_approx}]

  We compare $\hat{\theta}^\eta_L$ to $\theta_L$. Following the same logic as in the proof of Theorem~\ref{thm:entropic_est_rate}, we can decompose the difference as
  \begin{align*}
    \hat{\theta}^\eta_L - \theta_L & = \tilde{\theta}^\eta_L - \E[\tilde{\theta}^\eta_L]\\
    & \quad  + \E[\hat{\theta}^\eta_L] - \theta^\eta_L \tag{$\ast$}\\
    & \quad + (\hat{\theta}^\eta_L - \E[\hat{\theta}^\eta_L]) - (\tilde{\theta}^\eta_L - \E[\tilde{\theta}^\eta_L]) \tag{$\ast\ast$}\\
    & \quad + \E[\tilde{\theta}^\eta_L] - \theta_L \tag{$\ast\ast\ast$}
  \end{align*}
  We have already controlled the terms ($\ast)$ and ($\ast\ast$) above, and so we need only control the term ($\ast\ast\ast$). Denote $\bm{p}_L(x) \in \argmin_{\bm{p} \in \mathcal{P}(x)} \<\bm{c}(x), \bm{p}\>$, as a solution to the conditonal linear program without the entropic regularization. Then we can write term $(\ast\ast\ast)$ as
  \begin{align*}
    (\ast\ast\ast)& = \E\left[\frac{1}{n}\sum_{i=1}^n \<\bm{c}(X_i) + \bm{\varphi}^{(c)}(O_i), \bm{p}_L^\eta(X_i)\> + \<\bm{c}(X_i), \nabla_{\bm{b}} \bm{p}^\eta_L(X_i) \bm{\varphi}^{(b)}(O_i) + \nabla_{\bm{c}} \bm{p}^\eta_L(X_i) \bm{\varphi}^{(c)}(O_i)\>\right] - \theta_L\\
    & = \E\left[\frac{1}{n}\sum_{i=1}^n \<\bm{c}(X_i), \bm{p}_L^\eta(X_i)\>\right] - \theta_L\\
    & =  \E\left[\frac{1}{n}\sum_{i=1}^n \<\bm{c}(X_i), \bm{p}_L^\eta(X_i)\> - \<\bm{c}(X_i), \bm{p}_L(X_i)\>\right]
  \end{align*}

  From Theorem 5 in \citet{weed_explicit_2018}, we have that
   if $\eta \geq \frac{R_1(X_i) + R_H(X_i)}{\Delta_L(X_i)}$ for all $i=1,\ldots,n$,  such that $\Delta_L(X_i) > 0$ then
  \begin{align*}
    0 \leq \<\bm{c}(X_i), \bm{p}_L^\eta(X_i)\>  - \<\bm{c}(X_i), \bm{p}_L(X_i)\> & \leq \Delta_L(X_i) \exp\left(-\eta \frac{\Delta_L(X_i)}{R_1(X_i)} + \frac{R_1(X_i) + R_H(X_i)}{R_1(X_i)}\right)\\
  \end{align*}

  This implies that
  \begin{align*}
    0  \leq (\ast\ast\ast) & \leq \E\left[\frac{1}{n}\sum_{i=1}^n \Delta_L(X_i) \exp\left(-\eta \frac{\Delta_L(X_i)}{R_1(X_i)} + \frac{R_1(X_i) + R_H(X_i)}{R_1(X_i)}\right)\right]\\
     & = \E\left[\Delta_L(X) \exp\left(-\eta \frac{\Delta_L(X)}{R_1(X)} + \frac{R_1(X) + R_H(X)}{R_1(X)}\right)\right]\\
     & = O(e^{-\eta})
  \end{align*}
  Combining this with Theorem~\ref{thm:entropic_est_rate} gives the result.

\end{proof}

\begin{proof}[Proof of Theorem~\ref{thm:primal_policy}]
  First, note that we can write the excess regret as
  \begin{align*}
    \theta_L(\pi^\ast) - \theta_L(\hat{\pi}) &= \theta_L(\pi^\ast) - \hat{\theta}_L(\pi^\ast) + \hat{\theta}_L(\pi^\ast) - \hat{\theta}_L(\hat{\pi}) + \hat{\theta}_L(\hat{\pi}) - \theta_L(\pi^\ast)\\
    & \leq \theta_L(\pi^\ast) - \hat{\theta}_L(\pi^\ast) + \hat{\theta}_L(\hat{\pi}) - \theta_L(\pi^\ast)\\
    & \leq 2 \sup_{\pi \in \Pi} \left|\hat{\theta}_L(\pi) - \theta_L(\pi)\right|\\
    & \leq 2\sup_{\pi \in \Pi} \left|\hat{\theta}_L(\pi) - \E[\theta_L(\pi)]\right| + 2\sup_{\pi \in \Pi} \left|\E[\hat{\theta}_L(\pi)] - \E[\theta_L(\pi)]\right|,
  \end{align*}
  where the first inequality follows from the fact that $\hat{\pi}$ maximizes $\hat{\theta}_L(\pi)$ so $\hat{\theta}_L(\hat{\pi}) \geq \hat{\theta}_L(\pi^\ast)$.
  Note that $\hat{\theta}_L(\pi) - \E[\theta_L(\pi)]$ is a mean-zero empirical process indexed by the policy $\pi$.
  Defining the function $f_\pi(o) =  \left\< \bm{c}(x, \pi(x)), A_{\widehat{B}_L(x, \pi(x))}^{-1} \left(\hat{\bm{b}}(x) + \hat{\bm{\varphi}}^{(b)}(o)\right)\right\>$, and the function class $\mathcal{F} = \{f_\pi: \pi \in \Pi\}$, we can write the first term in the decomposition above as
  \[
    \sup_{\pi \in \Pi} \left|\hat{\theta}_L(\pi) - \E[\theta_L(\pi)]\right| = \sup_{f \in \mathcal{F}} \left|\frac{1}{n}\sum_{i=1}^n f(O_i) - \E[f(O)]\right|.
  \]
  By the regularity conditions in Assumption~\ref{a:regularity}, the function class $\mathcal{F}$ is uniformly bounded, i.e. $\|f_\pi\|_\infty \leq\tilde{C}$ for some $\tilde{C}$, for all $\pi \in \Pi$. Therefore, by Theorem 4.2 in \citet{wainwright_2019}, we have that
  \[
    \sup_{f \in \mathcal{F}} \left|\frac{1}{n}\sum_{i=1}^n f(O_i) - \E[f(O)]\right| \leq 2 \mathcal{R}_n(\mathcal{F}) + \tilde{C} \sqrt{\frac{2 \log(1/\delta)}{n}},
  \]
  with probability at least $1-\delta$. To link the Rademacher complexity of the function class $\mathcal{F}$ to the Rademacher complexity of the policy class $\Pi$, we note that $f$ is a Lipschitz function of $\pi$. To see this, we again appeal to the dual solution. Note that by strong duality, the basis $\widehat{B}_L(x, \pi(x))$ corresponding to the optimal BFS also yields an optimal basis for the solution to the dual problem
  \[
    A_{\widehat{B}_L(x, \pi(x))}^{-1\prime}\bm{c}(x, \pi(x)) \in \argmax_{\bm{\lambda} \in \R^J} \<\bm{\lambda}, \hat{\bm{b}}(x)\> \quad \text{s.t. } A'\bm{\lambda} \leq \bm{c}(x, \pi(x)).
  \]
  \citet{mangasarian_lipschitz_1987} show the the solution to this linear program is Lipschitz continuous in $\bm{c}(x, \pi(x))$. So, because $\bm{c}(x, \pi(x))$ is Lipschitz continuous in $\pi(x)$, we have that the dual solution $A_{\widehat{B}_L(x, \pi(x))}^{-1\prime}\bm{c}(x, \pi(x))$ is Lipschitz continuous in $\pi(x)$. Furthermore, $f_\pi(o)$ is a linear function of $A_{\widehat{B}_L(x, \pi(x))}^{-1\prime}\bm{c}(x, \pi(x))$, so it is also Lipschitz continuous in $\pi(x)$. By the Talagrand contraction principle \citep[Theorem 4.12][]{ledoux_probability_1991}, we have that the Rademacher complexity of the function class $\mathcal{F}$ is bounded by the Rademacher complexity of the policy class $\Pi$, $\mathcal{R}_n(\mathcal{F}) \lesssim \mathcal{R}_n(\Pi)$.

  It remains to bound the second term in the decomposition of the excess regret, 
  \[
    \sup_{\pi \in \Pi} \left|\E[\hat{\theta}_L(\pi)] - \E[\theta_L(\pi)]\right|.\]
  Note that under the strong margin condition in Assumption~\ref{a:margin_strong}, we can directly apply Lemma~\ref{lem:primal_est_bias} for each policy $\pi \in \Pi$ to obtain that 
  \[
  \sup_{\pi \in \Pi} \left|\E[\hat{\theta}_L(\pi)] - \E[\theta_L(\pi)]\right| = O_p\left(\|\hat{\bm{b}} - \bm{b}\|_\infty^{1 + \alpha} + r_n\right).
  \]
  Putting together the pieces gives the result.

\end{proof}

\begin{proof}[Proof of Theorem~\ref{thm:entropic_policy}]
  We begin by considering the entropic lower bound. Note that
  \begin{align*}
    \theta_L^\eta(\pi^\ast_\eta) - \theta_L^\eta(\hat{\pi}^\eta) & = \theta_L^\eta(\pi^\ast_\eta) - \hat{\theta}_L^\eta(\pi^\ast_\eta) + \hat{\theta}_L^\eta(\pi^\ast_\eta) - \hat{\theta}_L^\eta(\hat{\pi}^\eta) + \hat{\theta}_L^\eta(\hat{\pi}^\eta) - \theta_L^\eta(\hat{\pi}^\eta)\\
    & \leq \theta_L^\eta(\pi^\ast_\eta) - \hat{\theta}_L^\eta(\pi^\ast_\eta) + \hat{\theta}_L^\eta(\hat{\pi}^\eta) - \theta_L^\eta(\hat{\pi}^\eta)\\
    & \leq 2 \sup_{\pi \in \Pi} \left|\hat{\theta}_L^\eta(\pi) - \theta_L^\eta(\pi)\right|\\
    & \leq 2\sup_{\pi \in \Pi} \left|\hat{\theta}_L^\eta(\pi) - \E\left[\hat{\theta}_L^\eta(\pi)\right]\right| + 2\sup_{\pi \in \Pi} \left|\E\left[\hat{\theta}_L^\eta(\pi)\right] - \theta_L^\eta(\pi)\right|,
  \end{align*}
  where the first inequality follows from the fact that $\hat{\pi}^\eta$ maximizes $\hat{\theta}_L^\eta(\pi)$.

  For the first term, define the function $f_\pi(o) = \<\bm{c}(x, \pi(x)), \hat{\bm{p}}_L^\eta(x, \pi(x)) +  \nabla_{\bm{b}} \hat{\bm{p}}_L^\eta(x, \pi(x)) \hat{\bm{\varphi}}^{(b)}(o)\>$, and the function class $\mathcal{F} = \{f_\pi \mid \pi \in \Pi\}$. Assumption \ref{a:regularity} ensures that $\mathcal{F}$ is uniformly bounded, and so as with the proof of Theorem~\ref{thm:primal_policy} by Theorem 4.2 in \citet{wainwright_2019}, we have that
  \[
    \sup_{f \in \mathcal{F}} \left|\frac{1}{n}\sum_{i=1}^n f(O_i) - \E[f(O)]\right| \leq 2 \mathcal{R}_n(\mathcal{F}) + \tilde{C} \sqrt{\frac{2 \log(1/\delta)}{n}},
  \]
  with probability at least $1-\delta$. In the proof of Theorem~\ref{thm:entropic_est_rate}, we have shown that the operator norm of the Jacobian of the entropic solution with respect to the objective is bounded by a constant times $\eta$, i.e. $\|\nabla_{\bm{c}} \bm{p}_L^\eta(\bm{b}, \bm{c})\|_2 \lesssim \eta$, and have similarly shown this for the operator norm of the Hessian, $\|\nabla^2_{\bm{c}, \bm{b}} p_{Lk}^\eta(\bm{b}, \bm{c})\|_2 \lesssim \eta$. By Assumption~\ref{a:regularity}, the objective function $\bm{c}(x, \pi(x))$ is Lipschitz continuous in $\pi(x)$, and so $\bm{p}(\bm{b}, \bm{c}(x, \pi(x)))$ and $\nabla_{\bm{b}} \bm{p}(\bm{b}, \bm{c}(x, \pi(x)))$ are Lipschitz continuous in $\pi(x)$ for any $x$ as well, with a Lipschitz constant scaling with $\eta$. So by Talagrand's contraction principle we have that the Rademacher complexity of $\mathcal{F}$ is bounded by $\mathcal{R}_n(\mathcal{F}) \lesssim \eta \mathcal{R}_n(\Pi)$.
  To complete the first claim we must bound the second term in the decomposition above, $\sup_{\pi \in \Pi} \left|\E\left[\hat{\theta}_L^\eta(\pi)\right] - \theta_L^\eta(\pi)\right|$. We can do so by directly applying Lemma~\ref{lem:entropic_est_bias} and the bounds on $S_{\bm{b}}(\eta)$ and $S_{\bm{b},\bm{c}}(\eta)$ from the proof of Theorem~\ref{thm:entropic_est_rate} for each $\pi \in \Pi$ to obtain that
  \[
    \sup_{\pi \in \Pi} \left|\E\left[\hat{\theta}_L^\eta(\pi)\right] - \theta_L^\eta(\pi)\right| = O_p\left(r_n + \eta^2 \|\hat{\bm{b}} - \bm{b}\|_2^2\right),
  \]
  noting that the objective vector $\bm{c}(x, \pi(x))$ is known in this case.

  For the second claim, we can similarly write the excess regret as
    \begin{align*}
    \theta_L(\pi^\ast) - \theta_L(\hat{\pi}^\eta) & = \theta_L(\pi^\ast_\eta) - \hat{\theta}_L^\eta(\pi^\ast) + \hat{\theta}_L^\eta(\pi^\ast) - \hat{\theta}_L^\eta(\hat{\pi}^\eta) + \hat{\theta}_L^\eta(\hat{\pi}^\eta) - \theta_L(\hat{\pi}^\eta)\\
    & \leq \theta_L(\pi^\ast) - \hat{\theta}_L^\eta(\pi^\ast) + \hat{\theta}_L^\eta(\hat{\pi}^\eta) - \theta_L^\eta(\hat{\pi}^\eta)\\
    & \leq 2 \sup_{\pi \in \Pi} \left|\hat{\theta}_L^\eta(\pi) - \theta_L(\pi)\right|\\
    & \leq 2\sup_{\pi \in \Pi} \left|\hat{\theta}_L^\eta(\pi) - \E\left[\hat{\theta}_L^\eta(\pi)\right]\right| + 2\sup_{\pi \in \Pi} \left|\E\left[\hat{\theta}_L^\eta(\pi)\right] - \theta_L(\pi)\right|.
  \end{align*}
  We have just controlled the first term above. To control the second term we can again apply Lemma~\ref{lem:entropic_est_bias} and the argument in the proof of Theorem~\ref{thm:entropic_est_approx} for each $\pi \in \Pi$  to obtain that

  \[
    \sup_{\pi \in \Pi} \left|\E\left[\hat{\theta}_L^\eta(\pi)\right] - \theta_L(\pi)\right| = O_p\left(e^{-\eta} +  r_n + \eta^2 \|\hat{\bm{b}} - \bm{b}\|_2^2\right).
  \]
  Combining these bounds gives the result.
  
\end{proof}

%%%
%%% BIBLIOGRAPHY
%%%
\clearpage
\singlespacing
\bibliographystyle{apalike}
\bibliography{citations}

\end{document}